\newcommand{\algmargin}{\the\ALG@thistlm}
\newcommand{\norm}[1]{\left\lVert#1\right\rVert}
\newtheorem{theorem}{Theorem}
\newtheorem{lemma}{Lemma}
\newtheorem{corollary}{Corollary}
\newtheorem{definition}{Definition}
\newtheorem{remark}{Remark}
\renewcommand\@biblabel[1]{#1.}
\definecolor{comment}{rgb}{0, 0, 0}
\newcommand{\tr}[1]{\operatorname{Tr}\left(#1\right)}
\newcommand{\Cbb}{\mathbb{C}}
\newcommand{\Ord}[1]{\mathcal{O}\left( #1 \right)}
\newcommand{\abs}[1]{\left| #1 \right|}
\newcommand{\PKU}{Center on Frontiers of Computing Studies, School of Computer Science, Peking University, Beijing 100871, China}
\newcommand{\bnu}{School of Artificial Intelligence,
 Beijing Normal University, Beijing,
 100875, China}
\begin{document}

\title{Measuring Less to Learn More: Quadratic Speedup in learning Nonlinear Properties of Quantum Density Matrices}

\author{Yukun Zhang}
\email{yukunzhang@stu.pku.edu.cn}
\affiliation{\PKU}

\author{Yusen Wu}
\email{yusen.wu@bnu.edu.cn}
\affiliation{\bnu}

\author{You Zhou}
\email{you\_zhou@fudan.edu.cn}
\affiliation{Key Laboratory for Information Science of Electromagnetic Waves (Ministry of Education), Fudan University, Shanghai 200433, China}

\author{Xiao Yuan}
\email{xiaoyuan@pku.edu.cn}
\affiliation{\PKU}

\date{\today}

\begin{abstract}
A fundamental task in quantum information science is to measure nonlinear functionals of quantum states, such as $\tr{\rho^k O}$. Intuitively, one expects that computing a $k$-th order quantity generally requires $\Ord{k}$ copies of the state $\rho$, and we rigorously establish this lower bound under sample access to $\rho$. Surprisingly, this limitation can be overcome when one has purified access via a unitary that prepares a purification of $\rho$, a scenario naturally arising in quantum simulation and computation. In this setting, we find a different lower bound of $\Theta(\sqrt{k})$, and present a quantum algorithm that achieves this bound, demonstrating a quadratic advantage over sample-based methods. The key technical innovation lies in a designed quantum algorithm and optimal polynomial approximation theory---specifically, Chebyshev polynomial approximations tailored to the boundary behavior of power functions. 
Our results unveil a fundamental distinction between sample and purified access to quantum states, with broad implications for estimating quantum entropies and quantum Fisher information, realizing quantum virtual distillation and cooling, and evaluating other multiple nonlinear quantum observables with classical shadows.

\end{abstract}

\maketitle

Estimating nonlinear properties $\tr{\rho^k O}$ of quantum state $\rho$ and observable $O$ plays a pivotal role in quantum information science, with broad applications including quantum entropy estimation~\cite{ohya2004quantum}, quantum error mitigation via virtual distillation~\cite{Koczor2021exponential,Huggins2021virtual}, and probing thermal properties through quantum virtual cooling~\cite{cotler2019quantum}.
The standard approach to estimating such quantities---based on the generalized swap test or related protocols---requires $\Ord{k}$ copies of $\rho$~\cite{ekert2002direct,brun2004measuring}.
While recent techniques like shadow tomography~\cite{aaronson2018shadow} and classical shadows~\cite{huang2020predicting,zhou2024hybrid,grier2024principal,liu2024auxiliary} enable the efficient estimation of many observables simultaneously, they still require queries that scale linearly with $k$ and any attempts to reduce this scaling require exponential resources~\cite{chen2021hierarchy,chen2022exponential,chen2024optimal}. 
This persistent linear barrier raises a fundamental question: \emph{What are the ultimate limits and optimal strategies for estimating nonlinear properties of quantum states?}

In this work, we provide a systematic answer to this question by analyzing two standard access models for quantum states. In the sample access model, where identical copies of the state $\rho$ are available, we establish a lower bound of $\Omega(k)$ copies to estimate $\mathrm{Tr}(\rho^k O)$, confirming that conventional linear-scaling methods are optimal in this setting. Our proof relies on a reduction of the estimation task to a quantum state discrimination problem~\cite{helstrom1969quantum}. In sharp contrast, under purified quantum query access~\cite{gilyen2019quantum}, where a unitary prepares a purification of $\rho$ is given, we prove a different lower bound of $\Omega(\sqrt{k})$, extending fundamental approximation theories~\cite{sachdeva2014faster,trefethen2019approximation,montanaro2024quantum}. 
These distinct bounds indicate a fundamental separation between sample access and purified access to quantum states.

Next, we construct a tailored quantum algorithm that saturates the new bound. The algorithm achieves an additive error $\epsilon$ with query complexity $\Ord{\sqrt{k\log(\|O\|/\epsilon)}\|O\|/\epsilon}$. This separation arises from an unexpected connection between quantum algorithm design and classical approximation theory~\cite{sachdeva2014faster,trefethen2019approximation}. 
Moreover, we show that the decision version of nonlinear property estimation is BQP-complete, indicating the possibility of universal quantum speedups. These results have broad implications, including applications to entropy and quantum Fisher information estimation, virtual distillation for error mitigation, quantum virtual cooling, and accelerated classical shadow methods. 
Overall, our result reveals a fundamental---and perhaps surprising---separation between purified query access and sample access to quantum states, underscoring how different access to $\rho$ can dramatically alter the complexity of quantum estimation tasks.

\vspace{0.2cm}
\noindent\textbf{Main results---}We begin by formally defining the nonlinear property estimation problem:
\begin{definition}[Nonlinear Property Estimation]
\label{def:problem}
Given an $n$-qubit density matrix $\rho \in \Cbb^{2^n \times 2^n}$, an observable $O$, a positive integer $k \geq 2$, and parameters $\epsilon \in (0,1]$, the goal is to output an estimation $\widetilde{E}$ such that 
$
\abs{\widetilde{E} - \tr{\rho^k O}} \leq \epsilon$
with constant success probability.
\end{definition}

For sample access to $\rho$, a conventional method is to utilize the generalized swap test~\cite{ekert2002direct,brun2004measuring} inspired by the identity ${\rm Tr}\left(P_k\rho^{\otimes k}(O\otimes I_{k-1})\right)={\rm Tr}(\rho^k O)$, where $P_k$ represents a cyclic permutation operator enabling $P_k|a_1,a_2\cdots,a_k\rangle=|a_k,a_1,a_2,\cdots,a_{k-1}\rangle$. It is easy to check that it requires $\mathcal{O}(k\norm{O}^2/\epsilon^2)$ copies of $\rho$ with circuit depth $\Theta(k)$ and operator norm $\|O\| = \max_{\ket{v}}\braket{v|O|v}/\braket{v|v}$. For general sample access to $\rho$, we prove that the linear scaling in $k$ is necessary:


\begin{theorem}[Sample access lower bound]
\label{thm:lower}
There exists a density matrix $\rho$ and an observable $O$ such that any algorithm estimating $\tr{\rho^k O}$ to additive error $\epsilon$ with constant probability requires
$\Omega\left({k\norm{O}^2}/{\epsilon^2}\right)$
copies of $\rho$.
\end{theorem}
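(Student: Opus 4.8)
The plan is to prove the bound by an information-theoretic reduction to state discrimination: exhibit two density matrices $\rho_0,\rho_1$, sharing the same observable $O$, whose values of $\tr{\rho^k O}$ differ by at least $2\epsilon$, yet which cannot be told apart from fewer than $\Omega(k\norm{O}^2/\epsilon^2)$ copies. Since writing $O = \norm{O}\,\ket{0}\bra{0}$ rescales the target value by $\norm{O}$ and the tolerance by $1/\norm{O}$, it suffices to treat $\norm{O}=1$ and prove an $\Omega(k/\epsilon^2)$ bound; the $\norm{O}^2$ factor then reappears upon the substitution $\epsilon \to \epsilon/\norm{O}$.

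First I would fix the hard instance to be a single-qubit (hence diagonal, classical) family $\rho_a = (1-a)\ket{0}\bra{0} + a\,\ket{1}\bra{1}$ with $O = \ket{0}\bra{0}$, so that $\tr{\rho_a^k O} = (1-a)^k$. The key design choice is to place the small eigenvalue at $a = \Theta(1/k)$: this keeps the target value $(1-a)^k = \Theta(1)$, avoiding the exponential suppression that would occur for a bulk eigenvalue bounded away from $1$, while making the value maximally sensitive, since $\tfrac{d}{da}(1-a)^k = -k(1-a)^{k-1} = \Theta(k)$. I would then take $\rho_0 = \rho_a$ and $\rho_1 = \rho_{a+\delta}$ and choose the gap parameter $\delta = \Theta(\epsilon/k)$ so that $\abs{\tr{\rho_0^k O} - \tr{\rho_1^k O}} \approx k(1-a)^{k-1}\delta \geq 2\epsilon$, checking that $\delta \lesssim a$ keeps the perturbation legitimate for $\epsilon$ below a constant.

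Next I would lower-bound the copy complexity of distinguishing $\rho_0$ from $\rho_1$. Because the two states commute, their $N$-fold tensor powers are simultaneously diagonal, so quantum measurements confer no advantage and the fidelity multiplies: $F(\rho_0^{\otimes N},\rho_1^{\otimes N}) = F(\rho_0,\rho_1)^N$ with $F(\rho_0,\rho_1) = \sqrt{(1-a)(1-a-\delta)} + \sqrt{a(a+\delta)} = 1 - \Theta(\delta^2/a)$. Substituting $a=\Theta(1/k)$ and $\delta=\Theta(\epsilon/k)$ gives a single-copy infidelity $1-F = \Theta(\delta^2/a) = \Theta(\epsilon^2/k)$, whence by the standard relation $\norm{\rho_0^{\otimes N}-\rho_1^{\otimes N}}_1 \leq 2\sqrt{1-F^{2N}}$ the trace distance stays below any fixed constant unless $N = \Omega(k/\epsilon^2)$. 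An algorithm achieving additive error $\epsilon$ with constant probability would, via the Helstrom bound, distinguish $\rho_0$ from $\rho_1$ with constant advantage (output ``$\rho_0$'' iff $\widetilde{E}$ lies within $\epsilon$ of $(1-a)^k$), so it must consume $\Omega(k/\epsilon^2)$ copies; reinstating $O=\norm{O}\ket{0}\bra{0}$ yields $\Omega(k\norm{O}^2/\epsilon^2)$.

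The step I expect to be the \emph{crux} is the calibration of the eigenvalue scale $a=\Theta(1/k)$: it is exactly this choice that converts a perturbation of size $\delta=\Theta(\epsilon/k)$ into a single-copy infidelity of $\Theta(\epsilon^2/k)$ rather than $\Theta(\epsilon^2/k^2)$, and hence into the claimed $\Theta(k/\epsilon^2)$ sample cost. Choosing $a$ too close to $1$ (a nearly pure state) would make the discrimination exponentially easy and degrade the bound to $\Theta(k/\epsilon)$, whereas choosing $a$ in the bulk would force $\tr{\rho^k O}$ to be exponentially small; threading between these regimes, and verifying the two-point fidelity estimate uniformly in $k$, is where the care is needed.
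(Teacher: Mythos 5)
Your proposal is correct, and it takes a genuinely different---and in one respect stronger---route than the paper. The paper proves the two factors of the bound \emph{separately}: first it sets $\rho_0=\ket{0}\bra{0}$ (pure) against $\rho_1=(1-\tfrac{c}{k})\ket{0}\bra{0}+\tfrac{c}{k}\ket{1}\bra{1}$, where the gap $1-(1-c/k)^k$ is a constant, and uses Helstrom plus Fuchs--van de Graaf to get $\Omega(k)$; then, in a second and unrelated construction, it builds two states biased by $\delta=\epsilon/(2\norm{O})$ on the extreme eigenspaces of $O$ and invokes Le Cam's two-point method to get $\Omega(\norm{O}^2/\epsilon^2)$. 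Strictly speaking, combining two lower bounds from two different instances yields only $\Omega\left(\max\{k,\norm{O}^2/\epsilon^2\}\right)$, not the product. Your single two-point family $\rho_a$ vs.\ $\rho_{a+\delta}$ with $a=\Theta(1/k)$, $\delta=\Theta(\epsilon/(k\norm{O}))$ merges the two mechanisms: the eigenvalue scale $a=\Theta(1/k)$ keeps the signal $(1-a)^k$ of constant size and the sensitivity $\Theta(k)$ (this is where the paper's $\Omega(k)$ comes from), while the per-copy infidelity $\Theta(\delta^2/a)$---the chi-square-type scaling that requires \emph{both} states to be mixed, and which fails for the paper's pure-state choice where $1-F\sim\delta$ is only linear---supplies the $1/\epsilon^2$ statistical cost. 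The result is the product bound $\Omega(k\norm{O}^2/\epsilon^2)$ in one argument, which matches the theorem as stated more faithfully than the paper's own two-piece proof. Your calculations check out: $1-F=\Theta(\delta^2/a)=\Theta(\epsilon^2/(k\norm{O}^2))$ per copy, multiplicativity of fidelity under tensor powers, the Fuchs--van de Graaf upper bound on trace distance, and the threshold decision rule together force $N=\Omega(k\norm{O}^2/\epsilon^2)$; the only caveats, which you flag, are the harmless requirements $\delta\lesssim a$ (i.e.\ $\epsilon/\norm{O}$ below a constant) and a constant-factor bookkeeping in the mean-value estimate of the gap.
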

\noindent The lower bound verifies the optimality of the generalized swap test method. Here, we briefly review the idea of the proof and refer the reader to the End Matter for the details. We separately prove a lower bound on $k$, and then on $\epsilon$ and $\norm{O}$.

For proving the lower bound on the sample complexity $k$, we leverage fundamental links between sample complexity and quantum state discrimination~\cite{helstrom1969quantum}. That is, our construction encodes the quantum state discrimination problem into the nonlinear property estimation problem. The problem is set by properly chosen $\rho_0$, $\rho_1$ and $O$ such that estimating $\tr{\rho_i^k O},~i\in\{0,1\}$ to a prescribed accuracy discriminate the two cases. We assume the existence of an algorithm estimating the nonlinear property to the prescribed accuracy with a constant probability (say, at least ${2}/{3}$) using $m$ copies of the input state. Then, the Helstrom bound~\cite{helstrom1969quantum} dictates the best successful probability in distinguishing the two states, which relates to the trace distance between the two states. This then gives us the lower bound on the copies of $\Omega(k)$ to guarantee a constant success probability.

For the lower bounds on $\epsilon$ and $\norm{O}$, we lift the probability distributions distinguish problem into the quantum setting, the nonlinear property estimation problem. Again, by selecting $\rho_0$, $\rho_1$, and $O$ properly, estimating the nonlinear property then encodes the problem of discriminating between two probability distributions. Then, Le Cam’s two-point bound~\cite{lecam1973convergence,tsybakov2009introduction} suggests the hardness of distinguishing between the two distributions, resulting in the lower bound on the operator norm and accuracy.\\

On the other hand, the quantum purified query access to $\rho=\sum_{i=1}^N p_i\ket{\psi_i}\bra{\psi_i}$ assumes the existence of a unitary $U_\rho$ satisfying 
\begin{equation}\label{eq:purified access}
    U_\rho|0\rangle_E|0\rangle_I=\left|\rho\right\rangle_{E I}=\sum_{i=1}^N \sqrt{p_i}\left|\phi_i\right\rangle_E\left|\psi_i\right\rangle_I,
\end{equation}
where $\left\langle\phi_i | \phi_j\right\rangle=\left\langle\psi_i | \psi_j\right\rangle=\delta_{i j}$. It is a natural model when we consider subsystems of a quantum computer, as in Refs.~\cite{doi:10.1126/sciadv.aaz3666,doi:10.1126/science.aau4963,PhysRevLett.125.200501}.
Surprisingly, we show that the lower bound instead scales as $\Omega(\sqrt{k})$:

\begin{theorem}[Quantum purified query access lower bound]
\label{thm:lower2}
There exists a density matrix $\rho$ and an observable $O$ such that any algorithm estimating $\tr{\rho^k O}$ to additive error $\epsilon$ with constant probability requires
$\Omega({\sqrt{k\log(\norm{O}/\epsilon)}\norm{O}}/{\epsilon})$
queries to $U_\rho$.
\end{theorem}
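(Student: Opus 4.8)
The plan is to reduce the estimation task to a single-eigenvalue problem and then exploit the fact that, unlike sample access, purified access exposes the eigenvalues of $\rho$ through low-degree polynomials, so the query cost is governed by how well such polynomials can approximate $\lambda^k$. First I would fix a one-parameter hard family: take $\rho_\lambda = \lambda\ket{0}\bra{0} + (1-\lambda)\ket{1}\bra{1}$ (embedded in the $n$-qubit space) and $O = \norm{O}\ket{0}\bra{0}$, so that $\tr{\rho_\lambda^k O} = \norm{O}\lambda^k$ for $\lambda\in[0,1]$. A purifying unitary is $U_{\rho_\lambda}\ket{0}_E\ket{0}_I = \sqrt{\lambda}\ket{0}_E\ket{0}_I + \sqrt{1-\lambda}\ket{1}_E\ket{1}_I$, and any estimator for $\tr{\rho_\lambda^k O}$ to additive error $\epsilon$ must resolve $\lambda^k$ to additive error $\delta := \epsilon/\norm{O}$.

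The second step is a polynomial-method bound adapted to purified access. Using the standard reduction from purified access to a block-encoding of the density matrix, one turns $U_{\rho_\lambda}$ and $U_{\rho_\lambda}^\dagger$ into a block-encoding of $\rho_\lambda$ whose nontrivial eigenvalue is exactly $\lambda$ (not $\sqrt{\lambda}$). I would then invoke the block-encoding polynomial method: the acceptance probability, and more generally the expected output $\mu(\lambda)$, of any $T$-query algorithm is a real polynomial in $\lambda$ of degree $\Ord{T}$, with $\mu(\lambda)$ confined to the output range $[0,\norm{O}]$. This low-degree constraint is exactly the structural gain over sample access, where no such bound holds.

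The third step is the approximation-theoretic heart. Success forces $\mu(\lambda)/\norm{O}$ to be a degree-$\Ord{T}$ polynomial, bounded in $[0,1]$, that tracks $\lambda^k$ to accuracy $\delta$ near $\lambda=1$, where $\lambda^k$ falls off on the scale $1-\lambda\sim 1/k$. I would control the degree needed for such tracking by a Chebyshev/Bernstein argument tailored to this boundary layer: parametrizing $\lambda = 1-u/k$ and applying Bernstein's inequality to the bounded polynomial gives $\abs{\mu'(\lambda)} \lesssim T\sqrt{k/u}$, while resolving $\norm{O}\lambda^k$ to precision $\epsilon$ demands a derivative of order $\norm{O}\,k\,e^{-u}/\epsilon$; balancing these across the boundary layer $u\lesssim\log(\norm{O}/\epsilon)$, combined with the $x^k$-approximation lower bound of degree $\Omega(\sqrt{k\log(1/\delta)})$, is intended to yield $T = \Omega\!\left(\sqrt{k\log(\norm{O}/\epsilon)}\,\norm{O}/\epsilon\right)$.

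The main obstacle I anticipate is combining the two distinct costs---the degree $\Omega(\sqrt{k\log(\norm{O}/\epsilon)})$ needed merely to represent $\lambda^k$ as an accessible polynomial, and the precision cost $\Omega(\norm{O}/\epsilon)$---into a genuine product rather than a maximum. A bias (expected-value) argument alone only certifies the approximation degree, since the factor $\norm{O}$ cancels upon normalization; the precision factor must instead be extracted from the fluctuations of the estimator, for instance through an amplitude-estimation-style lower bound nested inside the power-function evaluation. Making this composition rigorous---showing that the degree-$\sqrt{k\log}$ polynomial construction and the $\norm{O}/\epsilon$ repetitions cannot be amortized against one another---is where I expect the cited sharpening of approximation theory, and the precise boundary behavior of the Chebyshev approximants to $\lambda^k$ near $\lambda=1$, to do the real work.
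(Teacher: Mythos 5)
Your proposal has a genuine gap, and it is exactly the one you flag at the end: the composition of the degree cost and the precision cost into a \emph{product} is never achieved, and the route you sketch cannot achieve it. If the expected output (or acceptance probability) of a $T$-query algorithm is a bounded polynomial of degree $\Ord{T}$, then the strongest constraint your hard family imposes comes from distinguishing $\lambda_1=1$ from $\lambda_2$ with $\lambda_2^k=1-2\delta$, $\delta=\epsilon/\norm{O}$, i.e.\ two points at distance $1-\lambda_2\approx 2\delta/k$ from the endpoint. By the Markov inequality a bounded degree-$D$ polynomial changes by at most $\Ord{D^2\cdot\delta/k}$ over that interval, so the distinguishing requirement forces only $T=\Omega\bigl(\sqrt{k\norm{O}/\epsilon}\bigr)$; combined with the representation bound $\Omega\bigl(\sqrt{k\log(\norm{O}/\epsilon)}\bigr)$ you still get only a maximum of the two, which is strictly weaker than $\sqrt{k\log(\norm{O}/\epsilon)}\,\norm{O}/\epsilon$ whenever $\epsilon/\norm{O}$ is small. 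This is not a fixable bookkeeping issue in your boundary-layer balancing: the degree of a single bounded polynomial fundamentally cannot carry the multiplicative statistical factor $\norm{O}/\epsilon$ (as you note, it cancels upon normalization), and the ``amplitude-estimation-style bound nested inside the power-function evaluation'' that you invoke to recover it is precisely the missing proof. There is also a technical flaw one step earlier: you justify the low-degree claim by saying purified access \emph{can be turned into} a block encoding of $\rho_\lambda$ with eigenvalue $\lambda$, but a lower bound must constrain every algorithm querying $U_{\rho_\lambda}$, not only those that factor through that reduction; under direct queries the amplitudes are polynomials in $\sqrt{\lambda}$ and $\sqrt{1-\lambda}$, so mixed odd monomials appear and $\mu$ need not be a polynomial in $\lambda$ (repairable by symmetrization or a $\lambda=\cos^2\theta$ substitution, but absent as written).

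The paper avoids the composition problem by splitting the task structurally and lower-bounding each stage with a different tool. First, it proves the degree lower bound $\Omega\bigl(\sqrt{k\log(1/\epsilon)}\bigr)$ for uniform approximation of $x^k$ (Chebyshev truncation plus Rivlin's comparison between Chebyshev and best uniform approximation errors, Lemma~\ref{lemma:cheby_best}), and converts it into a \emph{query} lower bound for producing any block encoding of an approximation to $\rho^{k-1}$ via Proposition~1.12 of Ref.~\cite{montanaro2024quantum}, which states that implementing $f(A)$ to error $\epsilon$ costs $\Theta\bigl(\widetilde{\operatorname{deg}}_{\epsilon}(f)\bigr)$ queries to the block encoding of $A$. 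Second, it proves by a BBBV-style hybrid argument~\cite{bennett1997strengths,ambainis2000quantum} (Lemma~\ref{lemma:property_lower_bound}, with $A_0=0$ versus $A_1=\delta\ket{\nu}\bra{\nu}$, $\delta=2\epsilon/\norm{O}$) that $\Omega(\norm{O}/\epsilon)$ uses of that block encoding are needed to estimate the expectation value. The product then arises because each of the $\Omega(\norm{O}/\epsilon)$ block-encoding calls itself costs $\Omega\bigl(\sqrt{k\log(\norm{O}/\epsilon)}\bigr)$ raw queries to $U_\rho$. Your ingredients---the hard diagonal family, the $x^k$ degree bound, the boundary Bernstein phenomenon---all belong to the first stage; without an analogue of the degree-to-query equivalence of Ref.~\cite{montanaro2024quantum} and of the hybrid argument supplying the $\norm{O}/\epsilon$ factor, the bound your argument can actually certify falls short of the theorem.
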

\noindent The proof relies on establishing a new degree lower bound for approximating the power function using fundamental approximation theory~\cite{trefethen2019approximation}, and then lifting this lower bound to the query complexity lower bound for approximating matrix functions, leveraging the recent breakthrough of Ref.~\cite{montanaro2024quantum}. 

First, we prove the lower bound for approximating power functions:
\begin{lemma}[Optimal Approximation of Power Functions]
\label{thm:chebyshev}
For power function $x^k$ on the interval $[0,1]$ and $\epsilon > 0$, there exists a polynomial $Q(x)=p_d(x)$ of degree
$
d = \Theta\left(\sqrt{k\log(1/\epsilon)}\right)
$,
satsifying $\sup_{x \in [0,1]} |x^k - p_d(x)| \leq \epsilon$.
\end{lemma}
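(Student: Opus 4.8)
The plan is to pass from the power function to the Chebyshev basis, where $x^k$ has an exact and highly structured expansion. Writing $x=\cos\theta$ one has $\cos^k\theta = 2^{-k}\sum_{j=0}^k \binom{k}{j}\cos((k-2j)\theta)$, which, since $T_m(\cos\theta)=\cos(m\theta)$, is the polynomial identity $x^k = 2^{-k}\sum_{j=0}^k \binom{k}{j}\,T_{\abs{k-2j}}(x)$ valid on all of $\Rbb$ (hence on $[0,1]\subset[-1,1]$). The coefficients are exactly the binomial profile of $\mathrm{Bin}(k,\tfrac12)$, which decays like a Gaussian away from its center. This is the analytic incarnation of the boundary behavior of $x^k$: under $x=\cos\theta$ the power function becomes $\cos^k\theta\approx e^{-k\theta^2/2}$, a Gaussian of width $\theta\sim 1/\sqrt k$ concentrated at the endpoint $x=1$, and resolving such a Gaussian in frequency (i.e.\ in Chebyshev degree) down to accuracy $\epsilon$ requires frequencies up to $\Theta(\sqrt{k\log(1/\epsilon)})$. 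Both directions of the lemma follow by making this picture quantitative.

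For the upper bound (existence of $p_d$), I would take $p_d$ to be the truncation of the expansion above to terms of degree at most $d$. Since $\abs{T_m(x)}\le 1$ on $[-1,1]$, the truncation error is controlled by the discarded tail mass, $\sup_{x\in[0,1]}\abs{x^k-p_d(x)}\le \sum_{j:\,\abs{k-2j}>d}2^{-k}\binom{k}{j}=\Pr\!\sbra{\,\abs{S-\tfrac k2}>\tfrac d2\,}$ with $S\sim\mathrm{Bin}(k,\tfrac12)$. A Hoeffding/Chernoff bound gives $\Pr[\,\abs{S-k/2}>d/2\,]\le 2e^{-d^2/(2k)}$, so choosing $d=\lceil\sqrt{2k\log(2/\epsilon)}\,\rceil=\Ord{\sqrt{k\log(1/\epsilon)}}$ forces the error below $\epsilon$. (Equivalently one may work in the shifted Chebyshev basis on $[0,1]$, where the coefficients become $2^{1-2k}\binom{2k}{k-j}$, i.e.\ the profile of $\mathrm{Bin}(2k,\tfrac12)$; the conclusion is identical up to constants.)

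For the matching lower bound, which is what upgrades the statement from $\Ord{\cdot}$ to $\Theta(\cdot)$ and is the part actually needed downstream, I would use a dual/orthogonality argument. Let $w$ be the Chebyshev weight on $[0,1]$ and let $\widetilde T_{d+1}$ be the degree-$(d+1)$ shifted Chebyshev polynomial; the signed measure $d\mu=\widetilde T_{d+1}\,w\,dx$ annihilates every polynomial of degree $\le d$. Hence for any such polynomial $p$ one has $\int x^k\,d\mu=\int (x^k-p)\,d\mu$, and since $\norm{\mu}_{\mathrm{TV}}=\Theta(1)$ while $\int x^k\,d\mu$ equals the $(d+1)$-th Chebyshev coefficient $\widehat f_{d+1}$ up to a $\Theta(1)$ normalization, this yields $\sup_{[0,1]}\abs{x^k-p}\ge c\,\abs{\widehat f_{d+1}}$ for an absolute constant $c>0$. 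A short computation using $1+\cos\theta=2\cos^2(\theta/2)$ identifies $\widehat f_{d+1}=2^{1-2k}\binom{2k}{k-(d+1)}$, i.e.\ twice the point mass $\Pr[\mathrm{Bin}(2k,\tfrac12)=k-(d+1)]$. Thus any degree-$d$ approximation with error $\le\epsilon$ forces this binomial probability to be $\lesssim\epsilon$, and a Stirling estimate $2^{-2k}\binom{2k}{k-m}=\Theta\!\pbra{k^{-1/2}e^{-m^2/k}}$ then gives $(d+1)^2\gtrsim k\log(1/\epsilon)$, i.e.\ $d=\Omega(\sqrt{k\log(1/\epsilon)})$, establishing optimality and hence the $\Theta$.

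I expect the main obstacle to be the lower bound rather than the construction. The upper bound only needs an \emph{upper} tail bound on a binomial (Hoeffding suffices), whereas the lower bound needs a matching \emph{lower} estimate on a single binomial coefficient in the moderate-deviation regime $m=\Theta(\sqrt{k\log(1/\epsilon)})$, which requires a careful Stirling expansion rather than a soft concentration inequality and is valid only while $\log(1/\epsilon)=o(k)$; below that scale the bound saturates at the $\Theta(\sqrt k)$ floor, so one must be explicit about the regime in which the $\log(1/\epsilon)$ factor is meaningful. A secondary point is to verify that the constant in the duality step is genuinely $\Theta(1)$, independent of $k$ and $d$, which here holds because $\norm{\mu}_{\mathrm{TV}}$ and the Chebyshev normalization are both $\Theta(1)$. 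Lifting this scalar degree bound to the claimed $\Omega(\sqrt{k\log(\norm O/\epsilon)}\,\norm O/\epsilon)$ query lower bound of Theorem~\ref{thm:lower2} is a separate step handled via the matrix-function machinery of Ref.~\cite{montanaro2024quantum} and lies outside the scope of this lemma.
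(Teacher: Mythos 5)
Your construction half is essentially the paper's own proof: the supplement (Lemma~\ref{lemma:cheby_approx}) expands $x^k$ in Chebyshev polynomials with binomial coefficients, truncates at degree $m$, bounds the discarded mass via $|T_n(x)|\le 1$ by a binomial tail, and applies Chernoff to get $m=\sqrt{2k\ln(2/\epsilon)}$; your shifted-basis version on $[0,1]$ is the same computation. For optimality, however, you take a genuinely different route. The paper (Lemma~\ref{lemma:cheby_best} and Corollary~\ref{cor:optimal_power}) lower-bounds the best uniform error $E_d$ by the Chebyshev truncation error $S_d$ divided by $4+\frac{4}{\pi^2}\log d$ (Rivlin's comparison theorem), then inverts its expression for $S_d$. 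You instead lower-bound $E_d$ directly by a single Chebyshev coefficient, using that the signed measure $\widetilde T_{d+1}\,w\,dx$ annihilates all polynomials of degree at most $d$, and then estimate $2^{1-2k}\binom{2k}{k-(d+1)}$ by Stirling. Your argument is self-contained, incurs no $\log d$ loss, and operates on $[0,1]$ exactly as the lemma is stated, whereas the paper's degree lower bound is phrased on $[-1,1]$ (formally a weaker claim, since an approximant need only perform well on $[0,1]$).

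The weak point---which you flag but do not repair---is the moderate-$\epsilon$ regime. A single point mass undercounts the binomial tail by a $\Theta(\sqrt k)$ factor, so your chain gives $(d+1)^2 \gtrsim k\log(1/\epsilon) - \frac{k}{2}\log k$, which is vacuous whenever $\log(1/\epsilon) = O(\log k)$, in particular for constant $\epsilon$, where the true answer is $\Theta(\sqrt k)$. Saying the bound ``saturates at the $\Theta(\sqrt k)$ floor'' is an assertion, not a proof: in that regime your duality inequality yields nothing, and the floor needs a separate ingredient, e.g.\ the Markov brothers' inequality applied to the steep behavior of $x^k$ at $x=1$ (the paper invokes Bernstein's inequality only as a heuristic, not as a proof). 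In fairness, the paper's own optimality argument is defective in exactly the same regime: it declares $S_d(f)=2\exp(-d^2/(2k))$ as an equality when Chernoff supplies only an upper bound (the true tail carries an extra $\Theta(\sqrt{k}/d)$ prefactor), and it absorbs $\ln d$ into ``an absolute constant.'' So your proof is sound where $\log(1/\epsilon)\ge C\log k$ and is comparable in rigor to the paper's elsewhere, but neither argument as written establishes the $\Theta$ uniformly over all $\epsilon$.
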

\noindent The quadratic improvement from $k$ to $\sqrt{k}$ can be heuristically understood from the Bernstein's inequality:
\begin{lemma}[Bernstein's Inequality~\cite{schaeffer1941inequalities}]
For a polynomial $p(x)$ of degree $\ell$ with $|p(x)| \leq 1$ on $[-1,1]$, the derivative satisfies:
\begin{itemize}
\item Interior: $|p'(x)| \leq \ell/\sqrt{1-x^2}$ for $x \in (-1,1)$
\item Boundary: $|p'(\pm 1)| \leq \ell^2$.
\end{itemize}
\end{lemma}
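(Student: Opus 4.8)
The plan is to reduce the algebraic statement on $[-1,1]$ to the corresponding inequality for \emph{trigonometric} polynomials, which is the technical heart, and then recover both the interior and boundary bounds from it. Introduce the substitution $x=\cos\theta$ and set $g(\theta)=p(\cos\theta)$. Since each $\cos^{j}\theta$ is a cosine polynomial of degree $j$, $g$ is an even real trigonometric polynomial of degree $\ell$, and $\norm{g}_\infty=\sup_{x\in[-1,1]}\abs{p(x)}\le 1$. Differentiating through the substitution gives $g'(\theta)=-\sin\theta\,p'(\cos\theta)$. The whole lemma will follow once I establish the \emph{trigonometric Bernstein inequality}: for every real trigonometric polynomial $g$ of degree $\ell$ with $\norm{g}_\infty\le 1$ one has $\norm{g'}_\infty\le \ell$.

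Granting the trigonometric bound, the two parts of the lemma are elementary consequences. For the interior estimate, take $x=\cos\theta\in(-1,1)$, so that $\abs{\sin\theta}=\sqrt{1-x^2}\neq 0$ and hence $\abs{p'(x)}=\abs{g'(\theta)}/\sqrt{1-x^2}\le \ell/\sqrt{1-x^2}$. For the boundary bound I differentiate once more. Because $g$ is even, $g'$ is odd, so $g'(0)=g'(\pi)=0$; computing $g''(\theta)=-\cos\theta\,p'(\cos\theta)+\sin^2\theta\,p''(\cos\theta)$ and evaluating at the endpoints gives $g''(0)=-p'(1)$ and $g''(\pi)=p'(-1)$. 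Since $g'$ is itself a trigonometric polynomial of degree $\le\ell$, applying the trigonometric inequality twice yields $\norm{g''}_\infty\le \ell\,\norm{g'}_\infty\le \ell^2$, whence $\abs{p'(1)}=\abs{g''(0)}\le \ell^2$ and likewise $\abs{p'(-1)}=\abs{g''(\pi)}\le \ell^2$. Thus both claims collapse to the single trigonometric statement.

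It remains to prove $\norm{g'}_\infty\le\ell$, which I expect to be the main obstacle, since everything else is bookkeeping. I would argue by comparison with the extremal polynomial $\sin(\ell\theta)$, whose derivative attains exactly $\ell$. Suppose for contradiction that $g'(\theta_0)=A\ell$ with $A>1$ at some $\theta_0$ (the sign and location arranged by replacing $g$ with $\pm g(\,\cdot+c)$). Set $\Phi(\theta)=A\sin\bigl(\ell(\theta-\theta_0)\bigr)$, so that $\Phi'(\theta_0)=A\ell=g'(\theta_0)$ and $\Phi$ oscillates between $\pm A$ with $2\ell$ alternating extrema per period. At each maximum of $\Phi$ the difference $h=\Phi-g$ obeys $h\ge A-1>0$, and at each minimum $h\le -A+1<0$, so $h$ changes sign at least $2\ell$ times, giving $2\ell$ zeros per period. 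But $h'(\theta_0)=\Phi'(\theta_0)-g'(\theta_0)=0$ supplies an extra critical point; applying Rolle's theorem between consecutive zeros of $h$ then forces $h'$, a trigonometric polynomial of degree $\le\ell$, to vanish at more than $2\ell$ points per period, which is impossible unless $h\equiv 0$. The latter is excluded because $h\equiv 0$ would give $g=\Phi$ with $\norm{g}_\infty=A>1$, contradicting $\norm{g}_\infty\le1$.

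The delicate point in the comparison argument is ensuring that the matched-slope critical point at $\theta_0$ is genuinely a \emph{new} zero of $h'$ beyond the $2\ell$ Rolle zeros; the degenerate configurations, where a zero of $h$ coincides with $\theta_0$ or where some of the inequalities fail to be strict, are handled by a standard perturbation/limiting argument (replace $A$ by a slightly smaller value still exceeding $1$). To sidestep the zero-counting subtleties entirely, an equivalent and fully rigorous route is M.\ Riesz's interpolation identity, which writes $g'(\theta)=\sum_{j=1}^{2\ell} a_j\, g\bigl(\theta+\tfrac{(2j-1)\pi}{2\ell}\bigr)$ with coefficients $a_j$ of alternating sign satisfying $\sum_j\abs{a_j}=\ell$; the identity is verified on the Fourier basis $\{e^{im\theta}\}_{\abs{m}\le\ell}$ and the coefficient sum by a cotangent-type identity, after which the triangle inequality gives $\abs{g'(\theta)}\le\ell\norm{g}_\infty$ directly.
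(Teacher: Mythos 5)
Your proposal is correct, and it is in fact more of a proof than the paper itself supplies: the paper treats this lemma as cited background (Schaeffer's survey), and its supplementary discussion only verifies \emph{sharpness} at the endpoint by computing $T'_\ell(1)=\ell^2$ for the Chebyshev polynomial, invoking the extremal property (``no bounded polynomial of degree $\ell$ can do better'') as a black-box ``deep result,'' together with a degrees-of-freedom heuristic. You instead derive both bounds from a single statement, the trigonometric Bernstein inequality $\|g'\|_\infty\le\ell\|g\|_\infty$, via the substitution $x=\cos\theta$: the interior bound falls out of $g'(\theta)=-\sin\theta\,p'(\cos\theta)$, and the endpoint (Markov-type) bound comes from the clean double-differentiation observation $g''(0)=-p'(1)$, $g''(\pi)=p'(-1)$ combined with two applications of the trigonometric inequality (valid since $g'$ is again a trigonometric polynomial of degree $\le\ell$, and the inequality is homogeneous, so no renormalization is needed). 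Your zero-counting comparison with $A\sin(\ell(\theta-\theta_0))$ is the classical argument, and you correctly identify its only delicate point (the matched-slope zero of $h'$ at $\theta_0$ must be distinct from the $2\ell$ Rolle zeros, handled by perturbing $A$); the M.~Riesz interpolation identity, with alternating coefficients summing to $\ell$ via the identity $\sum_{k=1}^{2\ell}\csc^2\bigl(\tfrac{(2k-1)\pi}{4\ell}\bigr)=4\ell^2$, is indeed a fully rigorous one-step substitute. What each approach buys: the paper's sketch conveys the intuition that Chebyshev saturates the bound and that boundary oscillation is the source of the quadratic factor, which is all its later arguments use; your route makes the lemma self-contained and, incidentally, explains structurally \emph{why} the exponent doubles at the boundary---the factor $\ell^2$ arises from applying the degree-$\ell$ derivative bound twice at a point where $\sin\theta$ vanishes, which is exactly the asymmetry the paper exploits heuristically.
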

\noindent This reveals a fundamental asymmetry: polynomials can oscillate \emph{quadratically} faster at boundaries than at interior points. 
This heuristic argument captures why degree $\sqrt{k}$ suffices—the function's steepest behavior occurs precisely where polynomials are most powerful.
{Exploiting Ref.~\cite{montanaro2024quantum}, we rigorously prove that the query complexity for approximating matrix function $\rho^k$ is also lower bounded by $\Omega(k)$. We refer to the Supplementary Materials for the complete proof on the dependence on $\norm{O}$ and $\epsilon$.}
Similar insights have guided the design of other quantum algorithms for implementing different matrix functions. For example, quantum fast-forwarding achieves quadratic speedups in the mixing time of Markov Chain Monte Carlo~\cite{apers2018quantum}. In ground-state preparation of (nearly) frustration-free Hamiltonians~\cite{thibodeau2023nearly}, the dependence on the spectral gap is quadratically improved, echoing the observation made here. Related ideas have also enabled quadratic accelerations in certain quantum linear solvers~\cite{orsucci2021solving} and in algorithms for ordinary differential equations~\cite{an2022theory}.
However, the quadratic speedup for power functions of density matrices is established here for the first time.\\



The results suggest a fundamental distinction between sample access and purified quantum query access. To establish this separation rigorously, we further prove the following achievable result:
\begin{theorem}[Main Result]
\label{thm:main}
There exists a quantum algorithm that solves the nonlinear property estimation problem (Definition \ref{def:problem}) using
$\Ord{{\sqrt{k\log(\norm{O}/\epsilon)}\norm{O}}/{\epsilon}}
$
queries to $U_\rho$ and $U_\rho^\dagger$ with success probability at least $\frac{8}{\pi^2}$.
\end{theorem}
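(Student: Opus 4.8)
The plan is to combine the near-optimal approximation of \cref{thm:chebyshev} with the quantum singular value transformation (QSVT) and amplitude estimation. First I would convert the purified access into a block-encoding of $\rho$: a standard construction from \cite{gilyen2019quantum} yields, using $\Ord{1}$ calls to $U_\rho$ and $U_\rho^\dagger$, a unitary $U_B$ that block-encodes $\rho$ on the system register $I$, with the environment $E$ serving as the block-encoding ancilla. Because $\rho\succeq 0$ has spectrum in $[0,1]$, its singular values coincide with its eigenvalues, so QSVT applies cleanly with real polynomials.

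Next, let $q_d$ be the degree-$d$ polynomial of \cref{thm:chebyshev} approximating $x^{k-1}$ on $[0,1]$ (the same asymptotic degree as for $x^k$) to uniform error $\epsilon':=\epsilon/(2\norm{O})$, so that $d=\Theta(\sqrt{k\log(\norm{O}/\epsilon)})$, after the negligible rescaling of $q_d$ needed to enforce the boundedness hypothesis $\norm{q_d}_{\infty,[-1,1]}\le 1$ of QSVT. Applying QSVT to $U_B$ produces a block-encoding $W_q$ of $q_d(\rho)$ with $\Ord{d}$ queries to $U_\rho,U_\rho^\dagger$. Composing $W_q$ with a block-encoding of $O/\norm{O}$ yields a block-encoding $W$ of $\tilde B:=q_d(\rho)O/\norm{O}$, whose subnormalization is $\Ord{1}$.

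The crucial step is to recast the target as a \emph{purified expectation value}. Using $\mathrm{Tr}_E\ket{\rho}\bra{\rho}=\rho$ and the factorization $\rho^k=\rho\,\rho^{k-1}$, I would exploit
\[
\tr{\rho^k O}=\langle\rho|(I_E\otimes\rho^{k-1}O)|\rho\rangle\approx\langle\rho|(I_E\otimes q_d(\rho)O)|\rho\rangle=\norm{O}\,\langle\rho|(I_E\otimes\tilde B)|\rho\rangle.
\]
The key point, and the technical heart of the argument, is that peeling off a single factor of $\rho$ weights the approximation error by the eigenvalue probabilities: with $\rho=\sum_i p_i\ket{\psi_i}\bra{\psi_i}$,
\[
\abs{\tr{\rho^k O}-\langle\rho|(I_E\otimes q_d(\rho)O)|\rho\rangle}\le\norm{O}\sum_i p_i\,\abs{p_i^{k-1}-q_d(p_i)}\le\norm{O}\,\epsilon'\sum_i p_i=\norm{O}\,\epsilon'\le\epsilon/2.
\]
Without this trick, a naive uniform bound on $\tr{(\rho^k-q_d(\rho))O}$ would carry a factor equal to the Hilbert-space dimension $2^n$; the normalization $\sum_i p_i=1$ is exactly what cancels it.

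Finally, I would estimate $\langle\rho|(I_E\otimes\tilde B)|\rho\rangle$, a real quantity of magnitude at most $1$, with a Hadamard test built from $U_\rho$ and the controlled block-encoding $W$, boosted by amplitude estimation to additive error $\epsilon/(2\norm{O})$. Amplitude estimation needs $\Ord{\norm{O}/\epsilon}$ invocations of the test circuit and succeeds with probability at least $8/\pi^2$, matching the stated guarantee; rescaling the output by $\norm{O}$ then estimates $\tr{\rho^k O}$ to total error $\epsilon$. Since each test circuit invokes $W$ a constant number of times and $W$ costs $\Ord{d}$ queries to $U_\rho$ and $U_\rho^\dagger$, the overall cost is $\Ord{d\norm{O}/\epsilon}=\Ord{\sqrt{k\log(\norm{O}/\epsilon)}\norm{O}/\epsilon}$. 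The main obstacle is precisely the error-propagation step above: proving that uniform polynomial error $\epsilon'$ translates into expectation-value error $\norm{O}\epsilon'$ rather than $2^n\norm{O}\epsilon'$, which forces the factorization $\rho^k=\rho\cdot\rho^{k-1}$ and the use of the purified expectation rather than a normalized trace; a secondary technical care is the rescaling of $q_d$ so that it meets the $[-1,1]$ boundedness hypothesis of QSVT without spoiling either the degree or the error budget.
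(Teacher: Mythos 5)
Your proposal is correct and follows essentially the same route as the paper: block-encode $\rho$ from $U_\rho$, use QSVT with the near-optimal degree-$\Theta(\sqrt{k\log(\norm{O}/\epsilon)})$ approximation of $x^{k-1}$ composed with a block encoding of $O$, run a Hadamard-test circuit on the purified state $\ket{\rho}$ (which supplies the remaining factor of $\rho$), and boost with amplitude estimation at accuracy $\Ord{\epsilon/\norm{O}}$. Your eigenvalue-weighted error bound $\sum_i p_i\abs{p_i^{k-1}-q_d(p_i)}\le\epsilon'$ is exactly the paper's H\"older-inequality step $\abs{\tr{\rho\,(p(\rho)-\rho^{k-1})O}}\le\norm{\rho}_1\norm{p(\rho)-\rho^{k-1}}\norm{O}$ in different notation, so the two arguments coincide.
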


Before presenting the algorithm, we further prove that the decision version of the non-linear property estimation problem is BQP-complete.
\begin{theorem}[Informal]\label{thm:bqp}
Let $\rho$ be an $n$-qubit mixed state of rank $\mathcal{O}(1)$ with a succinct classical description of its purification. Let $O$ be an observable of operator norm $\norm{O}=\mathcal{O}(1)$ with a succinct classical description. Fix an integer $k\geq 1$. Then, promise either $\tr{\rho^k}\geq a$ or $\tr{\rho^k}\leq b$. For $a-b\geq 1/\mathrm{poly}(n)$, the promise problem is BQP-complete.
\end{theorem}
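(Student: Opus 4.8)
The plan is to prove the two inclusions of BQP-completeness separately.

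\emph{Membership in BQP.} A succinct classical description of a purification of $\rho$ is precisely a $\poly(n)$-size circuit implementing the unitary $U_\rho$ of the purified-access model. Running the algorithm of Theorem~\ref{thm:main} with $O=I$ then estimates $\tr{\rho^k}$ to additive error $\epsilon=(a-b)/3=\Omega(1/\poly(n))$ using $\Ord{\sqrt{k\log(1/\epsilon)}/\epsilon}$ queries, each a $\poly(n)$-size circuit. For $k$ polynomially bounded this is a $\poly(n)$-time bounded-error quantum procedure that decides $\tr{\rho^k}\ge a$ versus $\tr{\rho^k}\le b$, so the problem lies in BQP.

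\emph{BQP-hardness.} I would encode an arbitrary BQP computation into the two eigenvalues of a rank-two $\rho$. Fix $k\ge 2$ and take a verifier circuit $V$ on $m=\poly(n)$ qubits with acceptance probability $p$, promised $p\ge 2/3$ or $p\le 1/3$; amplify to $p\ge 1-\delta'$ respectively $p\le\delta'$ with $\delta'=1/(8k)$ using $\Ord{\log k}$ repetitions. The central obstacle is that $\tr{\rho^k}=\mu^k+(1-\mu)^k$ is invariant under $\mu\mapsto 1-\mu$: if the lone system qubit simply carried occupation $\mu=p$, then both yes-instances (large $p$) and no-instances (small $p$) would give $\tr{\rho^k}\approx 1$ and be indistinguishable. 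I break this symmetry by post-processing the verifier's output bit $b$ with a fresh Hadamard-prepared coin $r$, computing $c=b\lor r$ reversibly, so that the occupation becomes $\mu=\Pr[c=1]=(1+p)/2$. This sends no-instances to $\mu\approx 1/2$ (nearly maximally mixed, minimal purity) and yes-instances to $\mu\approx 1$ (nearly pure).

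Concretely, I let $U_\rho$ be this combined $\poly(n)$-size circuit acting on an environment $E$ holding all of $V$'s qubits and $r$, together with a single system qubit $I$ into which $c$ is copied; this is a legitimate purification with a succinct description, and the resulting $\rho$ has rank two. Because $E$ retains a record of $b$ and $r$, the branches $c=0$ and $c=1$ are orthogonal on $E$, so tracing out $E$ gives the diagonal state $\rho=\mathrm{diag}(1-\mu,\mu)$ and hence $\tr{\rho^k}=\mu^k+(1-\mu)^k$ with $\mu=(1+p)/2$. A short estimate using $2^{1-k}\le 1/2$ for $k\ge 2$ together with $|\,(\mu^k+(1-\mu)^k)'\,|\le k$ near $\mu=1/2$ shows $\tr{\rho^k}\ge 7/8$ for yes-instances and $\tr{\rho^k}\le 5/8$ for no-instances, so taking $a=7/8$, $b=5/8$ yields a constant gap $a-b=\Omega(1)\ge 1/\poly(n)$. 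Since the reduction is $\poly(n)$-time, this establishes BQP-hardness for every fixed $k\ge 2$; for $k=1$, where $\tr{\rho}\equiv 1$ is trivial, hardness instead follows from the observable version $\tr{\rho O}$, which directly encodes a verifier's acceptance amplitude. The main difficulty throughout is the parity symmetry $\mu\mapsto 1-\mu$, resolved by the randomized post-processing above.
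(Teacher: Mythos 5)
Your proof is correct, and the hardness half takes a genuinely different route from the paper's. The membership argument is the same in both (run the purified-access algorithm with accuracy a constant fraction of the promise gap). For hardness, the paper reduces from a BQP language by encoding the acceptance probability $p_x$ \emph{linearly through the observable}: it builds a rank-two state $\rho = \lambda\ket{v_1}\bra{v_1} + (1-\lambda)\ket{v_2}\bra{v_2}$ with $\ket{v_1}$ carrying $\ket{\psi_x}$, takes $O = \ket{0}\bra{0}_S\otimes\Pi_{\mathrm{acc}}$, and gets $\tr{\rho^k O} = \lambda^k p_x$ exactly; choosing $\lambda = 1 - 1/(q(n)k)$ and invoking Bernoulli's inequality keeps the gap $\tfrac{1}{3}\lambda^k$ inverse-polynomial, and this works uniformly for all $k\ge 1$ without amplification. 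You instead prove hardness for the bare quantity $\tr{\rho^k}$ (i.e., $O=I$), encoding the answer into the \emph{spectrum} of $\rho$; this forces you to confront the $\mu\mapsto 1-\mu$ symmetry of $\mu^k+(1-\mu)^k$, which you resolve with the OR-with-a-fresh-coin trick mapping $p$ to $\mu=(1+p)/2$, plus standard amplification to error $1/(8k)$. Your bounds ($\ge 15/16$ versus $\le 2^{1-k}+1/16$) check out and give a constant gap for every $k\ge 2$, with the $k=1$ case correctly deferred to the observable version since $\tr{\rho}\equiv 1$. What each approach buys: the paper's is simpler (no amplification, no symmetry issue, all $k\ge 1$ at once) but its hardness statement genuinely needs the nontrivial observable $O$; yours is a stronger and more literal match to the informal statement as written — it shows that even deciding the purity-type functional $\tr{\rho^k}$ alone, with identity observable, is BQP-hard — at the modest cost of $\Ord{\log k}$ verifier repetitions and the restriction $k\ge 2$ for that case.
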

\noindent Here, the succinct classical description means that we have a description of the quantum circuit of size at most $\mathcal{O}(\mathrm{poly}(n))$.
The result further indicates the possibility of universal quantum speedups for general tasks.

\vspace{0.2cm}

In the following, we mainly focus on the proof of the main theorem.

\noindent\textbf{Quantum Algorithm---}Given the quantum purified query access~\cite{gilyen2019quantum}, we can block encode $\rho$ into a larger unitary 
\begin{equation}\nonumber
O_\rho =
\left(
\begin{array}{cc}
\rho & * \\ 
* & *
\end{array}
\right),
\end{equation}
which is realizable with one query to $U_\rho$ and $U_\rho^\dagger$. Suppose $O$ is Hermitian with real $\tr{\rho^kO}$. Given the optimal approximation of Lemma~\ref{thm:chebyshev}, our algorithm works as follow:
\begin{enumerate}
    \item Construct the controlled block encoding of $\hat{p}(\rho):=p(\rho)O$ with normalization factor $\alpha_O$ using the quantum singular value transformation (QSVT) algorithm~\cite{gilyen2019quantum}, where $p(x)$ is an optimal approximation of $g(x)=x^{k-1}$.
    \item Implement the quantum circuit as illustrated in Fig.~\ref{fig:1} with $W=I$ for estimating the real parts of the nonlinear property function.
    \item Apply amplitude estimation~\cite{brassard2000quantum} to estimate the probability $P(0)$ of the first ancilla qubit in state $\ket{0}$ to error $\frac{\epsilon}{2}$.
    \item The final estimation is given by
    \begin{equation}\label{eq:estimator}
        \widetilde{E}:=\alpha_O(2P({0})-1)).
    \end{equation}
\end{enumerate}
In step 1, $\hat{p}(\rho)$ is realized by first constructing a block encoding of $p(\rho)$ with error ${\epsilon}/{(2\norm{O})}$ using degree $m=\sqrt{2(k-1)\ln(4\alpha_O\norm{O}/\epsilon)}$ (see proof below). Then, we can block encode the product of $p(\rho)$ and $O$, which is a feature enabled by the QSVT framework. Here, we assume an error-free block encoding of $O$, which is true whenever $O$ is a linear combination of unitaries.
We denote the normalization factor of the block encoding of $O$ as $\alpha_O$ such that $\alpha_O\geq \norm{O}=\mathcal{O}(\norm{O})$.

\begin{figure}[t]
\centering
\includegraphics[width=0.5\textwidth]{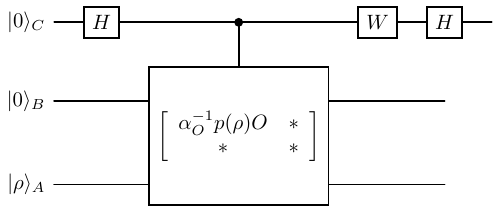} 
\caption{The quantum circuit for estimating $\tr{\rho^k O}$. The controlled operation is a unitary that block encodes the $p(\rho)O$, where $p(x)$ is a polynomial function that approximates $x^{k-1}$ and $\alpha_O$ is a normalization factor of $O$.
When setting $W=I$ or $S^\dagger$ with $S$ the phase gate, the final probability of measuring $\ket{0}_C$ for the first ancilla register encodes information about the real or imaginary parts of the targeted quantity. Specifically, the estimator is given by Eq.~\eqref{eq:estimator} for either the real or the imaginary part. Therefore, one performs amplitude estimation on the register $C$ to estimate the probability.
}
\label{fig:1}
\end{figure}

We can easily verify that the probability of measuring the state $\ket{0}$ is given by $[{1+\mathrm{Re}(\tr{\rho p(\rho)O})/\alpha_O}]/{2}$,
where $\mathrm{Re}(\cdot)$ denotes taking the real part. Since $p(\rho)\approx \rho^{k-1}$ and $\tr{\rho^kO})$ is real, it proves that Eq.~\eqref{eq:estimator} is a good approximation to $\tr{\rho^k O}$.
Furthermore, for non-hermitian $O$, we can set $W=S^\dagger$ with the phase gate $S$ to estimate the imaginary part.

Now, we are ready to prove our main result, Theorem~\ref{thm:main}.
\begin{proof}
The total error decomposes as the sum of the approximation error in polynomial approximation and the estimation error in amplitude estimation $
        \abs{\widetilde{E} - \tr{\rho^k O}} \leq \abs{\widetilde{E} - \tr{\rho p'(\rho)}}+ \abs{\tr{\rho p'(\rho)}- \tr{\rho^k O}}$
where each source of error is set to be $\epsilon/2$. 

For the approximation error $\abs{\widetilde{E} - \tr{\rho p'(\rho)}}$, we can apply Lemma~\ref{thm:chebyshev} and prove
\begin{equation}\nonumber
\begin{aligned}
\abs{\tr{\rho p(\rho)O} - \tr{\rho^k O}} &\leq \norm{\rho}_1\cdot\norm{p(\rho)O-\rho^{k-1}O} \\
&\leq \norm{O} \cdot \norm{p(\rho) - \rho^{k-1}} \\
&\leq \norm{O} \cdot \frac{\epsilon}{2\norm{O}} = \frac{\epsilon}{2}.
\end{aligned}
\end{equation}
In the first line, we apply H\"{o}lder's inequality and $\|\rho\|_1$ denotes the Schatten one norm. In the second line, we have used the sub-multiplicativity of the operator norm. This error originates from approximating $g(x)=x^{k-1}$ with $p(x)$, which is $\epsilon/(2\norm{O})$. Based on Theorem~\ref{thm:chebyshev}, a degree $\sqrt{(k-1)\log(1/\varepsilon)}$ polynomial function gives an approximation error $\varepsilon$ of $g(x)=x^{k-1}$. Therefore, the desirable degree is given by $d=\Ord{\sqrt{k\log(\norm{O}/\epsilon)}}$ with $\alpha_O=\mathcal{O}(\norm{O})$.


The estimation accuracy is given by $\epsilon'=\frac{\epsilon}{4\alpha_O}$ as the estimation is amplified by a factor $2\alpha_O$ with $\alpha_O=\mathcal{O}(\norm{O})$. By standard results from amplitude estimation~\cite{brassard2000quantum}, $\mathcal{O}(1/\epsilon')$ queries to the $p'(\rho)$ circuit are needed with success probability $\frac{8}{\pi^2}$. Together, it proves that a total of $m=\Ord{\sqrt{k\log(\norm{O}/\epsilon)}\norm{O}/\epsilon}$ queries to $U_\rho$ is sufficient for estimating $\tr{\rho^kO}$.
\end{proof}
\noindent We note that implementing a controlled block encoding of $p'(\rho)\approx\rho^{k-1}O$ is essential. In contrast, approximately preparing the subnormalized state $\rho^k$~\cite{wang2024new} and estimating the nonlinear function through the state would incur a query complexity that scales with the rank of $\rho$, which can be exponentially large in general. This technique may also prove useful in overcoming rank dependence for other related tasks.

\vspace{0.2cm}

\noindent\textbf{Applications---}Our results have immediate applications in different fields.

In the special case with $O$ being identity, the nonlinear measurements can be used to estimate the Schatten norm $\norm{\rho}_p:=\left(\tr{\rho^p}\right)^{1/p}$ with $p\in\mathbb{Z}$, a basic quantifier in quantum information~\cite{brandao2015estimating}. Relatedly, one can compute entropic measures, such as the Rényi entropy: $S_\alpha(\rho)=\frac{1}{1-\alpha}\log(\tr{\rho^\alpha})$ with $\alpha>0$ and $\alpha\neq 1$, or the Tsallis entropy: $S_q(\rho)=\frac{\tr{\rho^q}}{1-q}$ with $\alpha\in\mathbb{Z}$, $q\in\mathbb{Z}$, and $q>2$.
Moreover, by employing the replica trick, one may approximate the min-entropy $S_{\infty}(\rho)=-\ln \|\rho\|_{\infty}$ or the von Neumann entropy $S(\rho) = \tr{\rho\log\rho}$ by taking the limit of $k\rightarrow\infty$ and $k\rightarrow1$, respectively, of which have been widely used in quantum field theory~\cite{calabrese2004entanglement,calabrese2005evolution} and gravity~\cite{faulkner2013quantum}.

Our method also applies to more sophisticated quantum information quantities, such as the quantum Fisher information (QFI), a key concept in quantum metrology~\cite{braunstein1994statistical,toth2014quantum} that sets the ultimate precision limit for parameter estimation via the quantum Cramér–Rao bound. The QFI also plays a broad role in quantum information theory and many-body physics, including quantifying the quantum speed limit~\cite{taddei2013quantum} and multipartite entanglement~\cite{hyllus2012fisher}. Given the eigen-decompostion of $\rho=\sum_ip_i\ket{\psi_i}\bra{\psi_i}$, the QFI is defined as $F_Q=2 \sum_{p_k+p_l>0} \frac{\left(p_k-p_l\right)^2}{p_k+p_l}|\langle \psi_k| O| \psi_l\rangle|^ 2$
with $O$ a Hermitian observable. Direct estimation is challenging since the formula is expressed in the eigenbasis of $\rho$. Recent work proposes Krylov subspace methods~\cite{zhang2025krylov} that approximate the QFI by measuring $f_\rho=\mathrm{Tr}(O\rho^l O \rho^{k-l+2})$ for $l=0,\dots,k+2$. Although these differ from the standard nonlinear quantities $\mathrm{Tr}(\rho^k O)$ considered above, our algorithm can be adapted by block encoding $O\rho^l O \rho^{k-l+1}$ for small or large $l$, respectively. This extension still yields a quadratic speedup in estimating the QFI. Besides, Ref.~\cite{rath2021quantum} provides the power function of the density matrix related formula, such that as the degree of the power function increases, the output monotonically approaches the true QFI.




Nonlinear properties of density matrices also play a critical role in quantum error mitigation, specifically virtual distillation (VD)~\cite{Koczor2021exponential, Huggins2021virtual}, which stands out as a particularly effective approach for mitigating errors in quantum state preparation and quantum chemistry computation~\cite{o2023purification}. The key insight of VD is that by effectively computing $\rho^k/\operatorname{tr}(\rho^k)$ for a mixed state $\rho$, one can amplify the contribution of the largest eigenvalue component, effectively ``purifying" the quantum state. This is especially useful when the mixed state $\rho$ is a noise-corrupted version of a pure target state. Therefore, for any observable $O$, it is approximated via VD as $
\langle O \rangle_{\text{distilled}} = {\tr{\rho^k O}}/{\tr{\rho^k}}$. The formulation has also been applied in extracting dominant eigenproperties of density matrices~\cite{seif2023shadow,hakoshima2024localized,bako2025exponential}. 
Meanwhile, quantum virtual cooling~\cite{cotler2019quantum} provides an efficient method for predicting thermal states at low temperature by using high-temperature thermal states. Specifically, for a thermal state $\rho_{\beta}=e^{-\beta H}/{\rm Tr}(e^{-\beta H})$ associated with $H$ and an observable $O$, the virtual cooling achieves the map
$    \rho^{\otimes m}\mapsto{{\rm Tr}(Oe^{-k\beta H})}/{{\rm Tr}(e^{-k\beta H})}$.
In both scenarios, our algorithm can output estimated values for the numerator and the denominator within $\epsilon$-additive error separately. The ratio estimation requires careful error propagation analysis
\begin{equation}
\left|\frac{a}{b} - \frac{\tilde{a}}{\tilde{b}}\right| \leq \frac{|a-\tilde{a}|}{b} + \frac{|a||b-\tilde{b}|}{b^2},
\end{equation}
where $\tilde{a}$ and $\tilde{b}$ are estimates of $a$ and $b$ respectively. Overall, our algorithm provides an efficient way for realizing VD and virtual cooling with $\Ord{\sqrt{k}}$ queries to $U_\rho$.

Finally, our method is compatible with the classical shadow framework~\cite{huang2020predicting} for estimating multiple observables. For $M$ observables, classical shadows provide an efficient strategy by postprocessing measurement outcomes
from random basis measurements of sequentially prepared states with sample complexity bounded by $\mathcal{O}\left(k\log(M)\max\{\|O_m\|_{\infty}^24^{\abs{{\rm supp}(O_m)}}\}\epsilon^{-2}\right)$.
To integrate our approach, we replace amplitude estimation with a Hadamard-test circuit, block encoding a polynomial approximation of $\rho^{k-1}$, and then performing randomized measurements on the system register.
The standard postprocessing of classical shadows then applies, reducing the complexity to  $\Ord{\sqrt{k}\log(M)\max\{\|O_m\|_{\infty}^24^{\abs{{\rm supp}(O_m)}}\}\epsilon^{-2}}$. Specifically, introducing a random Clifford $U$ on the system register followed by a computational-basis measurement yields~\cite{zhou2024hybrid}:
\begin{equation}
    \begin{aligned}
    &\operatorname{Tr}\big(X_C\otimes I_B\otimes I_E\otimes|b\rangle\langle b|_I U \left({\rm c}\text{-}U_p\right) \\&\quad(\ket{+}\bra{+}_C\otimes \ket{0^m}\bra{0^m}_B \otimes\ket{\rho}
    \bra{\rho}_{EI})\left({\rm c}\text{-}U_p\right)^\dagger\, U^\dagger \big)\\
= & \frac{1}{2}\left\{\operatorname{Tr}\left[X_C|1\rangle\langle 0|_C\right] \operatorname{Tr}\left[\langle b| \bra{0^m}UU_p \rho U^{\dagger}|b\rangle_I\ket{0^m}_B\right]\right. \\
& \left.+\operatorname{Tr}\left[X_C|0\rangle\langle 1|_C\right] \operatorname{Tr}\left[\langle b|\bra{0^m}UU_p  \rho  U^{\dagger}|b\rangle_I\ket{0^m}_B\right]\right\} \\
= & \langle b| U \rho p(\rho) U^{\dagger}|b\rangle\approx\langle b| U \rho^k U^{\dagger}|b\rangle,
    \end{aligned}
\end{equation}
where $C$ is a single-qubit register, $B$ is the ancillary register for the block encoding unitary $U_p$ (with ${\rm c}\text{-}U_p$ the controlled version, and the control register is $C$) that encodes $p(\rho)$, which approximates $\rho^{k-1}$, $E$ is the dilated register for purifying $\rho$, and $I$ is the system register.

This implies that $\mathcal{M}^{-1}(U^{\dagger}|b\rangle\langle b|U)$, where $\mathcal{M}=\mathbb{E}_{U,b}[(U^{\dagger}|b\rangle\langle b|U]$, serves as an unbiased estimator for (approximately) $\rho^k$. Consequently, our method achieves a quadratic improvement in $k$ over previous shadow-based approaches~\cite{zhou2024hybrid,liu2024auxiliary,li2024nearly}.

\vspace{0.2cm}
\noindent\textbf{Conclusion---}In this work, we address the fundamental problem of determining the optimal strategy for estimating nonlinear quantum properties. We establish two lower bounds: a linear bound in the standard sample-access model and a quadratic bound in the purified quantum query model.
Building on these results, we develop a novel quantum algorithm based on optimal polynomial approximations that saturates the quadratic lower bound, thereby achieving a genuine quadratic speedup over all previously known methods for nonlinear property estimation.
Finally, we demonstrate the broad applicability of our framework to important tasks, including the estimation of quantum entropies, quantum Fisher information, quantum virtual distillation, quantum virtual cooling, and classical shadow tomography.


Our work opens new avenues for quantum algorithms that fundamentally rely on nonlinear state properties. An intriguing direction for future research is to further explore separations between query access and sampling access to $\rho$. Prior works~\cite{wang2024optimal,wang2024sample} have revealed certain behavioral differences between the two settings, but most known advantages of query access over sampling access are limited to improvements in accuracy, largely enabled by amplitude estimation. A deeper investigation into the boundaries and potential separations between these two paradigms promises to be highly fruitful. In addition, we note that many algorithms for matrix functions of $\rho$ depend on the QSVT framework, with block encoding as a key primitive. Yet methods for constructing block encodings from sampling access are scarce, with density matrix exponentiation (DME)~\cite{lloyd2014quantum} being the most prominent. Developing broader and more efficient block encoding techniques beyond DME thus represents another important future challenge.


\vspace{0.2cm}

\begin{acknowledgements}
\noindent \textbf{Acknowledgments}---This work is supported by the National Natural Science Foundation of China Grant (No.~12361161602), NSAF (Grant No.~U2330201), the Innovation Program for Quantum Science and Technology (Grant No.~2023ZD0300200), and the High-performance Computing Platform of Peking University. YZ is supported by the Innovation Program for Quantum Science and Technology Grant Nos.~2024ZD0301900 and 2021ZD0302000, the National Natural Science Foundation of China (NSFC) Grant No.~12205048, the Shanghai Science and Technology Innovation Action Plan Grant No.~24LZ1400200, Shanghai Pilot Program for Basic Research - Fudan University 21TQ1400100 (25TQ003), and the start-up funding of Fudan University.
\end{acknowledgements}

\bibliography{ref}

\section*{End Matter}

\subsection{Proof of Theorem \ref{thm:lower}}
\label{sec:proof_thm3}
\begin{theorem}[Restated]
For algorithms that estimates $\operatorname{Tr}(\rho^k O)$ to additive error $\epsilon$ with constant success probability, there exist quantum state $\rho$ and observable $O$, requires
$$
\Omega\left(\frac{k\|O\|^2}{\epsilon^2}\right)
$$
copies of $\rho$ in the worst case.
\end{theorem}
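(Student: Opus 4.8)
The plan is to reduce the estimation task to a two-point hypothesis test and then lower-bound the number of copies needed to distinguish the two hypotheses. Concretely, I would exhibit two density matrices $\rho_0,\rho_1$ and a single observable $O$ with $|\tr{\rho_0^k O}-\tr{\rho_1^k O}| = 2\epsilon$. Any estimator achieving additive error $\epsilon$ with probability $\geq 2/3$ then, by thresholding its output at the midpoint of the two true values, distinguishes $\rho_0^{\otimes m}$ from $\rho_1^{\otimes m}$ with probability $\geq 2/3$. It therefore suffices to show that this discrimination is information-theoretically impossible unless $m = \Omega(k\norm{O}^2/\epsilon^2)$.

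For the explicit construction I would take a single-qubit diagonal family $\rho_b = \mathrm{diag}(1-\mu_b,\mu_b)$ together with the rank-one observable $O = \norm{O}\,\ket{0}\!\bra{0}$, so that $\tr{\rho_b^k O} = \norm{O}(1-\mu_b)^k$. Writing $\mu_1 = \mu_0 + \eta$ and linearizing, $\tr{\rho_0^kO}-\tr{\rho_1^kO} \approx \norm{O}\,k(1-\mu_0)^{k-1}\eta$; enforcing this to equal $2\epsilon$ fixes the eigenvalue gap $\eta \approx 2\epsilon/[\norm{O}\,k(1-\mu_0)^{k-1}]$. The factor $k$ coming from the derivative of $x^k$ is precisely the amplification responsible for the linear-in-$k$ scaling, and the $\Omega(k)$ part already follows from the Helstrom bound~\cite{helstrom1969quantum}: since the two states commute, $\tfrac12\norm{\rho_0-\rho_1}_1 = \eta$, and distinguishing with constant bias requires $m\eta = \Omega(1)$, i.e. $m = \Omega(\norm{O}\,k/\epsilon)$ once $\mu_0 = \Theta(1/k)$.

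To upgrade the $\norm{O}/\epsilon$ dependence to $\norm{O}^2/\epsilon^2$ I would replace the trace-distance (Helstrom) bound by Le Cam's two-point method~\cite{lecam1973convergence} with the squared Hellinger distance, whose affinity tensorizes multiplicatively: distinguishing $\rho_0^{\otimes m}$ from $\rho_1^{\otimes m}$ with constant advantage requires $m\,H^2(\rho_0,\rho_1) = \Omega(1)$. For the diagonal (hence classical) family, $H^2(\rho_0,\rho_1)$ is the Hellinger distance between $\mathrm{Ber}(\mu_0)$ and $\mathrm{Ber}(\mu_1)$, which for small $\mu_0$ and $\eta\ll\mu_0$ behaves as $\Theta(\eta^2/\mu_0)$. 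Substituting $\eta$ gives $m = \Omega\!\big(\mu_0(1-\mu_0)^{2(k-1)}\norm{O}^2 k^2/\epsilon^2\big)$, and the remaining freedom in $\mu_0$ is spent maximizing $\mu_0(1-\mu_0)^{2(k-1)}$; the optimum sits at $\mu_0 = \Theta(1/k)$, where this quantity is $\Theta(1/k)$ and $(1-\mu_0)^{k-1}=\Theta(1)$, yielding the claimed $m = \Omega(k\norm{O}^2/\epsilon^2)$.

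I expect the main obstacle to be obtaining the \emph{product} $k\cdot\norm{O}^2/\epsilon^2$ rather than merely the maximum of the two factors: this forces a single hard instance in which the $k$-fold derivative amplification and the statistical estimation cost are balanced simultaneously, which is exactly what pins the eigenvalue $\mu_0$ to the sweet spot $\Theta(1/k)$. The supporting technical points are the rigorous tensorization step—bounding the discrimination probability via the affinity $F(\rho_0,\rho_1)^{2m}$ and converting it into the additive statement $m\,H^2 = \Omega(1)$—and verifying that the linearization of $(1-\mu_0)^k$ and the expansion $H^2 \approx \eta^2/(4\mu_0)$ hold uniformly at $\mu_0 \sim 1/k$, so that the leading-order estimates do not conceal hidden $k$-dependent constants.
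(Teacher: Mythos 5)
Your proposal is correct in outline and takes a genuinely different — and in one respect stronger — route than the paper. The paper proves the two factors \emph{separately}, using two unrelated constructions: first, a pure-vs-mixed pair $\rho_0=\ket{0}\bra{0}$, $\rho_1=(1-\tfrac{c}{k})\ket{0}\bra{0}+\tfrac{c}{k}\ket{1}\bra{1}$ with $O=\ket{0}\bra{0}$, where the gap $1-(1-c/k)^k$ is a constant and the Helstrom bound plus Fuchs--van de Graaf ($F(\rho_0^{\otimes m},\rho_1^{\otimes m})=(1-c/k)^m$) forces $m=\Omega(k)$; and second, a symmetric construction $\rho_{0/1}=(\tfrac12\pm\delta)P_+ + (\tfrac12\mp\delta)P_-$ with $\delta=\epsilon/(2\|O\|)$, which reduces to distinguishing biased coins and, via a KL-divergence bound $\mathcal{O}(\epsilon^2/\|O\|^2)$ and Le Cam, gives $\Omega(\|O\|^2/\epsilon^2)$ — but only for the linear statistic $\operatorname{Tr}(\rho O)$, since for $k\geq 2$ that construction's signal decays like $2^{-k}$. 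Taken together, the paper's argument therefore yields $\Omega(\max\{k,\|O\|^2/\epsilon^2\})$ rather than a single hard instance exhibiting the product. Your single construction — eigenvalue $\mu_0=\Theta(1/k)$, perturbation $\eta=\Theta(\epsilon/(k\|O\|))$ fixed by the derivative amplification $k(1-\mu_0)^{k-1}$, and a Hellinger/affinity tensorization $m\,H^2(\rho_0,\rho_1)=\Omega(1)$ with $H^2=\Theta(\eta^2/\mu_0)$ — balances both effects simultaneously and delivers the product $\Omega(k\|O\|^2/\epsilon^2)$ stated in the theorem, which is exactly what the paper's two-part argument leaves unproven. The technical caveats you flag are the right ones and are benign in the relevant regime: $\eta/\mu_0=\Theta(\epsilon/\|O\|)$, so the expansion $H^2\approx\eta^2/(4\mu_0)$ and the linearization of $(1-\mu)^k$ are uniformly valid whenever $\epsilon=\mathcal{O}(\|O\|)$, which is the only nontrivial regime since $|\operatorname{Tr}(\rho^k O)|\leq\|O\|$. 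One small correction: your intermediate Helstrom claim gives $m=\Omega(1/\eta)=\Omega(k\|O\|/\epsilon)$ via subadditivity of trace distance, which is consistent but strictly weaker than the Hellinger step, so the final bound should rest entirely on the latter.
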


\begin{proof}
The first step is to reduce the property estimation problem to a state discrimination problem~\cite{helstrom1969quantum}. Consider two states
\begin{equation}
    \rho_0 := |0\rangle\langle0|,\quad
  \rho_1 := (1-\epsilon')|0\rangle\langle0| + \epsilon'|1\rangle\langle1|,
\end{equation}
and the observable $O := |0\rangle\langle0|$. Then
\begin{equation}\label{eq:distinguish}
    \operatorname{Tr}(\rho_0^k O) = 1,\quad
\operatorname{Tr}(\rho_1^k O) = (1-\epsilon')^k,
\end{equation}
where $\epsilon'=\frac{c}{k}$ with $c\in(0,1)$ an absolute constant.
To distinguish these with constant bias, an algorithm must estimate the expectation of the accuracy of the order
$\Delta=1 - (1-\epsilon')^k$.

Suppose we have an algorithm $\mathcal{A}$ that, given $m$ copies of an unknown state $\rho \in\left\{\rho_0, \rho_1\right\}$, outputs an estimate $\hat{\mu}$ such that
$$
\left|\hat{\mu}-\operatorname{Tr}\left(\rho^k O\right)\right| \leq \varepsilon <\Delta/2
$$
with probability at least $2/3$. Then, $\mathcal{A}$ can be served as a discriminator for the two cases:
\begin{itemize}
    \item Run $\mathcal{A}$ on the $m$ copies of the unknown $\rho$ to obtain $\hat{\mu}$.
    \item Output the decision $\hat{b} \in\{0,1\}$ where:
    $\hat{b}=0$ if $\hat{\mu}>\frac{1+\left(1-\epsilon^{\prime}\right)^k}{2};~ \hat{b}=1$ otherwise.
\end{itemize}
Here, we choose the decision threshold to be the midpoint of the two values. The probability of successfully estimating the nonlinear property, then decides the success probability in discriminating whether the state is $\rho_0$ or $\rho_1$.

Now, invoking the results of Ref.~\cite{helstrom1969quantum}, the best success probability for distinguishing $\rho_0$ and $\rho_1$ using $m$ identical copies is given by
\begin{equation}
    \frac{1}{2}+\frac{1}{4}\left\|\rho_0^{\otimes m}-\rho_1^{\otimes m}\right\|_1=\frac{1}{2}+\frac{1}{2}T\left(\rho_0^{\otimes m}, \rho_1^{\otimes m}\right),
\end{equation}
where $T(\cdot, \cdot)$ is the trace distance.
Invoking the Fuchs–van de Graaf inequalities~\cite{fuchs2002cryptographic}: $1-\sqrt{F(\rho, \sigma)} \leq T(\rho, \sigma) \leq \sqrt{1-F(\rho, \sigma)}$, where $F(\rho, \sigma)$ is the fidelity for the input states $\rho$ and $\sigma$, we have the lower bound on the success probability as
\begin{equation}
    1-\frac{1}{2}\sqrt{F\left(\rho_0^{\otimes m}, \rho_1^{\otimes m}\right)}\leq \frac{1}{2}+\frac{1}{2}T\left(\rho_0^{\otimes m}, \rho_1^{\otimes m}\right).
\end{equation}
Direct computation shows that $F\left(\rho_0^{\otimes m}, \rho_1^{\otimes m}\right)=(1-\epsilon')^m$. Therefore, to let $1-\frac{1}{2}\sqrt{F\left(\rho_0^{\otimes m}, \rho_1^{\otimes m}\right)}=1-\frac{1}{2}(1-\epsilon')^{m/2}\geq \frac{2}{3}$, we have $(1-\epsilon')^{m}\leq \frac{4}{9}$. For $k$ to be large, we find $(1-\epsilon')^{m}\approx 1-m\epsilon'=1-\frac{mc}{k}$. This yields $\frac{mc}{k}=\Theta(1)$ only when $m=\Omega(k)$.

We next account for the lower bound due to shot noise (i.e.,~on $\norm{O}$ and $\epsilon$). Our construction is based on distinguishing two probability distributions. Assume the eigendecomposition of $O$ as
\begin{equation}
    O=\|O\| P_{+}-\|O\| P_{-},
\end{equation}
where $P_+$ and $P_-$ are the projectors onto the eigenspace $\norm{O}$ and $-\norm{O}$, respectively. Define two new density operators 
\begin{equation}
\begin{aligned}
    \rho_0:=&\left(\frac{1}{2}+\delta\right) P_{+}+\left(\frac{1}{2}-\delta\right) P_{-},\\ \rho_1:=&\left(\frac{1}{2}-\delta\right) P_{+}+\left(\frac{1}{2}+\delta\right) P_{-}
\end{aligned}
\end{equation}
with $\delta=\frac{\epsilon}{2\|O\|}$. This gives us
\begin{equation}
    \begin{aligned}
    \operatorname{Tr}\left(\rho_0 O\right)&=\|O\|\left[\left(\frac{1}{2}+\delta\right)-\left(\frac{1}{2}-\delta\right)\right]=2 \delta\|O\|=\epsilon,\\
    \operatorname{Tr}\left(\rho_1 O\right)&=\|O\|\left[\left(\frac{1}{2}-\delta\right)-\left(\frac{1}{2}+\delta\right)\right]=-2 \delta\|O\|=-\epsilon.
\end{aligned}
\end{equation}

We now reduce it to a classical two-point problem. We perform the projective measurement in the $\{P_+,P_-\}$ basis on each copy of $\rho$. This converts the quantum problem into the classical task of distinguishing two biased coin distributions on outcomes $\{+1, -1\}$: For $\rho_0$, we obtain `+' with $\frac{1}{2}+\delta$ and `-' with $\frac{1}{2}-\delta$; For $\rho_1$, we obtain `+' with $\frac{1}{2}-\delta$ and `-' with $\frac{1}{2}+\delta$. This results in the difference in the expectation value as
\begin{equation}
    \mathbb{E}_0[X]- \mathbb{E}_1[X]=(2 \delta)-(-2 \delta)=4 \delta .
\end{equation}
Yet, since each coin flip in the quantum case we see is rescaled by the value $\norm{O}$, the difference becomes $4\delta\norm{O}=2\epsilon$, meaning an $\mathcal{O}(\epsilon)$-estimation to the observable. 

The KL divergence between the two coin distributions satisfies
\begin{equation}
    D\left(\frac{1}{2}+\delta \bigg\| \frac{1}{2}-\delta\right)=\mathcal{O}\left(\delta^2\right)=\mathcal{O}\left(\frac{\epsilon^2}{\|O\|^2}\right) .
\end{equation}
Now, we invoke Le Cam’s two-point bound~\cite{lecam1973convergence,tsybakov2009introduction}, which states that any procedure that, with constant success probability, distinguishes $\rho_0^{\otimes m'}$ from $\rho_1^{\otimes m'}$ (equivalently, estimates $\tr{\rho O}$ to error $\epsilon$) must use
\begin{equation}
    m'=\Omega\left(\frac{1}{D}\right)=\Omega\left(\frac{\|O\|^2}{\epsilon^2}\right)
\end{equation}
copies.


\end{proof}

\subsection{BQP-completeness of the nonlinear property estimation problem}

We formulate the decision version of the nonlinear property estimation problem and prove the BQP-completeness of the problem.

\begin{theorem}[BQP-completeness of the decision problem for $\operatorname{Tr}(\rho^k O)$, formal version of Theorem \ref{thm:bqp}]\label{thm:tr-bqp}
For an integer $k\ge 1$, define the promise problem $\mathrm{NonLinear}_k$ as follows. The input is a classical description of:
\begin{enumerate}
\item A polynomial-size (in $n$) quantum circuit $C_{\Gamma}$ that prepares a purification $\ket{\Gamma}_{ASR}$ of an $m$-qubit state $\rho$ via
\[\rho = \operatorname{Tr}_A\big(\ket{\Gamma}\bra{\Gamma}_{ASR}\big),\]
where $A$ is an ancilla register and $SR$ denotes the system on which $\rho$ acts;
\item A succinct classical description of an observable $O$ of operator norm $\norm{O}=\mathcal{O}(1)$; and
\item Thresholds $a,b$ with the promise gap $a-b\ge 1/\operatorname{poly}(n)$, where $n$ denotes the input length.
\end{enumerate}
The promise is that either $\operatorname{Tr}(\rho^k O)\ge a$ (YES) or $\operatorname{Tr}(\rho^k O)\le b$ (NO). Then, $\mathrm{NonLinear}_k$ is BQP-complete.
\end{theorem}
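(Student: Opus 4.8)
The plan is to prove BQP-completeness by establishing the two directions separately: membership in BQP (containment) and BQP-hardness. Both reduce to results already available, so the work is mostly in checking that all parameters and promise conditions stay within the required polynomial bounds.

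\emph{Containment.} The first observation is that the circuit $C_\Gamma$ is itself a purified-access unitary $U_\rho$ in the sense of the purified query model: taking the Schmidt decomposition $\ket{\Gamma}_{ASR}=\sum_i\sqrt{p_i}\ket{\phi_i}_A\ket{\psi_i}_{SR}$, the ancilla $A$ plays the role of the environment register and $SR$ plays the role of the system register, and $\tr{\ket{\Gamma}\bra{\Gamma}}$ over $A$ returns $\rho=\sum_i p_i\ket{\psi_i}\bra{\psi_i}$. Hence Theorem~\ref{thm:main} applies verbatim: I would run the algorithm with target additive error $\epsilon=(a-b)/3$ to obtain an estimate $\widetilde E$ of $\tr{\rho^k O}$. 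Since $k$ is fixed, $\norm{O}=\Ord{1}$, and the promise gap forces $1/\epsilon=\Ord{\poly(n)}$, the query count $\Ord{\sqrt{k\log(\norm{O}/\epsilon)}\norm{O}/\epsilon}=\Ord{\poly(n)}$, and each query to $U_\rho=C_\Gamma$ is a poly-size circuit, so the full procedure runs in quantum polynomial time. Thresholding $\widetilde E$ at $(a+b)/2$ then decides $\mathrm{NonLinear}_k$; the algorithm's $8/\pi^2>2/3$ success probability already meets the BQP threshold and can be amplified by majority voting. This places $\mathrm{NonLinear}_k$ in BQP.

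\emph{Hardness.} For BQP-hardness I would reduce an arbitrary promise problem $L\in\mathrm{BQP}$ whose (amplified) verifier $V_x$ accepts with probability $\ge 2/3$ on YES instances and $\le 1/3$ on NO instances. Given $x$, set $\ket{\psi_x}=V_x\ket{0^n}$ and $\rho=\ket{\psi_x}\bra{\psi_x}$, a pure state of rank one (hence rank $\Ord{1}$) whose purification is prepared by $C_\Gamma=V_x$ with the ancilla $A$ taken trivial; this is a poly-size circuit. Let $O$ be the projector onto the output qubit being $\ket{1}$, so $\norm{O}=1=\Ord{1}$ with an obvious succinct description. Because $\rho$ is a rank-one projector, $\rho^k=\rho$ for every $k\ge 1$, whence $\tr{\rho^k O}=\bra{\psi_x}O\ket{\psi_x}=\Pr[V_x\text{ accepts}]$. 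Choosing $a=2/3$ and $b=1/3$ gives gap $a-b=1/3\ge 1/\poly(n)$, and the map $x\mapsto(C_\Gamma,O,a,b)$ is computable in classical polynomial time, completing the reduction.

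\emph{Main obstacle.} Each direction is short, and the genuine care is in verifying that the reduced instance simultaneously satisfies every promise condition---rank $\Ord{1}$, $\norm{O}=\Ord{1}$, gap $\ge 1/\poly(n)$, and a poly-size purification circuit---which the pure-state construction meets by design. The point worth flagging is that this construction collapses the $k$-dependence (for pure $\rho$ one has $\rho^k=\rho$), so the hardness is effectively inherited from the $k=1$ case. This is legitimate for the general-observable formulation of Theorem~\ref{thm:tr-bqp}, but it does \emph{not} cover the degenerate purity-type quantity $\tr{\rho^k}$ with $O=I$, where pure states give the constant $1$ and carry no information. Proving hardness in that special case would instead demand a genuinely mixed, rank-$\Ord{1}$ construction---e.g.\ a two-level state $p\ket{0}\bra{0}+(1-p)\ket{1}\bra{1}$ with $p$ the acceptance probability---together with an amplification of $V_x$ that breaks the $p\leftrightarrow 1-p$ symmetry of $p^k+(1-p)^k$ so that the YES and NO cases map to well-separated values of $\tr{\rho^k}$; I expect this symmetry-breaking to be the only delicate step should that variant be needed.
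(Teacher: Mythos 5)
Your containment argument coincides with the paper's (run the Theorem~\ref{thm:main} algorithm with accuracy $(a-b)/3$ and threshold at the midpoint of $a$ and $b$), but your hardness reduction takes a genuinely different route. The paper does not use a pure state: it prepares a rank-two mixed state $\rho=\lambda\ket{v_1}\bra{v_1}+(1-\lambda)\ket{v_2}\bra{v_2}$ with $\lambda=1-\tfrac{1}{q(n)k}$, $\ket{v_1}=\ket{0}_S\ket{\psi_x}_R$, $\ket{v_2}=\ket{1}_S\ket{\phi}_R$, and observable $O=\ket{0}\bra{0}_S\otimes\Pi_{\mathrm{acc}}$, so that $\tr{\rho^k O}=\lambda^k p_x$; it then invokes Bernoulli's inequality to check that $\lambda^k\ge 1-1/q(n)$ keeps the promise gap inverse-polynomial. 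Your pure-state construction ($\rho=\ket{\psi_x}\bra{\psi_x}$, hence $\rho^k=\rho$ and $\tr{\rho^k O}=p_x$ exactly) is simpler and is valid for the theorem as stated: the formal statement imposes no lower bound on the rank, rank one is certainly $\mathcal{O}(1)$, and your gap is the constant $1/3$ with no degradation, so no Bernoulli-type estimate is needed. What the paper's extra machinery buys is that hardness holds even on instances where $\rho$ is genuinely mixed and the $k$-th power nontrivially reweights the spectrum, i.e., the nonlinearity is not collapsed away; what your version buys is brevity and exactness. Your closing observation is also accurate and worth keeping: neither your reduction nor the paper's covers the $O=I$ quantity $\tr{\rho^k}$ that appears verbatim in the informal Theorem~\ref{thm:bqp}, since the paper's observable is likewise a nontrivial projector, so that case would indeed require the symmetry-breaking mixed-state construction you sketch.
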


\begin{proof}
The containment in $\mathrm{BQP}$~\cite{kitaev2002classical,watrous2018theory} is seen by the problem is efficiently solved by the algorithm proposed in this work or the generalized swap test by setting the accuracy to be small compared to the promise gap, such as $(a-b)/3$.

We prove the hardness results by a polynomial-time many-one reduction from an arbitrary language $L\in\mathrm{BQP}$. Let $x$ be an instance of $L$ of length $n$. By the definition of BQP there exists a polynomial-size circuit $U_x$ acting on $r(n)$ qubits that prepares a pure state
\begin{equation}
    \ket{\psi_x} = U_x\ket{0^{r(n)}},
\end{equation}
and a single-qubit acceptance projector $\Pi_{\mathrm{acc}}$ (w.l.o.g.~the projector $\ket{1}\bra{1}$ on the first qubit) such that the acceptance probability
\begin{equation}
    p_x := \bra{\psi_x}\Pi_{\mathrm{acc}}\ket{\psi_x}
\end{equation}
satisfies the completeness/soundness gap
\begin{equation}
    p_x \ge 2/3 \quad\text{if } x\in L,\quad p_x \le 1/3 \quad\text{if } x\notin L.
\end{equation}
The strategy is to construct, in polynomial time, an instance of $\mathrm{NonLinear}_k$ whose value of $\tr{\rho^k O}$ encodes $p_x$ that preserves an inverse-polynomial gap.

Specifically, for a a polynomial $q(n)$ (to be specified) in $n$, we set
\begin{equation}\label{eq:lambda-def}
\lambda := 1 - \frac{1}{q(n)k},
\end{equation}
so that $0<\lambda<1$ and $\lambda$ is efficiently computable given $n$ and $k$. Define two orthonormal basis states on a one-qubit register $S$ by $\ket{0}_S,\ket{1}_S$. Let $R$ denote an $r(n)$-qubit register on which $\ket{\psi_x}$ resides, and fix an efficiently preparable state $\ket{\phi}_R$ satisfying $\bra{\phi}\Pi_{\mathrm{acc}}\ket{\phi} = 0$.

Prepare the purification on registers $A\otimes S\otimes R$ by the polynomial-size circuit $C_{\Gamma}$ that outputs
\begin{equation}\label{eq:puri}
\ket{\Gamma_x}_{ASR} = \sqrt{\lambda}\,\ket{0}_A\ket{0}_S\ket{\psi_x}_R + \sqrt{1-\lambda}\,\ket{1}_A\ket{1}_S\ket{\phi}_R.
\end{equation}
Tracing out $A$ yields the mixed state on $SR$:
\begin{equation}\label{eq:rho-decomp}
\rho = \lambda\,\ket{v_1}\bra{v_1} + (1-\lambda)\,\ket{v_2}\bra{v_2},
\end{equation}
where
\[\ket{v_1} := \ket{0}_S\ket{\psi_x}_R,\qquad \ket{v_2} := \ket{1}_S\ket{\phi}_R.
\]
By construction $\bra{v_1}v_2\rangle=0$, we have the spectral decomposition \eqref{eq:rho-decomp} is exact and the eigenvalues of $\rho$ are $\lambda$ and $1-\lambda$.

Define the observable
\begin{equation}\label{eq:O-def}
O := \ket{0}\bra{0}_S \otimes \Pi_{\mathrm{acc}}^{(R)}.
\end{equation}
Then, we obtain
\begin{equation}
    \bra{v_1}O\ket{v_1} = \bra{\psi_x}\Pi_{\mathrm{acc}}\ket{\psi_x} = p_x,\quad \bra{v_2}O\ket{v_2} = 0.
\end{equation}
Since $\rho$ is diagonal in the orthonormal basis $\{\ket{v_1},\ket{v_2}\}$, we have for any integer $k\ge 1$ the result:
\begin{equation}\label{eq:trace-rhok-O}
\operatorname{Tr}(\rho^k O) = \lambda^k\, p_x.
\end{equation}

We now analyze the gap. By construction, if $x\in L$ then $p_x\ge 2/3$; and if $x\notin L$ then $p_x\le 1/3$. Consequently, it yields
\begin{equation}
    \operatorname{Tr}(\rho^k O) \ge \lambda^k\cdot\frac{2}{3} \quad\text{when } x\in L,\quad
\operatorname{Tr}(\rho^k O) \le \lambda^k\cdot\frac{1}{3} \quad\text{when } x\notin L.
\end{equation}
Thus, the separation between YES and NO instances reads
\begin{equation}
    \Delta := \frac{1}{3}\,\lambda^k.    
\end{equation}
It remains only to choose $q(n)$ so that $\Delta\ge 1/\operatorname{poly}(n)$.

Using Bernoulli's inequality~\cite{carothers2000real} $\Big(1 - \frac{1}{t}\Big)^k \ge 1 - \frac{k}{t}$ for $t \ge k \ge 1$ with $k$ integer, we obtain for $t=q(n)k$ that
$$\lambda^k = \Big(1-\frac{1}{q(n)k}\Big)^k \ge 1 - \frac{1}{q(n)}.$$
Therefore, by picking any polynomial $q(n)$ that grows faster than a fixed polynomial (for instance $q(n)=n^{c}$ for sufficiently large constant $c$), we ensure
\begin{equation}
    \Delta = \frac{1}{3}\,\lambda^k \ge \frac{1}{3}\Big(1-\frac{1}{q(n)}\Big) \ge \frac{1}{\operatorname{poly}(n)}.
\end{equation}
As such, it preserves an inverse-polynomial promise gap between the YES and NO cases.

Now, we specify the complexity of the above reduction. The map $x\mapsto (C_{\Gamma},O,k,a,b)$ is computable in polynomial time: i) preparing the circuit $C_{\Gamma}$ requires one single-qubit rotation (to realize the amplitudes $\sqrt{\lambda},\sqrt{1-\lambda}$), ii) one controlled invocation of $U_x$, and iii) fixed state preparation for $\ket{\phi}$. These operations are implementable by polynomial-size circuits. The observable $O$ defined in \eqref{eq:O-def} is also specified succinctly with operator norm $\norm{O}=\mathcal{O}(1)$. The thresholds may be taken as
\begin{equation}
    a := \lambda^k\cdot \frac{2}{3},\qquad b := \lambda^k\cdot \frac{1}{3},
\end{equation}
which satisfy $a-b=\Delta\ge 1/\operatorname{poly}(n)$ as shown above.

Combining the containment in BQP and hardness results, we have the $\mathrm{NonLinear}_k$ problem to be BQP-complete.
\end{proof}

\clearpage

\onecolumngrid
\section*{Supplementary Information}
\appendix

\section{Background}
\label{sec:background}

\subsection{Quantum properties of density matrices and algorithmic tools}

The study of nonlinear properties of quantum states forms a cornerstone of quantum information theory. For a density matrix $\rho$ acting on a Hilbert space $\mathcal{H}$, we often need to evaluate expressions of the form $\tr{\rho^k O}$ where $O$ is an observable and $k$ is a positive integer. These nonlinear functionals encode information that cannot be extracted through simple linear measurements.

To understand why these properties are fundamental, consider the spectral decomposition of a density matrix:
\begin{equation}
\rho = \sum_{i} \lambda_i \ket{\psi_i}\bra{\psi_i},
\end{equation}
where $\lambda_i \geq 0$ are the eigenvalues satisfying $\sum_i \lambda_i = 1$, and $\{\ket{\psi_i}\}$ form an orthonormal basis. When we compute $\rho^k$, we obtain:
\begin{equation}
\rho^k = \sum_{i} \lambda_i^k \ket{\psi_i}\bra{\psi_i}.
\end{equation}

The trace of this quantity, $\tr{\rho^k} = \sum_i \lambda_i^k$, is known as the $k$-th moment of the eigenvalue distribution. For instance, $\tr{\rho^2}$ is the purity, which equals 1 for pure states and is less than 1 for mixed states.

We now introduce some quantum algorithmic tools that will be useful later.
\begin{definition}[Block Encoding]
\label{def:block_encoding}
Let $A \in \mathbb{C}^{2^n \times 2^n}$ be a matrix. A unitary operator $U \in \mathbb{C}^{2^{m+n} \times 2^{m+n}}$ is called an $(\alpha_A, m, \epsilon)$-block encoding of $A$ if
\begin{equation}\label{eq:be}
\left\| A - \alpha_A \left( \bra{0^m} \otimes I_n \right) U_A \left( \ket{0^m} \otimes I_n \right) \right\| \leq \epsilon,
\end{equation}
where $\alpha_A \geq \|A\|$ is the normalization factor, $m$ is the number of ancilla qubits, and $\epsilon \geq 0$ is the encoding error.
\end{definition}
Here, we use the simple definition of the BE that a unitary dilates the target matrix in its upper left corner. The language of BE gives us a simple way to manage (especially non-unitary) matrices or functions of matrices in a unified way.

For our study, two properties of the BE are particularly interesting. The first one is that given the BE of two matrices, we can block encode their product.
\begin{lemma}[Product of multiple block-encoded matrices {\cite[Lemma 53]{gilyen2019quantum}}]\label{lemma:be_prod}
Let $U_A$ and $U_B$ be $(\alpha_A, r, \epsilon)$-block encoding of $n$-qubit operator $A$ and $(\alpha_B, s, \delta)$-block encoding of $n$-qubit operator $B$. Then, $(I_B\otimes U_A)(I_A\otimes U_B)$ is a $(\alpha_A\alpha_B, r+s, \alpha_A\delta+\alpha_B\epsilon)$-block encoding of $AB$.
\end{lemma}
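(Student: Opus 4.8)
The plan is to establish two facts in sequence. First, that the exact (error-free) block extracted from $V := (I_B\otimes U_A)(I_A\otimes U_B)$ is \emph{precisely} the product of the exact blocks of $U_A$ and $U_B$; and second, to control the discrepancy between this product and the ideal operator $AB$ by a short telescoping argument. Note at the outset that $V$ is manifestly unitary, being a product of two unitaries, so it is a legitimate block-encoding object and Definition~\ref{def:block_encoding} applies.

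First I would fix the tensor structure. Write the ancilla space as $\mathcal{H}_A\otimes\mathcal{H}_B$ with $\dim\mathcal{H}_A = 2^r$, $\dim\mathcal{H}_B = 2^s$, and let $\mathcal{H}_S$ (of dimension $2^n$) be the system register. Here $I_B\otimes U_A$ denotes the operator acting as $U_A$ on the $A$-ancilla and system while acting as the identity on the $B$-ancilla, and symmetrically for $I_A\otimes U_B$. Introduce the exact blocks $A_0 := (\bra{0^r}\otimes I_S)U_A(\ket{0^r}\otimes I_S)$ and $B_0 := (\bra{0^s}\otimes I_S)U_B(\ket{0^s}\otimes I_S)$, so that $\tilde A := \alpha_A A_0$ and $\tilde B := \alpha_B B_0$ satisfy $\norm{A - \tilde A}\leq\epsilon$ and $\norm{B - \tilde B}\leq\delta$ directly from Definition~\ref{def:block_encoding}.

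The key step is the block-multiplication identity. I would compute $(\bra{0^{r+s}}\otimes I_S)\,V\,(\ket{0^{r+s}}\otimes I_S)$ by exploiting that $I_B\otimes U_A$ commutes with every operator supported on the $B$-ancilla, while $I_A\otimes U_B$ commutes with every operator supported on the $A$-ancilla. Sliding the $B$-ancilla bra $\bra{0^s}_B$ leftward through $I_B\otimes U_A$ and the $A$-ancilla ket $\ket{0^r}_A$ rightward through $I_A\otimes U_B$ collapses the expression to $A_0 B_0$. The only nonobvious manipulation is that the $S$-operator $B_0$, once produced by projecting the right-hand unitary, can be pulled to the right of $U_A$ via the identity $\ket{0^r}_A\otimes B_0 = (\ket{0^r}_A\otimes I_S)\,B_0$. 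This yields $\alpha_A\alpha_B\,(\bra{0^{r+s}}\otimes I_S)\,V\,(\ket{0^{r+s}}\otimes I_S) = \tilde A\tilde B$ exactly.

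Finally, for the error I would telescope $AB - \tilde A\tilde B = (A - \tilde A)B + \tilde A(B - \tilde B)$ and apply submultiplicativity of the operator norm to get $\norm{AB - \tilde A\tilde B}\leq\epsilon\,\norm{B} + \norm{\tilde A}\,\delta$. The two remaining inputs are $\norm{B}\leq\alpha_B$, which holds by definition of the normalization factor, and $\norm{\tilde A} = \alpha_A\norm{A_0}\leq\alpha_A$, which follows because compressing the unitary $U_A$ between the ancilla isometry $\ket{0^r}\otimes I_S$ and coisometry $\bra{0^r}\otimes I_S$ cannot increase its norm (so $\norm{A_0}\leq\norm{U_A}=1$). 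Together these give $\norm{AB - \tilde A\tilde B}\leq\alpha_B\epsilon + \alpha_A\delta$, identifying $V$ as an $(\alpha_A\alpha_B,\,r+s,\,\alpha_A\delta+\alpha_B\epsilon)$-block encoding of $AB$. I expect the main obstacle to be purely bookkeeping---keeping the two ancilla registers and their projectors straight through the commutations---rather than any analytic difficulty, with the lone genuinely substantive observation being the norm bound $\norm{A_0}\leq 1$ on the compressed unitary.
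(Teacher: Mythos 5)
Your proof is correct: the block-multiplication identity $(\bra{0^{r+s}}\otimes I)V(\ket{0^{r+s}}\otimes I)=A_0B_0$, the telescoping $AB-\tilde A\tilde B=(A-\tilde A)B+\tilde A(B-\tilde B)$, and the bounds $\norm{B}\leq\alpha_B$, $\norm{A_0}\leq 1$ are exactly the standard argument. The paper itself offers no proof of this lemma---it imports it verbatim as Lemma 53 of Gily\'en et al.---and your argument is essentially the proof given in that reference, so there is nothing to reconcile.
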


We then introduce the purified quantum query access of a density matrix.
\begin{definition}[Purified quantum query access]\label{def:purified}
Let $\rho \in \mathbb{C}^{N \times N}$ with $N=2^n$ be a density matrix of eigen-decomposition $\rho=\sum_{i=1}^N p_i\ket{\psi_i}\bra{\psi_i}$. It has purified quantum query access if we have access to the unitary oracle $U_\rho$ such that
\begin{equation}\label{eq:purified state}
    U_\rho|0\rangle_E|0\rangle_I=\left|\rho\right\rangle_{E I}=\sum_{i=1}^N \sqrt{p_i}\left|\phi_i\right\rangle_E\left|\psi_i\right\rangle_I, \quad \text { where }\left\langle\phi_i | \phi_j\right\rangle=\left\langle\psi_i | \psi_j\right\rangle=\delta_{i j}.
\end{equation}
The density matrix then acts as the reduced density matrix of the purified state $\operatorname{Tr}_E\left(\left|\rho\right\rangle\left\langle\rho\right|\right)=\rho$.
\end{definition}

The second is that given the purified quantum query access $U_\rho$ of $\rho$, we can construct an error-free BE of $\rho$ with two queries to $U_\rho$ or $U_\rho^\dagger$.
\begin{lemma}[Block encoding of $\rho$ {\cite[Lemma 45]{gilyen2019quantum}}]\label{lemma:be_dm}
Given a $n+a$-qubit unitary $U_\rho$, which is the quantum purified query access of $\rho$. The unitary
$$
\widetilde{U}_\rho=\left(U_\rho^{\dagger} \otimes I_n\right)\left(I_a \otimes \operatorname{SWAP}_n\right)\left(U_\rho \otimes I_n\right)
$$
is an $(1, a+n, 0)$ block encoding of $\rho$.
\end{lemma}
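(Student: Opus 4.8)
The plan is to verify the defining block-encoding equation of Definition~\ref{def:block_encoding} directly, by computing the action of $\widetilde{U}_\rho$ on an arbitrary product state in which the $a+n$ ancilla qubits (registers $E$ and $I$) are initialized to $\ket{0}$ and the remaining $n$-qubit system register, call it $I'$, holds an arbitrary computational basis state $\ket{j}$. It suffices to show
\begin{equation}\nonumber
\left(\bra{0}_E\bra{0}_I \otimes I_{I'}\right)\,\widetilde{U}_\rho\,\left(\ket{0}_E\ket{0}_I \otimes I_{I'}\right) = \rho,
\end{equation}
since this is exactly the $(1, a+n, 0)$ block-encoding condition, with normalization $\alpha=1\geq \|\rho\|$ (as $\rho$ is a density matrix) and zero error, and $\widetilde{U}_\rho$ is unitary because it is a product of unitaries.

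First I would propagate the input through the three factors of $\widetilde{U}_\rho$ from right to left. Acting with $U_\rho\otimes I_n$ uses the purified access \eqref{eq:purified state} to produce $\sum_i \sqrt{p_i}\ket{\phi_i}_E\ket{\psi_i}_I\ket{j}_{I'}$. The middle factor $I_a\otimes\operatorname{SWAP}_n$ exchanges registers $I$ and $I'$, yielding $\sum_i \sqrt{p_i}\ket{\phi_i}_E\ket{j}_I\ket{\psi_i}_{I'}$; this swap is the crux of the construction, as it moves the eigenvector label into the system register while leaving the basis state to be uncomputed. Finally $U_\rho^\dagger\otimes I_n$ acts on $E\otimes I$, and projecting $E,I$ onto $\ket{0}$ amounts to evaluating $\bra{0}_E\bra{0}_I U_\rho^\dagger$ on each term. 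The key identity is that this bra is precisely $\bra{\rho} = \sum_\ell \sqrt{p_\ell}\bra{\phi_\ell}_E\bra{\psi_\ell}_I$, since taking the adjoint of $U_\rho\ket{0}_E\ket{0}_I=\ket{\rho}$ gives $\bra{0}_E\bra{0}_I U_\rho^\dagger=\bra{\rho}$.

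The computation then collapses by orthonormality: using $\braket{\phi_\ell|\phi_i}=\delta_{\ell i}$,
\begin{equation}\nonumber
\bra{0}_E\bra{0}_I U_\rho^\dagger\,\ket{\phi_i}_E\ket{j}_I = \sum_\ell\sqrt{p_\ell}\,\delta_{\ell i}\braket{\psi_\ell|j} = \sqrt{p_i}\braket{\psi_i|j}.
\end{equation}
Summing the two $\sqrt{p_i}$ factors (one from the forward step, one from the projection) gives $\sum_i p_i\braket{\psi_i|j}\ket{\psi_i}_{I'}=\rho\ket{j}_{I'}$, which establishes the claimed identity for every basis state $\ket{j}$ and hence as an operator on $I'$.

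I expect the only genuine obstacle to be careful register bookkeeping: one must keep straight which of the two $n$-qubit registers is the block-encoding system ($I'$) versus the ancilla that is uncomputed ($I$), and confirm that $\operatorname{SWAP}_n$ acts between exactly these two registers. Conceptually, the reason the construction works is that without the swap the $U_\rho^\dagger$ step would simply undo $U_\rho$ and return $\ket{0}_E\ket{0}_I$ with no dependence on $\rho$; the swap routes the state through both orthonormal families $\{\ket{\phi_i}\}$ and $\{\ket{\psi_i}\}$, so that the two $\sqrt{p_i}$ overlaps multiply to the eigenvalue $p_i$ and reproduce $\rho$ exactly. Everything beyond this is a routine orthonormality calculation.
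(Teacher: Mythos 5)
Your proof is correct: propagating $\ket{0}_E\ket{0}_I\ket{j}_{I'}$ through the three factors, using $\bra{0}_E\bra{0}_I U_\rho^\dagger=\bra{\rho}$, and collapsing the double sum by orthonormality of $\{\ket{\phi_i}\}$ to obtain $\sum_i p_i\braket{\psi_i|j}\ket{\psi_i}_{I'}=\rho\ket{j}_{I'}$ establishes exactly the $(1,a+n,0)$ block-encoding condition. The paper itself gives no proof of this lemma (it imports it as Lemma 45 of Gily\'en et al.), and your direct verification is essentially the same computation as in that cited reference, merely organized column-by-column rather than as a bilinear form $\bra{\omega}\rho\ket{\psi}$.
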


Now, a major tool in our work is the QSVT algorithm. Conceptually speaking, for hermitian matrices, the QSVT can be viewed as performing a function of the matrices. That is, given $\rho$ with its eigen-decomposition defined above, the QSVT algorithm can implement the map
\begin{equation}
    \rho \mapsto p(\rho)=\sum_i p(\rho)=\sum_i p(\lambda_i)\ket{\psi_i}\bra{\psi_i},
\end{equation}
where $p(x)$ is a polynomial function. More formally, provided with BE of $\rho$, one can block encode $p(\rho)$ with the following conditions.
\begin{lemma}[matrix function of real polynomials {\cite[Adapted from Corollary 18]{gilyen2019quantum}}]\label{lemma:qsvt}
Let $U$ be a $(\alpha_A, a, 0)$-block encoding of matrix $A$. Let $p_{\Re} \in \mathbb{R}[x]$ be a degree-$m$ polynomial function such that
\begin{itemize}
    \item $p_{\Re}$ has parity $(m~\text{mod}~2)$ and
    \item $|p_{\Re}|\leq 1,~\forall x\in[-1,1]$.
\end{itemize}
Then, there exists an algorithm construct the $(1,a+1,0)$-block encoding of $p_{\Re}^{\mathrm{(SV)}}(A/\alpha_A)$ with $p_{\Re}^{\mathrm{(SV)}}$ denote the function acting on singular values using $m$ queries to $U$, $U^\dagger$ and $\mathcal{O}((a+1)m)$ other primitive quantum gates.
\end{lemma}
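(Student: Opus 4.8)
The plan is to reduce the statement to the two pillars of the quantum singular value transformation (QSVT) framework of Ref.~\cite{gilyen2019quantum}: first, a \emph{quantum signal processing} (QSP) realization that converts the given block encoding into a block encoding of a \emph{complex} polynomial applied to the singular values of $A/\alpha_A$; and second, a \emph{real-part extraction gadget} that distills the real target $p_{\Re}$ at the cost of a single additional ancilla and, crucially, \emph{no} additional queries to $U$.

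First I would set up the QSVT skeleton. Writing the exact $(\alpha_A,a,0)$-block encoding as $U$ with $\Pi := \ket{0^a}\bra{0^a}\otimes I_n$, the upper-left block $\Pi U \Pi$ equals $A/\alpha_A$ exactly, whose singular values lie in $[0,1]$ because $\alpha_A \ge \norm{A}$. Fixing a singular value decomposition $A/\alpha_A = W\Sigma V^\dagger$, the QSVT circuit interleaves the queries $U,U^\dagger$ with the projector-controlled phase operators $e^{i\phi_j(2\Pi - I)}$ governed by phase factors $\Phi=(\phi_0,\dots,\phi_m)$. The standard QSVT analysis shows that on each (at most) two-dimensional invariant subspace spanned by a right/left singular-vector pair, this alternating sequence acts as a single-qubit QSP iteration in the signal variable $\sigma_i \in [0,1]$, so that the block of the whole circuit realizes $P^{(\mathrm{SV})}(A/\alpha_A)$ for a degree-$m$ complex polynomial $P$ of parity $(m \bmod 2)$, using exactly $m$ queries to $U,U^\dagger$.

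Next I would invoke the QSP existence theorem to fix $P$. For a real degree-$m$ polynomial $p_{\Re}$ with the stated parity and $\abs{p_{\Re}}\le 1$ on $[-1,1]$, a Fej\'er--Riesz type completion guarantees a complementary polynomial such that there is a degree-$m$ complex $P$ with $\mathrm{Re}(P)=p_{\Re}$ and $\abs{P}\le 1$ on $[-1,1]$, and the corresponding phase factors $\Phi$ exist and are efficiently computable. This is precisely the content underlying \cite[Corollary~18]{gilyen2019quantum}, and it is where the genuine mathematical work lies; I therefore expect the \textbf{main obstacle} to be the existence/completion argument for the phase factors (the parity and sup-norm hypotheses in the lemma are exactly the conditions making this completion possible), rather than the circuit-level bookkeeping.

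Finally I would implement the real-part extraction. Adjoining one control qubit prepared in $\ket{+}$ and \emph{controlling only the phase operators} $e^{i\phi_j(2\Pi-I)}$—never the queries $U,U^\dagger$—lets a single run apply the sequence $\Phi$ or its negation $-\Phi$ conditioned on the control, whose block actions are $P^{(\mathrm{SV})}$ and $P^{\ast(\mathrm{SV})}$ respectively; closing with a Hadamard and postselecting $\ket{0}$ on the control yields $\tfrac{1}{2}\big(P^{(\mathrm{SV})}+P^{\ast(\mathrm{SV})}\big)=p_{\Re}^{(\mathrm{SV})}(A/\alpha_A)$. Because the two phase sequences share the same uncontrolled queries, the query count remains $m$; the gadget adds exactly one ancilla (hence $a+1$), contributes $\mathcal{O}((a+1)m)$ controlled phase gates, and introduces no approximation error since both the input block encoding and the QSP realization are exact. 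This delivers the claimed $(1,a+1,0)$-block encoding of $p_{\Re}^{(\mathrm{SV})}(A/\alpha_A)$ and completes the proof.
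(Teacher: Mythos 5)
Your proposal is correct and follows essentially the same route as the source: the paper does not reprove this lemma but imports it directly from Corollary~18 of Ref.~\cite{gilyen2019quantum}, whose proof is exactly your three-step argument---the alternating $U,U^\dagger$ QSVT circuit acting as single-qubit QSP on each singular-value subspace, the Fej\'er--Riesz-type completion guaranteeing a complex $P$ with $\mathrm{Re}(P)=p_{\Re}$ and valid phase factors, and the one-ancilla gadget that controls only the phase operators $e^{i\phi_j(2\Pi-I)}$ to average $P^{(\mathrm{SV})}$ and $P^{*(\mathrm{SV})}$ while sharing the $m$ queries. Your accounting of the query count, the single extra ancilla giving $(1,a+1,0)$, and the $\mathcal{O}((a+1)m)$ primitive gates all match the cited result, so there is nothing to correct.
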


\begin{remark}
For a hermitian matrix $A$, the mapping on the singular value automatically reduces to a mapping on the eigenvalue. As such, the method is sometimes called quantum eigenvalue transformation. This reduction is obvious for positive semi-definite matrices such as the density matrices, as their eigenvalues coincide with the singular values. For general cases, see \cite[Chapter 8]{lin2022lecture} for a detailed discussion.
\end{remark}

\begin{remark}
The result performed here has a parity constraint: the realizable polynomial is either even or odd. While it is possible to lift the constraint by combining even and odd polynomials, we find it unnecessary here as we only need a polynomial function with the same parity as $x^{k-1}$.
\end{remark}

\begin{lemma}[Quantum amplitude estimation, Ref.~\cite{brassard2000quantum}]\label{lemma:ae}
Suppose $U$ is an unitary operation acting on register $a$ and $b$ such that
$$
U|0\rangle_{ab}=\sqrt{p}|0\rangle_a\left|\phi\right\rangle_b+\sqrt{1-p}|1\rangle_a\left|\phi'\right\rangle_b,
$$
where $\left|\phi\right\rangle$ and $\left|\phi'\right\rangle$ are pure quantum states and $p \in[0,1]$. There exists a quantum algorithm that outputs an estimation $\tilde{p} \in[0,1]$ such that
$$
|\tilde{p}-p| \leq \frac{2 \pi \sqrt{p(1-p)}}{K}+\frac{\pi^2}{K^2}
$$
with probability at least $\frac{8}{\pi^2}$, using $O(K)$ queries to $U$ and $U^{\dagger}$.
\end{lemma}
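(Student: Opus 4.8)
The plan is to prove the amplitude-estimation bound by reducing it to quantum phase estimation on an amplitude-amplification (Grover) operator, in the manner of Brassard, H{\o}yer, Mosca, and Tapp. First I would parametrize $p=\sin^2\theta$ with $\theta\in[0,\pi/2]$ and split the output of $U$ into orthonormal ``good'' and ``bad'' components $\ket{\Psi_1}:=\ket{0}_a\ket{\phi}_b$ and $\ket{\Psi_0}:=\ket{1}_a\ket{\phi'}_b$, so that $U\ket{0}_{ab}=\sin\theta\,\ket{\Psi_1}+\cos\theta\,\ket{\Psi_0}$. These two vectors span a two-dimensional subspace $\mathcal{H}_\Psi$ in which the entire computation will take place.

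Next I would introduce the iterate $Q:=-\,U\,S_0\,U^\dagger\,S_\chi$, where $S_\chi:=I-2(\ket{0}\bra{0})_a\otimes I_b$ reflects about the good subspace flagged by the ancilla $a$, and $S_0:=I-2(\ket{0}\bra{0})_{ab}$ reflects about the initial state. The structural claim to verify is that $\mathcal{H}_\Psi$ is invariant under $Q$ and that the restriction of $Q$ to $\mathcal{H}_\Psi$ is a rotation by angle $2\theta$, so that its eigenvalues are $e^{\pm 2i\theta}$ and $U\ket{0}_{ab}$ is an equal-weight superposition of the two eigenvectors. Each application of $Q$ costs one query to $U$ and one to $U^\dagger$.

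I would then run phase estimation on $Q$ with input $U\ket{0}_{ab}$, using a control register that implements $\Ord{K}$ controlled powers of $Q$; reading off the nearest grid point returns an estimate $\tilde\theta$ of $\theta$. The standard Fourier-tail analysis of phase estimation gives $|\tilde\theta-\theta|\le \pi/K$ with probability at least $8/\pi^2$. Outputting $\tilde p:=\sin^2\tilde\theta$ and using the identity $\sin^2\tilde\theta-\sin^2\theta=\sin(\tilde\theta+\theta)\sin(\tilde\theta-\theta)$ together with $|\sin(\tilde\theta-\theta)|\le|\tilde\theta-\theta|$ and $|\sin(\tilde\theta+\theta)|\le 2\sqrt{p(1-p)}+|\tilde\theta-\theta|$ then yields $|\tilde p-p|\le \frac{2\pi\sqrt{p(1-p)}}{K}+\frac{\pi^2}{K^2}$, exactly as stated.

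The step I expect to be the main obstacle is the last one: pinning down the exact constants rather than a loose $\Ord{1/K}$ bound. Both the $8/\pi^2$ success probability---which comes from the precise value of the phase-estimation measurement distribution at its peak grid point---and the sharp $\sqrt{p(1-p)}$ prefactor in the leading term require a careful non-asymptotic analysis; propagating these through the $\sin^2$ conversion while retaining the quadratic correction $\pi^2/K^2$ is the delicate bookkeeping on which the argument hinges.
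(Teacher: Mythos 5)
The paper does not prove this lemma at all --- it is imported verbatim (as Lemma~\ref{lemma:ae}) by citation to Brassard--H{\o}yer--Mosca--Tapp, so there is no in-paper proof to compare against. Your proposal correctly reconstructs exactly the argument of that cited reference: the Grover iterate $Q=-US_0U^\dagger S_\chi$ acting as a rotation by $2\theta$ on the two-dimensional invariant subspace, phase estimation yielding $|\tilde\theta-\theta|\le\pi/K$ with probability at least $8/\pi^2$ (the two-nearest-grid-points bound), and the conversion $|\sin^2\tilde\theta-\sin^2\theta|=|\sin(\tilde\theta+\theta)\sin(\tilde\theta-\theta)|\le 2\sqrt{p(1-p)}\,|\tilde\theta-\theta|+|\tilde\theta-\theta|^2$, which is precisely BHMT's Lemma~7; the ``delicate'' constants you flag are handled by these two standard non-asymptotic facts, so the outline is sound as it stands.
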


Thus, $\mathcal{O}(\epsilon^{-1})$ queries to $U$ and $U^\dagger$ suffices to estimate $p$ to accuracy $\epsilon$ with constant probability.

\subsection{Chebyshev Polynomials}
We introduce the tools in approximation theory~\cite{sachdeva2014faster,trefethen2019approximation} for approximating a target function using polynomials. The theory aims to provide the near-optimal uniform approximation.
That is, given a function $f \in C[a,b]$ and an approximant $g$ from a prescribed class, one seeks to minimize the supremum norm:
\begin{equation}
\|f-g\|_{\infty} = \sup_{x \in [a,b]} |f(x) - g(x)|.
\end{equation}
The best polynomial approximation is then given by the lowest degree polynomial function with desirable accuracy in the uniform approximation. The constructive aspect of (near) best uniform approximation is obtained by the Chebyshev polynomials of the first kind~\cite{sachdeva2014faster,trefethen2019approximation}, denoted as $T_j(x)$. The function provides (near) optimal performance for functions defined on the canonical interval $[-1,1]$, and can be defined via the trigonometric relation:
$$
T_j(x) = \cos(j\arccos(x)).
$$
To see that it is indeed a polynomial function, one finds that they can also be formulated
through the recursive formulation:
\begin{equation*}
\begin{aligned}
T_0(x) &= 1\\
T_1(x) &= x\\
T_{j+1}(x) &= 2xT_j(x) - T_{j-1}(x), \quad j\geq 1.
\end{aligned}
\end{equation*}
The recurrence relation shows that $T_j(x)$ is indeed a polynomial of degree $j$.

Chebyshev polynomials satisfy the orthogonality relation:
\begin{equation}
\int_{-1}^{1} \frac{T_m(x)T_n(x)}{\sqrt{1-x^2}} dx = 
\begin{cases}
0 & \text{if } m \neq n,\\
\pi & \text{if } m = n = 0,\\
\pi/2 & \text{if } m = n \geq 1.
\end{cases}
\end{equation}

This orthogonality allows us to expand any continuous function $f$ on $[-1,1]$ as:
\begin{equation}
f(x) = \sum_{j=0}^{\infty} c_j T_j(x),
\end{equation}
where the Chebyshev coefficients are given by:
\begin{equation}
c_j = \frac{2-\delta_{j,0}}{\pi} \int_{-1}^{1} \frac{f(x)T_j(x)}{\sqrt{1-x^2}} dx.
\end{equation}


For our application to power functions, we need to understand how Chebyshev expansions behave for functions with singularities. When a function $f$ has an algebraic singularity at the boundary of the interval, the Chebyshev coefficients exhibit specific decay patterns that we exploit for efficient approximation.

\section{Technical Details and Detailed Proofs}
\label{appendix:proofs}

\subsection{Chebyshev expansion of power function}
\label{app:proof_cheby_approx}
In this section, we discuss the approximation of the power function by the Chebyshev polynomial of the first kind. We will establish the existence of a Chebyshev polynomial approximation of $x^k$ with error bounded by $\epsilon$ using a polynomial of degree $m = \Theta(\sqrt{k\log(1/\epsilon)})$.

\subsection{Chebyshev Truncation Analysis}
\label{sec:cheby_trunc}
We determine the appropriate truncation order for our Chebyshev approximation. The Chebyshev expansion of $x^k$ on $[-1,1]$ is:

















\begin{equation}\label{eq:power_cheby_expansion}
x^k = 2^{1-k}\sum_{j=0}^{\lfloor k/2 \rfloor} \alpha_j \binom{k}{j} T_{k-2j}(x),
\end{equation}
where $\alpha_j = \frac{1}{2}$ if $k$ is even and $j = k/2$, and $\alpha_j = 1$ otherwise.

To approximate this function, we truncate the expansion and keep only those Chebyshev modes with index $|n| \leq m$, i.e.,
\begin{equation}
P_m(x) = \sum_{j: |k-2j| \leq m} 2^{1-k}\alpha_j\binom{k}{j} T_{k-2j}(x).
\end{equation}
The error is then given by
\begin{equation}
R_m(x) = x^k - P_m(x) = \sum_{j: |k-2j| > m} 2^{1-k}\alpha_j\binom{k}{j} T_{k-2j}(x).
\end{equation}
Since $|T_n(x)| \leq 1$ for $x \in [-1,1]$, we can bound this error as
\begin{equation}
\|R_m\|_\infty \leq \sum_{j: |k-2j| > m} 2^{1-k}\alpha_j\binom{k}{j}.
\end{equation}

For the probability interpretation, note that when $k$ is odd, all $\alpha_j = 1$ and we have exactly $\text{Pr}[\left|\text{Bin}(k, \frac{1}{2}) - \frac{k}{2}\right| > \frac{m}{2}]$. When $k$ is even, the adjustment factor $\alpha_{k/2} = \frac{1}{2}$ affects only the boundary case, which has a negligible impact on the tail bound for large $k$.

Therefore, we can still use the approximation:
\begin{equation}
\|R_m\|_\infty \lesssim \text{Pr}\left[\left|\text{Bin}(k, \frac{1}{2}) - \frac{k}{2}\right| > \frac{m}{2}\right].
\end{equation}
Using a two-sided Chernoff bound, we obtain
\begin{equation}
\text{Pr}\left[\left|\text{Bin}\left(k, \frac{1}{2}\right) - \frac{k}{2}\right| > d\right] \leq 2 \exp\left(-\frac{2d^2}{k}\right).
\end{equation}
Subsequently setting $d = \frac{m}{2}$ and requiring this probability to be at most $\epsilon$, we get:
\begin{equation}\label{eq:cheby_trunc}
2 \exp\left(-\frac{m^2}{2k}\right) \leq \epsilon.
\end{equation}
By taking logarithms, we have $\ln(2) - \frac{m^2}{2k} \leq \ln(\epsilon)$, which implies
\begin{equation}
m^2 \geq 2k(\ln(2) - \ln(\epsilon)) = 2k\ln\left(\frac{2}{\epsilon}\right).
\end{equation}
Therefore, we finally have
\begin{equation}
m \geq \sqrt{2k\ln\left(\frac{2}{\epsilon}\right)} = \mathcal{O}\left(\sqrt{k\ln(1/\epsilon)}\right).
\end{equation}

We find the approximation polynomial function has parity $k~\mathrm{mod}~2$ due to Eq.~\eqref{eq:power_cheby_expansion}. We summarize the results as follows.
\begin{lemma}\label{lemma:cheby_approx}
Given power function $f(x)=x^k$ for $x\in[-1,1]$ and $\epsilon\in(0,1)$, there exists an degree-$m$ polynomial function $p_m(x)$ such that 
\begin{equation}
    \sup_{x\in[-1,1]}|f(x)-p_m(x)|\leq \epsilon,
\end{equation}
such that
\begin{equation}
    m=\sqrt{2k\ln(2/\epsilon)}.    
\end{equation}
Besides, $p_m(x)$ has parity $k~\mathrm{mod}~2$.
\end{lemma}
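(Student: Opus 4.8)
The plan is to work with the \emph{exact} Chebyshev series of the power function and to control the truncation tail via a binomial concentration bound. First I would derive the closed-form expansion \eqref{eq:power_cheby_expansion} by the substitution $x=\cos\theta$: writing $\cos^k\theta = 2^{-k}(e^{i\theta}+e^{-i\theta})^k$ and expanding with the binomial theorem groups the terms into $\cos((k-2j)\theta)=T_{k-2j}(x)$, which immediately yields coefficients proportional to $2^{1-k}\binom{k}{j}$, together with the boundary correction $\alpha_{k/2}=\tfrac{1}{2}$ arising when $k$ is even (the $j=k/2$ term, i.e.~$T_0$, is counted once rather than twice). Because only modes $T_{k-2j}$ appear in the series, every summand inherits the parity $k\bmod 2$, so any truncation of it is automatically a polynomial of parity $k\bmod 2$, establishing the last claim of the lemma.

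Next I would define the degree-$m$ truncation $P_m$ by discarding all modes with $|k-2j|>m$ and bound the remainder $R_m=x^k-P_m$. Since $|T_n(x)|\le 1$ on $[-1,1]$, the supremum norm of the remainder is at most the sum of absolute values of the discarded coefficients, $\|R_m\|_\infty \le \sum_{j:\,|k-2j|>m}2^{1-k}\alpha_j\binom{k}{j}$. The central observation is that $2^{-k}\binom{k}{j}$ is exactly the probability mass of $\mathrm{Bin}(k,\tfrac{1}{2})$ at $j$, and the index condition $|k-2j|>m$ is precisely the centered deviation event $|\,\mathrm{Bin}(k,\tfrac{1}{2})-\tfrac{k}{2}\,|>\tfrac{m}{2}$. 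Thus the coefficient sum equals the two-sided binomial tail probability when $k$ is odd (all $\alpha_j=1$), and differs from it only by the negligible $\alpha_{k/2}=\tfrac{1}{2}$ boundary adjustment when $k$ is even, giving $\|R_m\|_\infty \lesssim \mathrm{Pr}[\,|\mathrm{Bin}(k,\tfrac{1}{2})-\tfrac{k}{2}|>\tfrac{m}{2}\,]$.

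Finally I would invoke a two-sided Chernoff/Hoeffding bound, $\mathrm{Pr}[\,|\mathrm{Bin}(k,\tfrac{1}{2})-\tfrac{k}{2}|>d\,]\le 2\exp(-2d^2/k)$, set $d=m/2$ to obtain $\|R_m\|_\infty\le 2\exp(-m^2/(2k))$, then impose $2\exp(-m^2/(2k))\le\epsilon$. Taking logarithms rearranges this to $m^2\ge 2k\ln(2/\epsilon)$, so $m=\sqrt{2k\ln(2/\epsilon)}=\mathcal{O}(\sqrt{k\log(1/\epsilon)})$ suffices, which is exactly the asserted degree.

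The step I expect to require the most care is the identity-plus-interpretation in the second paragraph: cleanly deriving the exact Chebyshev expansion (including the correct normalization $2^{1-k}$ and the even-$k$ boundary factor $\alpha_{k/2}$) and then verifying that the discarded-coefficient sum is genuinely the binomial tail, so that the Chernoff bound applies verbatim. The concentration estimate itself and the final algebra solving for $m$ are routine once that combinatorial identity is securely in place; the even-$k$ boundary term must be checked to confirm it only perturbs the bound by a factor that is irrelevant for large $k$.
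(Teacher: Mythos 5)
Your proposal is correct and takes essentially the same route as the paper's proof: the exact Chebyshev expansion $x^k = 2^{1-k}\sum_{j}\alpha_j\binom{k}{j}T_{k-2j}(x)$, truncation to modes $|k-2j|\le m$, bounding the tail by the discarded coefficient mass via $|T_n|\le 1$, interpreting that mass as a two-sided $\mathrm{Bin}(k,\tfrac{1}{2})$ tail, and applying the Chernoff bound $2\exp(-m^2/(2k))\le\epsilon$ to get $m=\sqrt{2k\ln(2/\epsilon)}$, with the parity claim following from the fact that only modes of parity $k\bmod 2$ appear. The only (harmless) difference is that you also derive the expansion itself from $\cos^k\theta = 2^{-k}(e^{i\theta}+e^{-i\theta})^k$, which the paper states without proof.
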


\subsection{Proof of asymptotic optimality and origin of quadratic speedup}
\label{sec:proof_optimality}
In this section, we prove that the approximation degree provided in the last section is optimal. This is not surprising, as the Chebyshev polynomial typically provides (near) best performance.
We prove the optimality by establishing the lower bound on the polynomial degree required for approximating $x^k$. 

Besides, we claim that the quadratic reduction in polynomial degree for approximating $x^k$ can be understood through the perspective of Bernstein's inequality. This classical result in approximation theory reveals a fundamental difference between polynomial behavior at the interior versus the boundary of an interval, which explains what enables our speedup.

\subsubsection{Bernstein's Inequality: Statement and Intuition}
We begin with the precise statement of Bernstein's inequality, which constrains the rate of change of polynomials.

\begin{lemma}[Bernstein's Inequality]
\label{thm:bernstein}
Let $P(x)$ be a polynomial of degree $\ell$ satisfying $|P(x)| \leq 1$ for all $x \in [-1,1]$. Then:
\begin{enumerate}
\item For any $x \in (-1,1)$: $|P'(x)| \leq \frac{\ell}{\sqrt{1-x^2}}$,
\item For the endpoints: $|P'(\pm 1)| \leq \ell^2$.
\end{enumerate}
Moreover, the bound at the endpoints is achieved by the Chebyshev polynomial $T_\ell(x)$.
\end{lemma}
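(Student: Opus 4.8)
The plan is to pass from algebraic to trigonometric polynomials through the substitution $x = \cos\theta$, which linearizes the geometry and cleanly separates the two regimes the statement distinguishes. Writing $Q(\theta) := P(\cos\theta)$, one checks that $Q$ is a trigonometric polynomial of degree $\ell$ with $\norm{Q}_\infty \leq 1$, and that $Q'(\theta) = -\sin\theta\,P'(\cos\theta)$. The interior claim is then equivalent to the classical \emph{trigonometric Bernstein inequality} $\norm{Q'}_\infty \leq \ell\,\norm{Q}_\infty$, since $\abs{\sin\theta}\,\abs{P'(\cos\theta)} = \sqrt{1-x^2}\,\abs{P'(x)}$ at $x = \cos\theta$.

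First I would establish the trigonometric inequality. The cleanest route is M.~Riesz's interpolation formula, which expresses $Q'(\theta) = \sum_{k=1}^{2\ell}\gamma_k\,Q(\theta + t_k)$ over the nodes $t_k = (2k-1)\pi/(2\ell)$, with coefficients whose absolute values satisfy $\sum_k \abs{\gamma_k} = \ell$---the defining feature that makes the estimate sharp; bounding each $\abs{Q(\theta+t_k)}\leq 1$ then gives $\abs{Q'(\theta)}\leq \ell$. An alternative is a comparison argument: one matches the value and slope of $Q$ at a point $\theta_0$ by a scaled, phase-shifted sinusoid $c\sin(\ell\theta+\phi)$, and observes that if $\abs{Q'(\theta_0)}$ exceeded $\ell$ the difference would acquire more zeros on a period than a degree-$\ell$ trigonometric polynomial can have, a contradiction. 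Either way, transferring back through $x = \cos\theta$ yields $\abs{P'(x)} \leq \ell/\sqrt{1-x^2}$ for $x \in (-1,1)$, the first assertion.

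The boundary bound is the main obstacle, precisely because the interior estimate degenerates as $x \to \pm 1$ and says nothing at the endpoints. The clean route, due to Schur, reduces it to the interior bound already in hand: rewriting that bound as $\abs{P'(x)}\sqrt{1-x^2}\leq \ell$ on $[-1,1]$ exhibits $S := P'$ as a polynomial of degree $\ell-1$ whose product with the weight $\sqrt{1-x^2}$ is bounded by $\ell$. Schur's lemma states that any degree-$m$ polynomial $S$ with $\sup_{[-1,1]}\abs{S(x)}\sqrt{1-x^2}\leq M$ obeys $\norm{S}_\infty \leq (m+1)M$; I would prove it by controlling $S$ at Chebyshev-type nodes where the weight is bounded below and interpolating out to the endpoints, which is the genuinely subtle step. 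Taking $m = \ell-1$ and $M = \ell$ then gives $\norm{P'}_\infty \leq \ell^2$, covering $x=\pm 1$ in particular.

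Finally, to certify that $\ell^2$ is tight I would check that $T_\ell$ saturates it. Differentiating $T_\ell(\cos\theta)=\cos(\ell\theta)$ gives $T_\ell'(\cos\theta) = \ell\,\sin(\ell\theta)/\sin\theta$, and letting $\theta\to 0$ (using $\sin(\ell\theta)/\sin\theta \to \ell$) yields $T_\ell'(1) = \ell^2$, with $\abs{T_\ell'(-1)} = \ell^2$ by symmetry. Since $\norm{T_\ell}_\infty = 1$ on $[-1,1]$, the constant cannot be improved, which completes the statement.
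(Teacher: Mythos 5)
Your proposal is correct, and it takes a genuinely different---and more rigorous---route than the paper. The paper's treatment of this lemma is expository rather than a proof: it computes $T_\ell'(1) = \ell^2$ via the chain rule on $T_\ell(\cos\theta)=\cos(\ell\theta)$ (exactly your sharpness check), and then simply \emph{asserts} the upper bound by appealing to ``a deep result in approximation theory''---the extremal property of Chebyshev polynomials---backed by citations and an interpolation-flavored intuition for why the scaling is $\ell^2$ rather than $\ell$. You instead give the classical self-contained chain: substitute $x=\cos\theta$, prove the trigonometric Bernstein inequality $\lVert Q'\rVert_\infty \leq \ell \lVert Q\rVert_\infty$ (via the M.~Riesz interpolation formula, whose coefficients indeed sum in absolute value to $\ell$, or via the zero-counting comparison with a sinusoid), transfer back to get the weighted bound $\lvert P'(x)\rvert\sqrt{1-x^2}\leq \ell$ on all of $[-1,1]$, and then upgrade to the endpoint (Markov) bound $\lVert P'\rVert_\infty \leq \ell^2$ by Schur's inequality applied to $S=P'$ with $m=\ell-1$ and $M=\ell$. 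What your route buys is an actual proof where the paper offers a plausibility argument; the one genuinely delicate step in your plan---Schur's lemma---is only sketched, but the node-based argument you indicate is the standard one and does go through. What the paper's route buys is brevity appropriate to its role: the lemma appears there purely as heuristic motivation for the quadratic speedup, and the rigorous degree lower bound the paper actually relies on is proved separately via Chebyshev truncation together with Rivlin's comparison between Chebyshev and best uniform approximation errors.
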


To understand why this inequality is remarkable, consider what it tells us about polynomial derivatives. At first glance, one might expect that if a polynomial of degree $\ell$ is bounded by 1, its derivative should be bounded by something like $\ell$ (since differentiating reduces the degree by 1). However, Bernstein's inequality shows that the derivative can grow as $\ell^2$ at the endpoints.

This quadratic factor arises from a fundamental property of polynomials: they can oscillate more rapidly near the boundaries of an interval while remaining bounded. The Chebyshev polynomials exemplify this behavior—they achieve their maximum slope precisely at $x = \pm 1$~\cite{trefethen2019approximation,rivlin1990chebyshev}.
To understand where the $\ell^2$ factor comes from, let us examine the proof for the endpoint case. Consider a polynomial $P(x)$ of degree $\ell$ with $|P(x)| \leq 1$ on $[-1,1]$.

The Chebyshev polynomial $T_\ell(x) = \cos(\ell \arccos(x))$ satisfies $|T_\ell(x)| \leq 1$ on $[-1,1]$. Its derivative at $x = 1$ can be computed using the chain rule:
\begin{equation}
\begin{aligned}
T'_\ell(x) &= \ell \sin(\ell \arccos(x)) \cdot \frac{1}{\sqrt{1-x^2}},\\
T'_\ell(1) &= \lim_{x \to 1^-} \ell \sin(\ell \arccos(x)) \cdot \frac{1}{\sqrt{1-x^2}}.
\end{aligned}
\end{equation}
By direct computation for Chebyshev polynomials, we find:
\begin{equation}
T'_\ell(1) = \ell^2.
\end{equation}

A deep result in approximation theory states that among all monic polynomials of degree $\ell$, the polynomial $2^{1-\ell}T_\ell(x)$ has the smallest maximum absolute value on $[-1,1]$. This extremal property extends to derivatives: no polynomial of degree $\ell$ bounded by 1 on $[-1,1]$ can have a derivative at $x = 1$ larger than $\ell^2$.

The quadratic scaling can be understood through the lens of polynomial interpolation. A polynomial of degree $\ell$ is determined by $\ell + 1$ values. When we constrain $|P(x)| \leq 1$ on $[-1,1]$, the polynomial can exploit all $\ell + 1$ degrees of freedom to create rapid oscillations that accumulate near the boundary, leading to a derivative that scales as $\ell^2$ rather than $\ell$.

\subsection{Lower bound on the polynomial degree}

It is well known that the Chebyshev polynomial gives the best uniform approximation of the power function $f(x)=x^k$ for the truncation degree $d=k-1$~\cite{rivlin1990chebyshev}, i.e., achieving the smallest possible approximation error for the prescribed degree among all polynomial functions. For $d<k-1$, it only provides near-best results. In this section, we show that while the Chebyshev expansion may not be the best approximation, its truncation degree matches the asymptotic lower bound and could only differ by a subleading factor from the best uniform approximation.

To this end, we resort to the classical results that study the relation between the Chebyshev approximation error $S_d(f)$ and the best uniform approximation error $E_d(f)$ of both degrees $d$:
\begin{equation}\label{eq:approx_error}
    S_d(f):=\norm{f-p_d(f)}_\infty,\quad E_d(f):=\norm{f-q_d(f)}_\infty,
\end{equation}
where $p_d$ and $q_d$ are degree-$d$ Chebyshev and best uniform approximation, respectively.

The relation to the two errors is given by
\begin{lemma}[{\cite[Theorem 3.3]{rivlin1990chebyshev}}]\label{lemma:cheby_best}
The Chebyshev and best uniform approximation errors as given by Eq.~\eqref{eq:approx_error} satisfy
\begin{equation}
    E_d(f) \leqslant S_d(f)<\left(4+\frac{4}{\pi^2} \log(d)\right) E_d(f).
\end{equation}
\end{lemma}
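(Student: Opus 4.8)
The plan is to split the displayed double inequality and treat each side separately. The left inequality $E_d(f)\le S_d(f)$ is immediate from definitions: $q_d$ is, by construction, the degree-$d$ polynomial minimizing the uniform error, whereas the Chebyshev truncation $p_d$ is merely one particular degree-$d$ polynomial, so $\norm{f-p_d}_\infty\ge\norm{f-q_d}_\infty$. All the content lies in the upper bound, which I would obtain through the standard Lebesgue-constant (Lebesgue's lemma) argument.

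First I would recognize the degree-$d$ Chebyshev truncation as a linear projection $\Pi_d f=\sum_{j=0}^d c_j T_j$, with $c_j$ the Chebyshev coefficients given by the orthogonality relation. Because each $T_j$ of degree at most $d$ is fixed by the truncation, $\Pi_d$ acts as the identity on every polynomial of degree at most $d$; in particular $\Pi_d q_d=q_d$. Using linearity I would then decompose
\[
f-p_d \;=\; f-\Pi_d f \;=\; (I-\Pi_d)(f-q_d),
\]
whence $S_d(f)=\norm{(I-\Pi_d)(f-q_d)}_\infty\le\norm{I-\Pi_d}_\infty\,\norm{f-q_d}_\infty\le(1+\Lambda_d)\,E_d(f)$, where $\Lambda_d:=\norm{\Pi_d}_\infty$ is the Lebesgue constant of the Chebyshev projection and I have used $\norm{I-\Pi_d}_\infty\le 1+\Lambda_d$. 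This reduces the entire claim to the explicit estimate $1+\Lambda_d<4+\tfrac{4}{\pi^2}\log d$, i.e.\ $\Lambda_d<3+\tfrac{4}{\pi^2}\log d$.

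The remaining and principal technical step is the bound on $\Lambda_d$. I would write the projection as an integral operator, $\Pi_d f(x)=\tfrac1\pi\int_{-1}^1 K_d(x,t)\,f(t)\,(1-t^2)^{-1/2}\,dt$ with kernel $K_d(x,t)=\sum_{j=0}^d(2-\delta_{j,0})T_j(x)T_j(t)$, and substitute $x=\cos\theta$, $t=\cos\phi$ so that $T_j(\cos\theta)=\cos(j\theta)$ turns the Chebyshev kernel into the Dirichlet kernel of a cosine Fourier series. The Lebesgue constant is then $\Lambda_d=\max_\theta\tfrac1\pi\int_0^\pi|K_d(\cos\theta,\cos\phi)|\,d\phi$. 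Extracting the sharp leading constant $4/\pi^2$ requires splitting this integral into a neighborhood of the diagonal singularity, where $|K_d|$ is controlled by the degree, and the bulk region, where one lower-bounds the sine factors and compares $\sum_{j}1/j$ to $\log d$. Carrying out these estimates yields the classical $\Lambda_d=\tfrac{4}{\pi^2}\log d+O(1)$ together with the explicit bound $\Lambda_d<3+\tfrac{4}{\pi^2}\log d$, which combined with the inequality of the previous paragraph gives the asserted upper bound.

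I expect the kernel integral estimate to be the main obstacle: pinning down the exact leading coefficient $4/\pi^2$ (rather than merely establishing logarithmic growth) concentrates all of the analytic difficulty, since it demands a careful treatment of the Dirichlet kernel near its singularity and a matching $O(1)$ control of the bulk. By contrast, the projection structure, the identity $\Pi_d q_d=q_d$, and the Lebesgue-lemma reduction are routine.
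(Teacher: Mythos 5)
The paper never proves this lemma: it is imported verbatim as \cite[Theorem 3.3]{rivlin1990chebyshev} and used as a black box, so there is no internal proof to compare against. Your proposal reconstructs what is essentially the classical argument from the cited source itself, and its structure is sound: the lower bound $E_d(f)\le S_d(f)$ is indeed immediate from optimality of $q_d$; the reduction $S_d(f)\le(1+\Lambda_d)E_d(f)$ via the projection identity $\Pi_d q_d=q_d$ is exactly Lebesgue's lemma; and the whole burden then falls on the estimate $\Lambda_d<3+\tfrac{4}{\pi^2}\log d$, which you correctly identify as the analytic core but only sketch. One remark that would make your sketch tighter: after the substitutions $x=\cos\theta$, $t=\cos\phi$, the Chebyshev kernel satisfies $K_d(\cos\theta,\cos\phi)=\tfrac12\left[D_d(\theta-\phi)+D_d(\theta+\phi)\right]$ with $D_d$ the Dirichlet kernel, and by periodicity the two shifted integrals combine into a single full-period integral,
\begin{equation*}
\int_0^\pi\left|D_d(\theta-\phi)\right|d\phi+\int_0^\pi\left|D_d(\theta+\phi)\right|d\phi=\int_{-\pi}^{\pi}\left|D_d(u)\right|du ,
\end{equation*}
so $\Lambda_d$ is bounded by the Fourier--series Lebesgue constant $L_d=\tfrac{1}{2\pi}\int_{-\pi}^{\pi}\left|D_d(u)\right|du$, for which the explicit bound $L_d<\tfrac{4}{\pi^2}\log d+3$ is classical; this sidesteps any separate diagonal/bulk splitting in the two-variable kernel. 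With that step filled in, your argument is a complete and faithful proof of the cited theorem, so it neither conflicts with nor improves upon the paper, which simply delegates this content to Rivlin.
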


Using this result, we obtain the following degree lower bound in approximating the power function.
\begin{theorem}[Lower Bound for Polynomial Approximation of $x^k$]
For any positive integer $k$ and any $\epsilon \in (0,1]$, if $p(x)$ is a polynomial of degree at most $d$ with
\begin{equation}
    \max_{x \in [-1,1]} |x^k - p(x)| \leq \epsilon,
\end{equation}
then
\begin{equation}
    d=\Omega(\sqrt{k\log(1/\epsilon)}).
\end{equation}
\end{theorem}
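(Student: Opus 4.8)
The plan is to reverse the truncation analysis of Section~\ref{sec:cheby_trunc}: there the binomial tail \emph{upper}-bounds the Chebyshev error, and here I will use it to \emph{lower}-bound that error, then transfer the bound to the best uniform approximation via Lemma~\ref{lemma:cheby_best}. The starting observation is that any degree-$d$ polynomial $p$ with $\max_{x\in[-1,1]}|x^k-p(x)|\le\epsilon$ forces $E_d(x^k)\le\epsilon$, since $E_d$ in \eqref{eq:approx_error} is the minimal error attainable at degree $d$. Hence it suffices to show that $E_d(x^k)$ stays large unless $d=\Omega(\sqrt{k\log(1/\epsilon)})$.

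Next I would lower-bound the Chebyshev truncation error $S_d$. Evaluating the remainder $R_d$ from Section~\ref{sec:cheby_trunc} at the endpoint $x=1$, where $T_n(1)=1$ for every $n$, collapses $R_d(1)$ to the full sum of the discarded (nonnegative) coefficients in \eqref{eq:power_cheby_expansion}; by the same binomial identity this equals $\Pr[|\text{Bin}(k,\tfrac12)-\tfrac{k}{2}|>\tfrac{d}{2}]$. Since $\|R_d\|_\infty\ge R_d(1)$, I obtain $S_d\ge\Pr[|\text{Bin}(k,\tfrac12)-\tfrac{k}{2}|>\tfrac{d}{2}]$, precisely the quantity that the Chernoff bound \eqref{eq:cheby_trunc} controlled from above. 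The right inequality of Lemma~\ref{lemma:cheby_best} then gives $E_d > S_d/(4+\tfrac{4}{\pi^2}\log d)$, converting this into a lower bound on the best approximation error up to a logarithmic factor.

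The heart of the argument is replacing the Chernoff \emph{upper} bound by a matching \emph{lower} (anti-concentration) bound for the symmetric binomial, of the form $\Pr[|\text{Bin}(k,\tfrac12)-\tfrac{k}{2}|>\tfrac{d}{2}]\ge c\,e^{-Cd^2/k}$, valid throughout the deviation window $d=O(\sqrt{k\log(1/\epsilon)})$. Chaining everything yields $\epsilon\ge E_d > c\,e^{-Cd^2/k}/(4+\tfrac{4}{\pi^2}\log d)$; taking logarithms gives $Cd^2/k\ge\log(1/\epsilon)-O(\log\log(1/\epsilon)+\log d)$. In the regime $d=O(\sqrt{k\log(1/\epsilon)})$ one has $\log d=O(\log k+\log\log(1/\epsilon))$, so the correction terms are subleading and the leading balance $d^2\gtrsim k\log(1/\epsilon)$ survives, giving $d=\Omega(\sqrt{k\log(1/\epsilon)})$.

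I expect the main obstacle to be the binomial anti-concentration bound in the third step: the Gaussian-type exponent $e^{-Cd^2/k}$ is only a clean lower bound within a moderate-deviation window, so I would establish it by an explicit estimate---for instance lower-bounding a single central term $2^{-k}\binom{k}{\lceil k/2+d/2\rceil}$ via Stirling together with the ratio of consecutive terms, or invoking a reverse-Chernoff inequality---and then verify that the window genuinely extends to deviations of order $\sqrt{k\log(1/\epsilon)}$ (i.e.\ $t=o(k)$, which holds since $d\le k$). The delicate bookkeeping is matching the exponent constant $C$ against the constant hidden in the target $\Omega(\cdot)$ and ensuring the $\log d$ loss from Lemma~\ref{lemma:cheby_best} does not erode the leading term.
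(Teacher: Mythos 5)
Your proposal follows the same skeleton as the paper's proof---bound the best uniform error $E_d$ against the Chebyshev truncation error $S_d$ via Lemma~\ref{lemma:cheby_best}, and identify $S_d$ with a symmetric binomial tail---but it is actually \emph{more} rigorous at the decisive step, and it fills a genuine gap in the paper's own argument. The paper writes $S_d(f)=2\exp(-d^2/(2k))$, i.e.\ it promotes the Chernoff \emph{upper} bound of Eq.~\eqref{eq:cheby_trunc} to an equality; since a degree lower bound requires a \emph{lower} bound on $S_d$, this step as written does not follow. Your two ingredients supply exactly what is missing: (i) evaluating the remainder at $x=1$, where $T_n(1)=1$ and all discarded coefficients of \eqref{eq:power_cheby_expansion} are nonnegative, gives $S_d\ge R_d(1)=\Pr\left[\left|\mathrm{Bin}(k,\tfrac12)-\tfrac{k}{2}\right|>\tfrac{d}{2}\right]$ (in fact equality holds, since the triangle inequality gives the matching upper bound, so the tail probability is not merely a bound but the exact value of $S_d$); and (ii) a reverse-Chernoff anti-concentration estimate converts this tail into $c\,e^{-Cd^2/k}$. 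Both are standard and provable along the lines you sketch; to avoid a lossy $1/\sqrt{k}$ prefactor you should sum the $\Theta(\sqrt{k})$ terms just beyond $k/2+d/2$ rather than keep a single Stirling term, and the finite anti-concentration window is harmless: if $d$ exceeds a constant fraction of $k$, then $d=\Omega(k)\ge\Omega(\sqrt{k\log(1/\epsilon)})$ trivially in the only regime where the theorem can hold, namely $\log(1/\epsilon)\lesssim k$ (for smaller $\epsilon$ the statement is vacuous as written, since $p(x)=x^k$ has degree $k$ and zero error).

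One caveat you should make explicit rather than wave at: after taking logarithms, the loss from Rivlin's factor $4+\tfrac{4}{\pi^2}\log d$ is $O(\log\log d)\le O(\log\log k)$, so the chain yields $d^2\gtrsim k\left(\log(1/\epsilon)-O(\log\log k)\right)$. This is a clean $\Omega(\sqrt{k\log(1/\epsilon)})$ bound only when $\log(1/\epsilon)\gg\log\log k$; for constant $\epsilon$ the chain alone degenerates and the $\Omega(\sqrt{k})$ statement needs a separate argument (e.g.\ via Markov's inequality, in the spirit of the Bernstein discussion in Section~\ref{sec:proof_optimality}). This limitation is inherited from the paper---whose proof handles it less carefully still, e.g.\ by dropping the $\tfrac12\ln(2k)$ contribution to $\ln d_0$ in its ``first-order correction''---so it is a weakness of the shared route, not a defect specific to your plan.
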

\begin{proof}
From Eq.~\eqref{eq:cheby_trunc}, we know that the Chebyshev approximation error satisfies
\begin{equation}
    S_d(f)=2 \exp\left(-\frac{d^2}{2k}\right).
\end{equation}
Besides, from Lemma \ref{lemma:cheby_best}, we have
\begin{equation}
    \frac{\pi^2}{4\pi^2+4\log(d)}S_d(f)<E_d(f),
\end{equation}
which indicates a lower bound given by the Chebyshev approximation error. Then, let the left-hand side of the equation be no more than $\epsilon$, we have
\begin{equation}
    \frac{2\pi^2}{4\pi^2+4\log(d)}\exp\left(-\frac{d^2}{2k}\right)\leq \epsilon.
\end{equation}
The solution when taking the equality is given by
\begin{equation}\label{eq:exact_d}
    d^2=2 k \ln \left(\frac{\pi^2 / 2}{\epsilon\left(\pi^2+\ln d\right)}\right)=2 k\left[\ln \frac{\pi^2}{2 \epsilon}-\ln \left(\pi^2+\ln d\right)\right],
\end{equation}
which is a transcendental equation, meaning no closed form solution for $d$. Now, when $\ln d$ is comparable to $\pi^2$, we take $\pi^2+\ln d=c\pi^2$ for some absolute constant $c$. Then, we obtain
\begin{equation}
    d\geq \sqrt{2k\ln(1/(2c\epsilon))}=\Omega\left(\sqrt{k\log(1/\epsilon)}\right).
\end{equation}

When the assumption on $\ln d$ is violated, implying $\epsilon$ could be very small, we first observe that the leading order solution when ignoring the slowly varying $-\ln \left(\pi^2+\ln d\right)$ term is given by
\begin{equation}
    d_0=\sqrt{2k\ln(\pi^2/2\epsilon)}.
\end{equation}
Then, we consider the first-order correction given by the Taylor expansion of the logarithm:
\begin{equation}
    \ln(\pi^2+\ln d_0)=\ln(\pi^2+\mathcal{O}(\ln \ln(1/\epsilon)))=\mathcal{O}(\ln\ln(1/\epsilon)).
\end{equation}
Thus, it results in
\begin{equation}
    d\gtrsim\sqrt{2 k \ln(\pi^2/2\epsilon)}\left(1-\mathcal{O}\left(\frac{\ln \ln (1 / \epsilon)}{\ln (1 / \epsilon)}\right)\right) =\Omega\left(\sqrt{k\log(1/\epsilon)}\right).
\end{equation}

In both scenarios, we obtain the claimed result, which concludes the proof.
\end{proof}

We summarize the optimal degree of approximating the power function as follows.
\begin{corollary}\label{cor:optimal_power}
For $f(x)=x^k:[-1,1]\mapsto[-1,1],k\in\mathbb{Z}^+$ and $\epsilon\in(0,1]$, the optimal asymptotic degree $d$ of polynomial function approximate $f(x)$ to error $\epsilon$ in uniform approximation is given by 
\begin{equation}
    d=\Theta\left(\sqrt{k\log(1/\epsilon)}\right).
\end{equation}
\end{corollary}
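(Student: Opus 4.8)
The plan is to establish the claimed $\Theta\!\left(\sqrt{k\log(1/\epsilon)}\right)$ scaling by sandwiching the optimal degree between a matching upper bound and lower bound, both of which have already been proven in this section. Since the two bounds coincide up to constant factors, the corollary follows immediately; there is no fresh analytic content, only a synthesis of the preceding two results.

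First I would invoke the upper bound. Lemma~\ref{lemma:cheby_approx} shows that the truncated Chebyshev expansion of $x^k$ yields a polynomial of degree $m=\sqrt{2k\ln(2/\epsilon)}$ achieving uniform error at most $\epsilon$ on $[-1,1]$. Since $\ln(2/\epsilon)=\ln 2 + \ln(1/\epsilon) = \mathcal{O}(\log(1/\epsilon))$ for $\epsilon\in(0,1]$, this establishes that a degree $d=\mathcal{O}\!\left(\sqrt{k\log(1/\epsilon)}\right)$ \emph{suffices} to reach error $\epsilon$.

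Next I would invoke the matching lower bound, namely the preceding lower-bound theorem, which asserts that any polynomial $p$ of degree $d$ with $\max_{x\in[-1,1]}|x^k-p(x)|\le \epsilon$ must have $d=\Omega\!\left(\sqrt{k\log(1/\epsilon)}\right)$. The engine of that argument is Lemma~\ref{lemma:cheby_best}, which bounds the best uniform approximation error $E_d(f)$ from below in terms of the Chebyshev truncation error $S_d(f)$ via $E_d(f)\le S_d(f) < \left(4+\tfrac{4}{\pi^2}\log d\right)E_d(f)$; feeding in the explicit Chernoff-type expression $S_d(f)=2\exp(-d^2/2k)$ from \eqref{eq:cheby_trunc} converts this, up to the slowly varying logarithmic factor, into a genuine lower bound on $E_d(f)$ and hence on $d$.

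Combining the two matching bounds yields $d=\Theta\!\left(\sqrt{k\log(1/\epsilon)}\right)$, which is exactly the corollary. I expect no substantive obstacle: the only point requiring mild care is confirming that the $\log d$ correction factor of Lemma~\ref{lemma:cheby_best} is genuinely subleading, so that it does not erode the $\sqrt{\log(1/\epsilon)}$ dependence. That verification is precisely what the transcendental-equation analysis in the lower-bound theorem already supplies, by showing the correction contributes only an $\mathcal{O}(\ln\ln(1/\epsilon)/\ln(1/\epsilon))$ relative factor that vanishes asymptotically. With that confirmed, the $\Theta$ statement is immediate.
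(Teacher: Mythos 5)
Your proposal is correct and matches the paper's own treatment: the corollary is stated there as a direct synthesis of the two preceding results, with the upper bound supplied by the Chebyshev truncation of Lemma~\ref{lemma:cheby_approx} and the matching lower bound by the theorem built on Lemma~\ref{lemma:cheby_best}, including the check that the $\log d$ correction is subleading. No further content is needed.
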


\subsection{Query complexity lower bound}

We now connect the lower bound on the degree of a scalar function to that of the query complexity of a matrix function. 
Consider the block encoding $U_A$ of matrix $A$ in the form 
\begin{equation}\label{eq:a_be}
    U_A=\left(\begin{array}{cc}
        A &  *\\
        * & *
    \end{array}\right),
\end{equation}
we want to construct the block encoding of $f(A)$ with error $\epsilon$ with $f(x)$ some function. The goal is to determine the necessary query times $T$ of applying $U_A$ and $U_A^\dagger$.

We first note that the query complexity lower bound is discussed in \cite[Theorem 73]{gilyen2019quantum}. That is for $f: I\subseteq[-1,1]\mapsto\mathbb{R}$ and $x \neq y \in I \cap\left[-\frac{1}{2}, \frac{1}{2}\right]$, we have 
\begin{equation}\label{eq:lower_query1}
    T=\Omega\left(\frac{|f(x)-f(y)|-2 \epsilon}{|x-y|}\right).
\end{equation}
More precisely, for $x\neq y\in I$, the non-asymptotic lower bound is given by
\begin{equation}
    T \geq \frac{\max \left[f(x)-f(y)-2 \epsilon, \sqrt{1-(f(y)-\epsilon)^2}-\sqrt{1-(f(x)+\epsilon)^2}\right]}{\sqrt{2} \sqrt{1-x y-\sqrt{\left(1-x^2\right)\left(1-y^2\right)}}}.
\end{equation}
The lower bound on the query complexity relates to the maximal derivative of the function.

We now give a more sophisticated result that relates the lower bound to the degree of best approximation based on recent breakthroughs. To this end, we introduce the following lemma.
\begin{lemma}[Optimal query complexity of matrix function {\cite[Proposition 1.12]{montanaro2024quantum}}]\label{lemma:optimal_qeury}
Let $f(x):[-1,1] \mapsto[-1,1]$ be a continuous function, and $\epsilon \in(0,1]$ a constant. Let $U_A$ be a unitary in the form Eq.~\eqref{eq:a_be} that dilates $A$, where $A$ is a $\mathcal{O}(1)$-sparse Hermitian matrix with $\|A\| \leq 1-\delta$ for some constant $\delta$. Then, there exists a unitary that dilates an $\epsilon$-approximation to $f(A)$ using 
\begin{equation}
    \Theta\left(\widetilde{\operatorname{deg}}_{\epsilon}(f)\right),\quad \widetilde{\operatorname{deg}}_{\epsilon}(f):=
\min \left\{d: \inf _{p_d} \sup _{x \in[-1,1]}\left|f(x)-p_d(x)\right| \leq \epsilon\right\}
\end{equation}
queries to $U_A$ and $U_A^\dagger$ with $p_d\in\mathbb{R}[x]$ a degree-$d$ polynomial function.
\end{lemma}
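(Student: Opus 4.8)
The plan is to prove the $\Theta$ by establishing matching upper and lower bounds on the query count $T$, using the $\epsilon$-approximate degree $\widetilde{\operatorname{deg}}_\epsilon(f)$ as the common pivot. The upper bound is essentially a repackaging of the quantum singular value transformation, whereas the lower bound carries the real content and rests on a polynomial-method argument adapted to block encodings.

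For the upper bound I would start from a degree-$d$ polynomial $p_d$ realizing $d=\widetilde{\operatorname{deg}}_\epsilon(f)$, so that $\sup_{x\in[-1,1]}|f(x)-p_d(x)|\le\epsilon$. Since $|f|\le 1$, this forces $|p_d|\le 1+\epsilon$ on $[-1,1]$; I would rescale to $\tilde p_d:=p_d/(1+\epsilon)$ so the hypotheses of Lemma~\ref{lemma:qsvt} (namely $|\tilde p_d|\le 1$ with fixed parity) are met, inflating the uniform error by only an $\mathcal{O}(\epsilon)$ factor that can be absorbed by halving the target accuracy at the outset. Feeding $U_A$ and the coefficients of $\tilde p_d$ into the QSVT routine yields a block encoding of $\tilde p_d(A)$ with $\mathcal{O}(d)$ queries to $U_A$ and $U_A^\dagger$; because $A$ is Hermitian with spectrum in $[-1,1]$, operator-norm error equals sup-norm error over the spectrum, giving $\norm{\tilde p_d(A)-f(A)}\le \sup_{x}|\tilde p_d(x)-f(x)|\le\epsilon$. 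The parity and boundedness bookkeeping here is routine and introduces no new idea.

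The lower bound is where the work lies, and I would reduce the matrix statement to a scalar uniform-approximation statement via the polynomial method, choosing a worst-case instance. Take $A$ to be the one-parameter Hermitian family whose block encoding is the single-qubit unitary $U_A(x)=\begin{pmatrix} x & \sqrt{1-x^2}\\ \sqrt{1-x^2} & -x\end{pmatrix}$ with $|x|\le 1-\delta$, which is trivially $\mathcal{O}(1)$-sparse. For a fixed $T$-query algorithm, the top-left amplitude of its output is some function $g(x)$; restricting to the two-dimensional subspace that qubitization associates with the eigenvalue $x$, the queries $U_A(x)$ and $U_A(x)^\dagger$ act as fixed SU(2) rotations, so a length-$T$ interleaving with query-independent unitaries becomes a product of $T$ such SU(2) factors. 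By the Chebyshev structure of these products (equivalently the quantum-signal-processing characterization), $g$ is a real polynomial in $x$ of degree at most $T$ with $|g|\le 1$ on $[-(1-\delta),1-\delta]$. If the algorithm's top-left block is an $\epsilon$-approximation of $f(A)$, evaluating on the eigenbasis forces $|g(x)-f(x)|\le\epsilon$ throughout that interval, so $g$ is a degree-$T$ uniform $\epsilon$-approximant of $f$ and hence $T\ge\widetilde{\operatorname{deg}}_\epsilon(f)$, up to the harmless boundary truncation governed by $\delta$.

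The main obstacle is making the degree bound on $g$ rigorous for an \emph{arbitrary} $T$-query algorithm rather than one pre-structured as a QSP/QSVT circuit: one must verify that interleaving $U_A$ with $U_A^\dagger$ and with arbitrary ancilla-acting unitaries cannot raise the per-query degree, and that the invariant two-dimensional subspaces survive the presence of ancillas and of the dilation blocks marked $*$ in Eq.~\eqref{eq:a_be}. I expect to handle this exactly as in Ref.~\cite{montanaro2024quantum}: track the joint dependence of all output amplitudes on $x$ and $\sqrt{1-x^2}$, show that each query multiplies the total degree by at most one, and invoke the parity and realness constraints to discard the $\sqrt{1-x^2}$ contributions from the diagonal amplitude. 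The constraint $\norm{A}\le 1-\delta$ keeps the parameter bounded away from the singular points $\pm1$ where these factors degenerate. Combining the two directions yields $T=\Theta(\widetilde{\operatorname{deg}}_\epsilon(f))$.
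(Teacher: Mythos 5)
First, a framing remark: the paper does not prove this lemma at all — it is imported verbatim as Proposition~1.12 of Ref.~\cite{montanaro2024quantum}, so there is no in-paper proof to compare against, and you are in effect re-proving the cited result from scratch. Your upper bound is the standard QSVT packaging and is acceptable modulo routine bookkeeping (even/odd decomposition plus LCU to handle polynomials without definite parity, and the slight accuracy rescaling). The problems are in the lower bound, which is where you yourself say the real content lies, and they are genuine.

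There are two concrete gaps. (1) Your mechanism for eliminating the $\sqrt{1-x^2}$ dependence does not work. Writing $x=\cos\theta$, the output amplitude of a $T$-query circuit has the form $g(x)=P(x)+\sqrt{1-x^2}\,Q(x)$ with $\deg P\le T$, $\deg Q\le T-1$, and no "parity or realness constraint" forces $Q\equiv 0$: the one-query circuit $H\,U_A(x)\,H$ has $\langle 0|H U_A(x) H|0\rangle=\sqrt{1-x^2}$ exactly, and $Q$ may be real, so taking real parts removes nothing. A correct repair uses a different idea: because the lemma quantifies over \emph{every} unitary of the form Eq.~\eqref{eq:a_be}, the algorithm must also succeed when handed the conjugate dilation $Z\,U_A(x)\,Z$ (also a valid block encoding of $[x]$, with off-diagonal entries $-\sqrt{1-x^2}$); running the identical circuit on it yields the amplitude $P(x)-\sqrt{1-x^2}\,Q(x)$, and averaging the two correctness guarantees gives $|P(x)-f(x)|\le\epsilon$ with $P$ a genuine bounded polynomial of degree at most $T$. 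Without exploiting this dilation freedom (or an equivalent symmetrization) your scalar reduction does not close. (2) The boundary truncation is not "harmless." Your instances have $|x|\le 1-\delta$ with $\delta$ a constant, so at best you obtain $T\ge\widetilde{\operatorname{deg}}_{\epsilon}$ of $f$ restricted to $[-(1-\delta),1-\delta]$, not on $[-1,1]$ as the lemma asserts. This is fatal precisely for the function this paper needs: for $f(x)=x^k$ one has $(1-\delta)^k\le\epsilon$ as soon as $k\gtrsim\delta^{-1}\log(1/\epsilon)$, so the zero polynomial is an $\epsilon$-approximant on the truncated interval and your lower bound collapses to zero, whereas $\widetilde{\operatorname{deg}}_{\epsilon}(x^k)=\Theta\bigl(\sqrt{k\log(1/\epsilon)}\bigr)$. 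The hard instances must have spectrum reaching (or approaching) $\pm 1$ — which is exactly why the paper takes $\delta=0$ when it invokes this lemma in Corollary~\ref{cor:optimal_qeury_power} — so a correct proof must let $x$ range over essentially all of $[-1,1]$ and handle the degeneration of $\sqrt{1-x^2}$ there, rather than dismissing the endpoint behavior.
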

We note that the requirement on the operator norm such that $\|A\| \leq 1-\delta$ is a condition for efficiently block encoding $A$ exploiting the sparse structure. For convenience, we simply let $\delta=0$ in the following and obtain the following result on query complexity for approximating the power function. 
\begin{corollary}[Optimal query complexity of matrix power function]\label{cor:optimal_qeury_power}
Let $f(x):=x^k, k\in\mathbb{Z}^+$, and $\epsilon \in(0,1]$ a constant. Let $U_A$ be a unitary that dilates $A$, where $A$ is a $\mathcal{O}(1)$-sparse Hermitian matrix with $\|A\| \leq 1$. Then, there exists a unitary that dilates an $\epsilon$-approximation to $f(A)$ using 
\begin{equation}
    \Theta\left(\sqrt{k\log(1/\epsilon)}\right)
\end{equation}
queries to $U_A$ and $U_A^\dagger$ with $p_d\in\mathbb{R}[x]$ a degree-$d$ polynomial function.
\end{corollary}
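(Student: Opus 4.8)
The plan is to specialize the general equivalence between query complexity and approximate polynomial degree (Lemma~\ref{lemma:optimal_qeury}) to the power function $f(x)=x^k$, for which the approximate degree has already been determined in Corollary~\ref{cor:optimal_power}. First I would verify that $x^k$ meets the hypotheses of Lemma~\ref{lemma:optimal_qeury}: it is continuous on $[-1,1]$, and since $|x|\le 1$ implies $|x^k|\le 1$, it maps $[-1,1]$ into $[-1,1]$. With these checked, the cited result gives that approximating $f(A)$ to error $\epsilon$ costs $\Theta\big(\widetilde{\operatorname{deg}}_\epsilon(x^k)\big)$ queries to $U_A$ and $U_A^\dagger$, so the entire task collapses to evaluating the approximate degree $\widetilde{\operatorname{deg}}_\epsilon(x^k)$.

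That quantity is exactly the content of Corollary~\ref{cor:optimal_power}, which I would invoke directly to obtain $\widetilde{\operatorname{deg}}_\epsilon(x^k)=\Theta\big(\sqrt{k\log(1/\epsilon)}\big)$. For completeness I would recall that this $\Theta$ packages two matching bounds proved earlier: the upper bound comes from the Chebyshev truncation of Lemma~\ref{lemma:cheby_approx} (degree $\sqrt{2k\ln(2/\epsilon)}$, controlled by a binomial Chernoff tail), which can be fed into QSVT via Lemma~\ref{lemma:qsvt} after a harmless constant rescaling to enforce $|p_m|\le 1$ on $[-1,1]$ and noting it has the correct parity $k\bmod 2$; the lower bound comes from the preceding degree theorem, which uses the Chebyshev-to-best-approximation comparison of Lemma~\ref{lemma:cheby_best} to promote the truncation estimate into a genuine lower bound valid for \emph{all} polynomials. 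Substituting this degree into Lemma~\ref{lemma:optimal_qeury} then yields the claimed $\Theta\big(\sqrt{k\log(1/\epsilon)}\big)$ query complexity, establishing both directions simultaneously.

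The one point requiring care is the boundary hypothesis: Lemma~\ref{lemma:optimal_qeury} is stated for $\|A\|\le 1-\delta$ with constant $\delta>0$, whereas the corollary asserts $\|A\|\le 1$. I would argue that setting $\delta=0$ is benign, since the $\delta$-margin in the cited proposition is an artifact of \emph{constructing} the block encoding from sparse access rather than of the singular-value transformation itself; given the block encoding $U_A$ directly, the QSVT step (Lemma~\ref{lemma:qsvt}) operates on the full closed interval $[-1,1]$, and the matching degree lower bound likewise holds there. Thus the main obstacle is not a genuine technical barrier but merely ensuring the approximating polynomial stays within the QSVT-admissible range near $x=\pm 1$, which is precisely where $x^k$ is steepest---the regime that Bernstein's inequality (Lemma~\ref{thm:bernstein}) identifies as most favorable for low-degree polynomials, and hence the source of the quadratic $k\mapsto\sqrt{k}$ saving.
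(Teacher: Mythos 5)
Your proposal is correct and follows essentially the same route as the paper: the paper's proof simply combines Lemma~\ref{lemma:optimal_qeury} with the optimal degree from Corollary~\ref{cor:optimal_power}, and it likewise dismisses the $\|A\|\le 1-\delta$ hypothesis by setting $\delta=0$, just as you argue. Your additional remarks on parity, the Chernoff-based upper bound, and the boundary issue only elaborate on ingredients the paper already established elsewhere.
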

\begin{proof}
From Lemma \ref{lemma:optimal_qeury}, we know that the optimal query complexity is given by the optimal degree in approximating the targeted power function, which is given by Corollary \ref{cor:optimal_power}. This completes the proof.
\end{proof}

\begin{remark}
We note that the bound on the query complexity shown in Corollary \ref{cor:optimal_qeury_power} holds for general density matrices as long as one can efficiently block encode them, such as through the purified quantum query access. This is because for the upper bound side, using the Chebyshev expansion and QSVT algorithm given by Lemma \ref{lemma:qsvt}, we can reach the same asymptotic scaling. For the lower bound side, we note that the $\mathcal{O}(1)$ sparsity condition does not always hold for density matrices. Yet, it suffices to prove that there exist cases in which the condition holds. As such, we note that the $\mathcal{O}(1)$-sparse density matrices are strictly contained within the set of all density matrices; the lower bound immediately carries over to the full class of density operators. Hence, we obtain the tight bound
\begin{equation}
    \Theta\left(\sqrt{k\log(1/\epsilon)}\right)
\end{equation}
on the query complexity for density matrices admitting purified quantum query access.
\end{remark}

We now provide a result on the query complexity for estimating the 
\begin{lemma}\label{lemma:property_lower_bound}
Let $\ket{\psi}$ be a $n$-qubit input state and $\epsilon\in(0,1]$ a constant. Let $O$ be an $n$-qubit Hermitian observable with operator norm $\norm{O}>1$. Let $U_A$ be a $(1,a,0)$-block encoding of $A$. Then, there exist cases that at least
\begin{equation}
    \Omega\left(\frac{\norm{O}}{\epsilon}\right)
\end{equation}
queries to $U_A$ are necessary to output an $\epsilon$-estimation to $\tr{\ket{\psi}\bra{\psi}AO}$ with a constant success probability.
\end{lemma}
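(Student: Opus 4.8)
The plan is to exhibit an explicit single-parameter hard instance and reduce the estimation task to distinguishing two nearby unitaries, whose query cost I would then lower bound by a standard hybrid (telescoping) argument. Concretely, I would work on a single system qubit with no block-encoding ancilla, fix $\ket{\psi}=\ket{0}$, and take $O=\norm{O}\,Z$, which is Hermitian with operator norm exactly $\norm{O}$. For the matrix I would use the rotation family $A_\theta=R_y(2\theta)$, which is itself unitary, so that $U_{A_\theta}=A_\theta$ is an exact $(1,0,0)$-block encoding of $A_\theta$ (padding with unused ancillas if one insists on $a>0$). A one-line computation using $Z\ket{0}=\ket{0}$ then gives
\begin{equation}\nonumber
\tr{\ket{0}\bra{0}A_\theta O}=\norm{O}\,\bra{0}A_\theta\ket{0}=\norm{O}\cos\theta,
\end{equation}
so that the target quantity is simply a rescaled cosine of the hidden angle, and all the $\norm{O}$-dependence is an explicit prefactor.

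The reduction step is then immediate: an additive-$\epsilon$ estimate of $\tr{\ket{0}\bra{0}A_\theta O}$ is exactly an additive-$\epsilon'$ estimate of $\cos\theta$ with $\epsilon':=\epsilon/\norm{O}$. I would pick two angles $\theta_0,\theta_1$ in a fixed neighborhood of $\pi/2$ with $\abs{\cos\theta_0-\cos\theta_1}=3\epsilon'$; since $\tfrac{d}{d\theta}\cos\theta=-\sin\theta$ is bounded away from zero there, this forces $\abs{\theta_0-\theta_1}\le C\epsilon'$ for an absolute constant $C$. Any estimator achieving error $\epsilon'$ separates the two values by midpoint thresholding, so running it with constant success probability on each instance yields a distinguisher between the oracles $A_{\theta_0}$ and $A_{\theta_1}$ with constant bias.

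Finally I would invoke the hybrid argument. For an algorithm $\mathcal{A}$ making $T$ queries to $U_{A_\theta}$ and $U_{A_\theta}^\dagger$, interleaved with arbitrary $\theta$-independent unitaries on the system plus a workspace register, the telescoping bound gives
\begin{equation}\nonumber
\norm{\ket{\Psi_T(\theta_0)}-\ket{\Psi_T(\theta_1)}}\le T\,\norm{A_{\theta_0}-A_{\theta_1}},
\end{equation}
where $\norm{A_{\theta_0}-A_{\theta_1}}=\norm{R_y\!\big(2(\theta_0-\theta_1)\big)-I}=2\abs{\sin\tfrac{\theta_0-\theta_1}{2}}\le\abs{\theta_0-\theta_1}\le C\epsilon'$. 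A constant distinguishing bias requires the output trace distance, and hence $\norm{\ket{\Psi_T(\theta_0)}-\ket{\Psi_T(\theta_1)}}$, to be $\Omega(1)$, which yields $T=\Omega(1/\epsilon')=\Omega(\norm{O}/\epsilon)$, the claimed bound.

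The two points I would treat with genuine care are: (i) converting an $\epsilon'$-estimator that succeeds with constant probability on each of the two instances into a constant-bias distinguisher, which is handled cleanly by midpoint thresholding together with the success probabilities; and (ii) verifying that the linear-in-$T$ hybrid bound is insensitive to the $U^\dagger$ queries and to the arbitrary ancilla and workspace registers, since each query slot contributes at most $\norm{A_{\theta_0}-A_{\theta_1}}$ regardless of the intervening fixed unitaries. The main conceptual obstacle is simply the correct choice of operating point $\theta\approx\pi/2$: it is precisely the non-vanishing derivative of $\cos$ there that keeps the angle gap at $\Theta(\epsilon')$ rather than smaller, which is what converts the $\epsilon'=\epsilon/\norm{O}$ amplitude accuracy into the sharp $\Omega(\norm{O}/\epsilon)$ query lower bound.
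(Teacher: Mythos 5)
Your proof is correct, and at its core it is the same argument the paper uses: reduce $\epsilon$-estimation of $\tr{\ket{\psi}\bra{\psi}AO}$ to a two-point discrimination between two block-encoded operators whose target values differ by $\Theta(\epsilon)$, then apply a BBBV-style hybrid/telescoping bound $\norm{\ket{\Psi_T(\theta_0)}-\ket{\Psi_T(\theta_1)}}\le T\,\norm{U_{0}-U_{1}}$ with $\norm{U_{0}-U_{1}}=\Ord{\epsilon/\norm{O}}$, so that constant distinguishing bias forces $T=\Omega(\norm{O}/\epsilon)$. The difference is in the choice of hard instances. The paper takes $\ket{\psi}=\ket{\nu}$ (the top eigenvector of a general $O$), $A_0=0$ and $A_1=\delta\ket{\nu}\bra{\nu}$ with $\delta=2\epsilon/\norm{O}$, and must then write down explicit reflection-convention block encodings and expand $\norm{U_0-U_1}=\sqrt{\delta^2+(1-\sqrt{1-\delta^2})^2}=\delta+\Ord{\delta^3}$. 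You instead make the block-encoded operators unitaries themselves, $A_\theta=R_y(2\theta)$ with $O=\norm{O}Z$, so the block encoding is exact and trivial and the key estimate $\norm{A_{\theta_0}-A_{\theta_1}}=2\abs{\sin((\theta_0-\theta_1)/2)}\le\abs{\theta_0-\theta_1}$ is an exact identity rather than a Taylor expansion; the price is the extra (and correctly handled) step of converting the cosine gap into an angle gap, which is why the operating point $\theta\approx\pi/2$, where the derivative of $\cos$ is $\Theta(1)$, is essential. Both constructions are equally valid witnesses for the ``there exist cases'' statement; yours is more self-contained and elementary, while the paper's instances (a subnormalized positive semidefinite $A$ probed on its support state) hew closer to the situation $A\approx\rho^{k-1}$ in which the lemma is subsequently invoked to prove Theorem 2. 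A minor point in your favor: separating the two target values by $3\epsilon$ rather than the paper's exactly $2\epsilon$ cleanly avoids the tie-breaking edge case in midpoint thresholding.
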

\begin{proof}
Our proof strategy is similar to that of the query complexity lower bound of the unstructured search using a hybrid argument~\cite{bennett1997strengths,ambainis2000quantum}. That is, the general circuit structure for such estimation is given by an interchangeably application of $U_A$ and some other unitary $V$; and then some final measurement related to the observable $O$. Our proof then relies on constructing two different block-encoded matrices $A_0$ and $A_1$ such that distinguishing between the two cases remains hard unless a certain query time is met.

Denote $\ket{\nu}$ as the state corresponding to the largest singular value of $O$, such that $|\bra{\nu}O\ket{\nu}|=\norm{O}$. We prescribe the input state to be $\ket{\psi}\equiv\ket{\nu}$. The two cases of the block-encoded matrices are $A_0=0$ and $A_1=\delta\ket{\nu}\bra{\nu}$ with $\delta=\frac{2\epsilon}{\norm{O}}$. Without loss of generality, we follow the reflection convention of the block encoding unitaries, which give
\begin{equation}
    U_0=\left(\begin{array}{cc}
        0 & I \\
        I & 0
    \end{array}\right),\quad
    U_1=\left(\begin{array}{cc}
        \delta\ket{\nu}\bra{\nu} & \sqrt{1-\delta^2\ket{\nu}\bra{\nu}} \\
        \sqrt{1-\delta^2\ket{\nu}\bra{\nu}} & -\delta\ket{\nu}\bra{\nu}
    \end{array}\right).
\end{equation}
For different conventions of the block encoding form, see Ref.~\cite{martyn2021grand}.
The expectation values of the two cases are given by 
\begin{equation}
    E_0:=\tr{\ket{\nu}\bra{\nu}A_0O}=0,\quad E_1:=\tr{\ket{\nu}\bra{\nu}A_1O}=2\epsilon.
\end{equation}
Any algorithm that estimates $E_i,~i\in\{0,1\}$ to error $\epsilon$ given access to $U_i$ distinguishes between these two cases with constant success probability.

The two sets of states after $t$ queries of the BE unitaries are
\begin{equation}
\begin{aligned}
    \ket{\psi_0(t)}=\prod_{j=1}^t V_jU_0\ket{\nu}\ket{0^{a+b}}\\
    \ket{\psi_1(t)}=\prod_{j=1}^t V_jU_1\ket{\nu}\ket{0^{a+b}}
\end{aligned},
\end{equation}
where $V_j$ is some unitary acting on $(n+a+b)$ qubits. 

Next, we claim that the trace distance between the two sets of states after $T$ iterations satisfies
\begin{equation}
    \norm{\ket{\psi_0(T)}-\ket{\psi_1(T)}}_1\leq T\norm{U_0-U_1}.
\end{equation}
This is shown by induction. The base case is readily verified 
\begin{equation}
    \| V_0 U_0\ket{\nu}\ket{0^{a+b}}-V_0 U_1\ket{\nu}\ket{0^{a+b}}\|_1=\| U_0\ket{\nu}\ket{0^{a+b}}-U_1\ket{\nu}\ket{0^{a+b}}\|_1\leq\| U_0-U_1 \|,
\end{equation}
where we have applied H\"{o}lder's inequality.
Now, assume for the $(T-1)$ step, the result holds, and we deduce for the next step:
\begin{equation}
\begin{aligned}
    \| V_T U_0\ket{\psi_0(T-1)}-V_T U_1\ket{\psi_1(T-1)}\|_1&=\|  U_0\ket{\psi_0(T-1)}-U_1\ket{\psi_0(T-1)}+U_1\ket{\psi_0(T-1)}- U_1\ket{\psi_1(T-1)}\|_1\\
    &\leq \|  U_0\ket{\psi_0(T-1)}-U_1\ket{\psi_0(T-1)}\|_1 + \|U_1\ket{\psi_0(T-1)}- U_1\ket{\psi_1(T-1)}\|_1\\
    &=\norm{U_0-U_1} + \norm{\ket{\psi_0(T-1)}- \ket{\psi_1(T-1)}}_1=T\norm{U_0-U_1}.
\end{aligned}
\end{equation}
Next, the $\norm{U_0-U_1}$ is computed by solving the eigenvalues of $U_0-U_1$. Through basic computation, we have 
\begin{equation}
    \left\|U_B-U_A\right\|=\sqrt{\delta^2+\left(1-\sqrt{1-\delta^2}\right)^2} =\sqrt{\delta^2+\mathcal{O}\left(\delta^4\right)}=\delta \sqrt{1+\mathcal{O}\left(\delta^2\right)}=\delta+\mathcal{O}\left(\delta^3\right),
\end{equation}
where we have used $1-\sqrt{1-\delta^2}=1-\left(1-\frac{\delta^2}{2}-\frac{\delta^4}{8}+\mathcal{O}\left(\delta^6\right)\right)=\frac{\delta^2}{2}+\frac{\delta^4}{8}+\mathcal{O}\left(\delta^6\right) .$
This suggests that 
\begin{equation}
    \norm{\ket{\psi_0(T)}-\ket{\psi_1(T)}}_1\lesssim T\delta = \frac{2T\epsilon}{\norm{O}}.
\end{equation}
This means that to achieve a constant success probability over random guessing, the query times are lower bounded by
\begin{equation}
    T=\Omega\left(\frac{\norm{O}}{\epsilon}\right).
\end{equation}

Finally, as the observable expectation value estimation problem with the prescribed accuracy solves the distinguish problem, which implies the same lower bound follows.
\end{proof}

\begin{remark}
The above results are considered error-free BE. When given a noisy BE of $U_i,~i\in\{0,1\}$, the error in BE is amplified by a factor of $\norm{O}$ in $E_i$. This means that when the error in the BE is above $\frac{\epsilon}{\norm{O}}$, we cannot distinguish between the two cases.
\end{remark}

We close this section by providing results on the lower bound of the query complexity of the nonlinear property estimation problem.
\begin{theorem}[Theorem 2 restated]
Let $\ket{\rho}_{EI}=U_\rho\ket{0}_E\ket{0}_I$ be a purified (input) state of $\rho$. Let $\epsilon\in(0,1]$ be a constant. Then, 
\begin{equation}
    \Omega\left(\frac{\sqrt{k\log(\norm{O}/\epsilon)}\norm{O}}{\epsilon}\right)
\end{equation}
queries to $U_\rho$ and $U_\rho^\dagger$ is needed to output and $\epsilon$-approximation to $\tr{\rho^k O}$.
\end{theorem}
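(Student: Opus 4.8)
The plan is to obtain the bound as a product of two sources of hardness already isolated above: the $\norm{O}/\epsilon$ factor from the sensitivity of the observable expectation (Lemma~\ref{lemma:property_lower_bound}) and the $\sqrt{k\log(\norm{O}/\epsilon)}$ factor from the intrinsic cost of applying the $k$-th matrix power to the precision demanded by $O$ (Corollary~\ref{cor:optimal_qeury_power}). The bridge between them is the observation recorded in the Remark following Lemma~\ref{lemma:property_lower_bound}: an error $\epsilon'$ in the (block-encoded) matrix $\rho^k$ is amplified by a factor $\norm{O}$ when passed through $O$. Hence any algorithm estimating $\tr{\rho^k O}$ to additive error $\epsilon$ must effectively resolve $\rho^k$ to accuracy $\epsilon' = \Theta(\epsilon/\norm{O})$.

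First I would fix the precision scale: setting $\epsilon' = \epsilon/\norm{O}$ in Corollary~\ref{cor:optimal_qeury_power} shows that merely realizing an $\epsilon'$-accurate access to $\rho^k$ already costs $\Omega(\sqrt{k\log(1/\epsilon')}) = \Omega(\sqrt{k\log(\norm{O}/\epsilon)})$ queries to $U_\rho$ and $U_\rho^\dagger$. This is where the logarithmic enhancement over the bare $\sqrt{k}$ scaling enters, and it rests on the degree lower bound for approximating $x^k$ established through the Chebyshev truncation and the transcendental-equation analysis. Second, I would invoke Lemma~\ref{lemma:property_lower_bound} with the block-encoded matrix taken to be a suitable embedding of $\rho^k$ and the input aligned with the top eigenvector $\ket{\nu}$ of $O$: distinguishing $A_0 = 0$ from $A_1 = \delta\ket{\nu}\bra{\nu}$ with $\delta = \Theta(\epsilon/\norm{O})$ requires $\Omega(\norm{O}/\epsilon)$ uses of that access. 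Composing the two then yields the claimed $\Omega(\sqrt{k\log(\norm{O}/\epsilon)}\,\norm{O}/\epsilon)$.

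To make the composition rigorous rather than a heuristic product of bounds, I would run a single hybrid argument directly at the level of $U_\rho$. Concretely, I would construct two purified oracles $U_{\rho_0},U_{\rho_1}$ whose states share the eigenbasis $\{\ket{\nu},\ket{\nu^\perp}\}$ of $O$ but differ in the eigenvalue $\mu$ along $\ket{\nu}$, tuned so that $|\tr{\rho_0^k O} - \tr{\rho_1^k O}| \ge 2\epsilon$, and then bound the oracle distance via the purification geometry, where the amplitudes are $\sqrt{\mu_i}$ so that $\norm{U_{\rho_0}-U_{\rho_1}}$ scales like $|\arccos\sqrt{\mu_0} - \arccos\sqrt{\mu_1}|$. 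The telescoping inequality $\norm{\ket{\psi_0(T)} - \ket{\psi_1(T)}}_1 \le T\,\norm{U_{\rho_0}-U_{\rho_1}}$ from Lemma~\ref{lemma:property_lower_bound} then delivers $T = \Omega(1/\norm{U_{\rho_0}-U_{\rho_1}})$.

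The main obstacle is extracting both factors---and in particular the logarithm---from one construction. Optimizing a single-eigenvalue perturbation over $\mu = \lambda$ (the optimum sits at $\lambda = 1 - \Theta(1/k)$, where $\lambda^k$ is a constant while the purification factor $\sqrt{1-\lambda}$ contributes $1/\sqrt{k}$) already reproduces the bare $\Omega(\sqrt{k}\,\norm{O}/\epsilon)$, but the extra $\sqrt{\log(\norm{O}/\epsilon)}$ does not arise from a smooth perturbation; it reflects the boundary (Chebyshev) behavior of $x^k$. I would therefore route the power-function hardness through Corollary~\ref{cor:optimal_qeury_power}, equivalently the tight degree characterization of Lemma~\ref{lemma:optimal_qeury}, arguing that an algorithm beating the composed bound would, upon restriction to the $\ket{\nu}$ block, furnish an $\epsilon/\norm{O}$-accurate access to $\rho^k$ with too few queries and thus contradict the degree lower bound. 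Verifying that this restriction is lossless---that the observable sensitivity and the matrix-power cost genuinely multiply rather than merely add---is the crux of the argument.
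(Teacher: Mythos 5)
Your proposal is correct and follows essentially the same route as the paper's own proof, which likewise fixes the precision at $\epsilon/(2\norm{O})$, invokes Corollary~\ref{cor:optimal_qeury_power} for the $\Omega(\sqrt{k\log(\norm{O}/\epsilon)})$ cost of block-encoding an approximation to $\rho^{k-1}$, applies Lemma~\ref{lemma:property_lower_bound} to obtain $\Omega(\norm{O}/\epsilon)$ accesses to that block encoding, and multiplies the two bounds. The only difference is that the paper asserts the multiplicative composition in one line (``combining the two results''), whereas you rightly flag the rigor of that product---ruling out algorithms that do not factor through a block encoding of $\rho^{k-1}$---as the crux, and your proposed hybrid argument at the level of $U_\rho$ is an attempt to close a gap the paper's own proof leaves unaddressed.
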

\begin{proof}
The first step is to construct a BE unitary $U_{g(\rho)}$ of $g(\rho)$, which is a $\frac{\epsilon}{2\norm{O}}$-approximation to the operator $\rho^{k-1}$. This entails an $\Omega(\sqrt{k\log(\norm{O}/\epsilon)})$ queries to $U_\rho$ and $U_\rho^\dagger$ as dictated by Corollary \ref{cor:optimal_qeury_power}. Then, with input state $\ket{\rho}$ and BE of $g(\rho)$, Lemma \ref{lemma:property_lower_bound} demands an $\Omega\left(\frac{\norm{O}}{\epsilon}\right)$ queries to $U_{g(\rho)}$. Combining the two results, we obtain the claim lower bound.
\end{proof}

\section{Algorithm Formulation}
We use quantum amplitude estimation to estimate the nonlinear properties of the density matrix. The first step is to construct a block encoding of a nonlinear function of the density matrix that approximates $\rho^{k-1} O$. We denote this BE unitary as $U_p$ such that $
(\langle 0|_b \otimes I) U_{p} (|0\rangle_b \otimes I) = p(\rho)O=:p'(\rho)$, where $\|p(\rho)-\rho^k\|\leq \epsilon'$ for some $\epsilon'$ to be determined. The circuit construction is similar to the Hadamard test circuit, involving a control qubit $c$, an ancilla register $b$, and registers $E$ and $I$. 
We initialize the state as $|0\rangle_c |0\rangle_b |\rho\rangle_{E I}$ and proceed as the following:
\begin{equation}\label{eq:circ_hadamard_test}
\begin{alignedat}{2}
\ket{0}_c\ket{0}_b\ket{\rho}_{EI} &\xrightarrow{H_c} &\quad& \frac{|0\rangle_c + |1\rangle_c}{\sqrt{2}} \otimes |0\rangle_b |\rho\rangle_{EI} \\
&\xrightarrow{\rm{c}-U_{p}} &\quad& \frac{1}{\sqrt{2}} \left[ |0\rangle_c |0\rangle_b |\rho\rangle_{E I} + |1\rangle_c \left( |0\rangle_b p'(\rho) |\rho\rangle_{E I} + |\Phi^\perp\rangle_{b E I} \right) \right] \\
&\xrightarrow{H_c} &\quad& \quad~|0\rangle_c \otimes \frac{1}{2} \left[ |0\rangle_b |\rho\rangle_{E I} + |0\rangle_b p'(\rho) |\rho\rangle_{E I} + |\Phi^\perp\rangle_{b E I} \right] \\
&&\quad& + |1\rangle_c \otimes \frac{1}{2} \left[ |0\rangle_b |\rho\rangle_{E I} - |0\rangle_b p'(\rho) |\rho\rangle_{E I} - |\Phi^\perp\rangle_{b E I} \right],
\end{alignedat}
\end{equation}
where $p(\rho)O$ acts on register $I$, the $H_c$ is a Hadamard gate acting on the register $c$, $p'(\rho)$ acts on register $I$, $|\Phi^\perp\rangle_{b E I}$ is the state that orthogonal to $\ket{0}_b$ and $\mathrm{c}-U_{p}$ is the controlled unitary $U(p)$ controlled by register $c$.

Subsequently, the probability of measuring $\ket{0}_c$ is
\begin{equation}
\begin{aligned}
P(|0\rangle_c) &= \frac{1}{4} \left(\left\| |\rho\rangle_{E I} + p'(\rho) |\rho\rangle_{E I} \right\|_2^2 +  \langle \Phi^\perp | \Phi^\perp \rangle\right)\\
&=\frac{1}{4} \left(\left(1+ \langle \rho | (p'(\rho))^\dagger p'(\rho) |\rho \rangle + \langle \rho | p'(\rho) |\rho \rangle + \langle \rho | (p'(\rho))^\dagger |\rho \rangle \right) + \left(1- \langle \rho | (p'(\rho))^\dagger p'(\rho) |\rho \rangle\right) \right)\\
&=\frac{1}{2}+ \frac{\mathrm{Re}(\bra{\rho} p(\rho)O\ket{\rho})}{2}=\frac{1}{2}+ \frac{\mathrm{Re}(\tr{\rho p(\rho)O})}{2} \approx \frac{1}{2}+ \frac{\mathrm{Re}(\tr{\rho^k O})}{2},
\end{aligned}
\end{equation}
where $\|\cdot\|_2$ is the $\rm{L}_2$ norm. In the second line, we have used that $\left\||0\rangle_b p'(\rho) |\rho\rangle_{E I} + |\Phi^\perp\rangle_{b E I}\right\|_2=1$ due to the normality of the state. This yields the estimator of the real part of the nonlinear property as
\begin{equation}\label{eq:ae_real}
    \text{Re} \left( \text{Tr}\left( \rho p(\rho)O \right) \right) = 2 P(|0\rangle_c) - 1.
\end{equation}
The imaginary part follows the same logic as the Hadamard test construction.
We implement a phase gate $S^\dagger$ before the second Hadamard gate. Denoting the probability of measuring $\ket{0}_c$ as $P'(\ket{0}_c)$. The estimator is the same as the real case:
\begin{equation}\label{eq:ae_img}
    \text{Im} \left( \text{Tr}\left( \rho p(\rho)O \right) \right) = 2 P'(|0\rangle_c) - 1.
\end{equation}

    

Finally, we present the result on (approximately) BE of the operator $\rho^{k-1}$ and $\rho^{k-1}O$.
\begin{lemma}\label{lemma:be_power}
Let $\widetilde{U}_\rho$ be $(1, a, 0)$-block encoding unitary of $\rho$. Then, there exists quantum circuit constructs an $(1, a+1, \epsilon)$-block encoding of $p_m(\rho)$ with $p_m(\cdot)$ a degree $m$ polynomial function such that 
\begin{equation}
    \|p_m(\rho)-\rho^{k-1}\|\leq \epsilon,
\end{equation}
using $m=\sqrt{2(k-1)\ln(2/\epsilon)}$ queries to $\widetilde{U}_\rho$, $\widetilde{U}_\rho^\dagger$, and $\mathcal{O}(m(a+1))$ other primitive quantum gates.
\end{lemma}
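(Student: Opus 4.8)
The plan is to reduce the matrix construction to a scalar polynomial-approximation problem and then lift it to the operator level through the QSVT framework of Lemma~\ref{lemma:qsvt}. The target scalar function is $g(x)=x^{k-1}$, which satisfies $\abs{g(x)}\le 1$ on $[-1,1]$, and the density matrix $\rho$ supplied through the $(1,a,0)$-block encoding $\widetilde{U}_\rho$ has all eigenvalues (equivalently singular values) in $[0,1]$. Because the block-encoding normalization factor is $\alpha_\rho=1$, no contraction of the spectrum is needed before applying the polynomial, which is precisely what keeps the realized degree—and hence the query count—equal to the scalar approximation degree.

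First I would invoke Lemma~\ref{lemma:cheby_approx} to obtain a degree-$m$ polynomial $p_m$ with $\sup_{x\in[-1,1]}\abs{x^{k-1}-p_m(x)}\le\epsilon$ and $m=\sqrt{2(k-1)\ln(2/\epsilon)}$; crucially, its parity is $(k-1)\bmod 2$, exactly the parity hypothesis required by Lemma~\ref{lemma:qsvt}. Next I would feed $p_m$ into the QSVT construction of Lemma~\ref{lemma:qsvt} with $A=\rho$ and $\alpha_A=1$, producing a $(1,a+1,0)$-block encoding of the singular-value transform $p_m^{\mathrm{(SV)}}(\rho)$ using exactly $m$ queries to $\widetilde{U}_\rho,\widetilde{U}_\rho^\dagger$ and $\Ord{m(a+1)}$ additional primitive gates. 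Since $\rho$ is Hermitian and positive semidefinite, its singular values coincide with its eigenvalues, so the singular-value transform collapses to the ordinary eigenvalue transform $p_m^{\mathrm{(SV)}}(\rho)=p_m(\rho)=\sum_i p_m(\lambda_i)\ket{\psi_i}\bra{\psi_i}$, as recorded in the remark following Lemma~\ref{lemma:qsvt}. Finally, since the eigenvalues $\lambda_i$ lie in $[0,1]\subseteq[-1,1]$, the operator-norm error obeys $\norm{p_m(\rho)-\rho^{k-1}}=\max_i\abs{p_m(\lambda_i)-\lambda_i^{k-1}}\le\sup_{x\in[-1,1]}\abs{p_m(x)-x^{k-1}}\le\epsilon$, which both certifies the claimed approximation and, since QSVT realizes $p_m(\rho)$ \emph{exactly}, shows the same circuit is a $(1,a+1,\epsilon)$-block encoding of $\rho^{k-1}$ with all error attributable to the scalar approximation rather than to the circuit.

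The step I expect to be the main obstacle is the normalization hypothesis $\abs{p_{\Re}}\le 1$ on $[-1,1]$ demanded by Lemma~\ref{lemma:qsvt}: the Chebyshev approximant can overshoot the bound $\sup_{x\in[-1,1]}\abs{x^{k-1}}=1$ by as much as $\epsilon$ near the endpoints, so $p_m$ as produced need not itself be QSVT-admissible. I would resolve this by a standard split of the error budget—apply Lemma~\ref{lemma:cheby_approx} with target accuracy $\epsilon/2$ to get $p_m$ with $\abs{p_m}\le 1+\epsilon/2$, then rescale to $\tilde p=p_m/(1+\epsilon/2)$, which satisfies $\abs{\tilde p}\le 1$ while incurring an additional error $\norm{\tilde p-p_m}_\infty\le\epsilon/2$, so $\tilde p$ still approximates $x^{k-1}$ to within $\epsilon$ with the same degree and parity. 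The only remaining bookkeeping is to confirm that this rescaling perturbs only the constant inside the logarithm (replacing $\ln(2/\epsilon)$ by $\ln(4/\epsilon)$) and therefore leaves the leading-order degree $m=\sqrt{2(k-1)\ln(2/\epsilon)}$ unchanged.
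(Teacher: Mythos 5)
Your proof follows the paper's own route: invoke Lemma~\ref{lemma:cheby_approx} for the degree-$m$, parity-$(k-1)\bmod 2$ approximant of $x^{k-1}$, feed it into the QSVT result (Lemma~\ref{lemma:qsvt}) with $\alpha_A=1$, and convert the scalar uniform error into the operator-norm bound via the spectral mapping on the PSD matrix $\rho$. The one place you go beyond the paper is in verifying the QSVT admissibility hypothesis $\abs{p_{\Re}(x)}\le 1$ on $[-1,1]$, which the paper invokes Lemma~\ref{lemma:qsvt} without checking; your rescaling $\tilde p = p_m/(1+\epsilon/2)$ is a valid patch, and it is to your credit that you flagged this. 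Two remarks, however. First, the patch does not literally reproduce the lemma's constant: it yields degree $\sqrt{2(k-1)\ln(4/\epsilon)}$ rather than the stated $\sqrt{2(k-1)\ln(2/\epsilon)}$ (harmless asymptotically, but the lemma asserts the sharper constant). Second, the patch is in fact unnecessary for this particular polynomial: the truncated Chebyshev expansion of $x^{k-1}$ used in the paper has nonnegative coefficients $2^{1-(k-1)}\alpha_j\binom{k-1}{j}$, and evaluating the \emph{full} expansion at $x=1$, where $T_n(1)=1$ for all $n$, shows these coefficients sum to exactly $1$; since the truncation retains only a subset of them and $\abs{T_n(x)}\le 1$ on $[-1,1]$, one gets $\abs{P_m(x)}\le 1$ automatically, so the boundedness hypothesis of Lemma~\ref{lemma:qsvt} holds with no rescaling and the exact constant in the lemma is preserved. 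Incorporating that observation would make your argument both airtight and tight.
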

\begin{proof}
Our first step is to construct the $(1, a+n+1, \epsilon)$-block encoding of $p_m(\rho)$. To this end, Lemma \ref{lemma:cheby_approx} gives the polynomial function $p_m(x)$ such that $\sup_{x\in[-1,1]}|p_m(x)-x^{k-1}|\leq \epsilon$ with the claimed degree $m$. Also, $p_m(x)$ has parity $(k-1)~\mathrm{mod}~2$. As such, invoking Lemma \ref{lemma:qsvt}, we obtain a $(1,a+1,\epsilon)$-BE of $p_m(\rho)$ by noting that $\|p_m(\rho)-\rho^{k-1}\|=|\sup_{x\in[-1,1]}|p_m(x)-x^{k-1}|\leq \epsilon$ with $m$ queries to $\widetilde{U}_\rho$ and $\widetilde{U}_\rho^\dagger$.
\end{proof}

\begin{corollary}\label{cor:be_power}
Let $\widetilde{U}_\rho$ be $(1, a, 0)$-block encoding unitary of $\rho$. Let ${U}_O$ be $(\alpha_O, b, 0)$-block encoding unitary of $O$ such that $\alpha_O\geq\norm{O}$. Then, there exists quantum circuit constructs an $(\alpha_O, a+b+1, \alpha_O\epsilon)$-block encoding of $p_m(\rho)O$ with $p_m(\cdot)$ a degree $m$ polynomial function such that 
\begin{equation}
    \|p_m(\rho)-\rho^{k-1}\|\leq \epsilon,
\end{equation}
using $m=\sqrt{2(k-1)\ln(2/\epsilon)}$ queries to $\widetilde{U}_\rho$, $\widetilde{U}_\rho^\dagger$, one query to $U_O$ and $\mathcal{O}(m(a+b+1))$ other primitive quantum gates.
\end{corollary}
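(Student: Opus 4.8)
The plan is to treat this corollary as a two-step composition of results already established: Lemma~\ref{lemma:be_power}, which manufactures a block encoding of the scalar-polynomial image $p_m(\rho)$, and Lemma~\ref{lemma:be_prod}, which multiplies two block-encoded operators. The key structural observation is that $p_m(\rho)$ and $O$ act on the \emph{same} $n$-qubit system register, so that $p_m(\rho)O$ is an honest operator product on that space and the hypotheses of the product lemma apply directly. The product order matters because $O$ and $p_m(\rho)$ need not commute; I would take $A=p_m(\rho)$ as the left factor and $B=O$ as the right factor so that the encoded operator is exactly $p_m(\rho)O$, which is what the downstream estimator $\tr{\rho\,p_m(\rho)O}\approx\tr{\rho^k O}$ requires.

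First I would invoke Lemma~\ref{lemma:be_power} with the given $(1,a,0)$-block encoding $\widetilde{U}_\rho$ to obtain a $(1,a+1,\epsilon)$-block encoding $U_{p_m(\rho)}$ of $p_m(\rho)$, where $p_m$ is the degree-$m$ Chebyshev truncation of Lemma~\ref{lemma:cheby_approx} with $m=\sqrt{2(k-1)\ln(2/\epsilon)}$ and $\norm{p_m(\rho)-\rho^{k-1}}\leq\epsilon$. This step consumes $m$ queries to $\widetilde{U}_\rho$ and $\widetilde{U}_\rho^\dagger$ together with $\mathcal{O}(m(a+1))$ elementary gates, and the parity requirement of the QSVT routine (Lemma~\ref{lemma:qsvt}) is automatically met because $p_m$ inherits parity $(k-1)\bmod 2$ from the expansion of $x^{k-1}$.

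Next I would feed $U_{p_m(\rho)}$ and the given $(\alpha_O,b,0)$-block encoding $U_O$ into Lemma~\ref{lemma:be_prod}, taking $A=p_m(\rho)$ with parameters $(\alpha_A,r,\epsilon_A)=(1,a+1,\epsilon)$ and $B=O$ with parameters $(\alpha_B,s,\delta)=(\alpha_O,b,0)$. The product rule then outputs a block encoding of $p_m(\rho)O$ with normalization $\alpha_A\alpha_B=\alpha_O$, ancilla count $r+s=a+b+1$, and error
\begin{equation}\nonumber
\alpha_A\delta+\alpha_B\epsilon_A=1\cdot 0+\alpha_O\cdot\epsilon=\alpha_O\epsilon,
\end{equation}
which is precisely the claimed $(\alpha_O,a+b+1,\alpha_O\epsilon)$-block encoding. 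Only one additional query to $U_O$ is spent in the product, so the total cost is $m$ queries to $\widetilde{U}_\rho,\widetilde{U}_\rho^\dagger$, one query to $U_O$, and $\mathcal{O}(m(a+b+1))$ further gates, as stated.

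I do not expect a genuine obstacle; the entire content is careful bookkeeping, and the only points needing attention are essentially administrative. The first is verifying that the two block encodings act on disjoint ancilla registers, so that the tensor-padded form $(I\otimes U_{p_m(\rho)})(I\otimes U_O)$ of the product lemma is literally applicable and the ancilla counts genuinely add. The second---and the most conceptually important detail to track---is the asymmetric error amplification: the approximation error $\epsilon$ of the inner block encoding is magnified by the normalization factor $\alpha_O$ of the outer one, which is exactly why the final error reads $\alpha_O\epsilon$ rather than $\epsilon$. Flagging this is what forces the algorithm of Theorem~\ref{thm:main} to run Lemma~\ref{lemma:be_power} at the tighter tolerance $\epsilon/(2\norm{O})$, so that after amplification the net approximation error returns to the desired $\epsilon/2$.
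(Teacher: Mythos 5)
Your proof is correct and follows exactly the paper's own route: invoke Lemma~\ref{lemma:be_power} to obtain the $(1,a+1,\epsilon)$-block encoding of $p_m(\rho)$, then compose it with $U_O$ via the product rule of Lemma~\ref{lemma:be_prod}, which yields the stated $(\alpha_O,\,a+b+1,\,\alpha_O\epsilon)$ parameters and query counts. Your extra bookkeeping on the ordering of the factors and on the $\alpha_O$-amplification of the approximation error simply makes explicit details the paper's two-line proof leaves implicit.
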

\begin{proof}
By Lemma \ref{lemma:be_power}, we have the BE of the (approximate) power function of $\rho$. We get the BE of $p_m(\rho)O$ by taking the product of $p_m(\rho)$ and $O$ using Lemma \ref{lemma:be_prod}, yielding the pronounced results.
\end{proof}

\end{document}